\newenvironment{proofof}[1]{\noindent{\bf Proof of #1.}}%
        {\hspace*{\fill}$\Box$\par\vspace{4mm}}
\newcommand{\be}{\begin{enumerate}}
\newcommand{\ee}{\end{enumerate}}
\newcommand{\bd}{\begin{description}}
\newcommand{\ed}{\end{description}}
\newcommand{\bi}{\begin{itemize}}
\newcommand{\ei}{\end{itemize}}
\newtheorem{lemma}{Lemma}[section]
\newtheorem{theorem}{Theorem}
\newtheorem{claim}{Claim}
\newenvironment{proof}{\smallskip \noindent {\bf Proof:}}{\hfill\stopproof}
\def\stopproof{\square}
\def\square{\vbox{\hrule height.2pt\hbox{\vrule width.2pt height5pt \kern5pt
\vrule width.2pt} \hrule height.2pt}}
\newcommand{\ra}{\rightarrow}
\renewcommand{\phi}{\varphi}
\newcommand{\eps}{\epsilon}
\newcommand{\R}{\ensuremath{\mathbb R}}
\newcommand{\E}[1]{\text{\bf E}[#1]}
\newcommand{\Var}[1]{\text{\bf Var}[#1]}
\newcommand{\Exp}[2]{\text{\bf E}_{#1} [#2]}
\renewcommand{\Pr}[1]{\text{\bf Pr}\left [#1\right]}
\newcommand{\U}{\mathcal{U}}
\renewcommand{\v}{\bf{v}}
\newcommand{\M}{\mathcal{M}}
\renewcommand{\R}{\bf{R}}
\newcommand{\I}{\mathcal{I}}
\renewcommand{\P}{\mbox{\sf P}}
\newcommand{\rev}{{\bf Rev}}
\newcommand{\junk}[1]{}
\renewcommand{\u}{{\bf U}}
\newcommand{\ut}[1]{{\bf U}_{#1}}
\renewcommand{\E}[1]{{{\bf E}\left[#1\right]}}
\newcommand{\OPT}{{\rm {OPT}}}
\newcommand{\Ex}{{\bf E}}
\newcommand{\OPTs}{\mbox{\tiny OPT}}
\renewcommand{\P}{{\mathcal P}}
\newcommand{\sm}{\mbox{\tiny sm}}
\newcommand{\lrg}{\mbox{\tiny lg}}
\newcommand{\hg}{\mbox{\tiny hg}}
\newcommand{\poly}{\mbox{\tiny poly}}
\newcommand{\Rev}{{\mathcal R}}
\newcommand{\Prob}{{\mathbb Pr}}
\newcommand{\DSIC}{\mbox{\tiny DSIC}}
\newcommand{\BIC}{\mbox{\tiny BIC}}
\newcommand{\soft}{\mbox{\tiny soft}}
\newcommand{\expt}{\mbox{\tiny exp}}
\newcommand{\rounds}{\mbox{\tiny rounds}}
\newcommand{\copies}{\mbox{\tiny copies}}
\newcommand{\hard}{\mbox{\tiny hard}}
\newcommand{\D}{\mathcal{D}}
\newcommand{\ifshort}[1]{}
\date{}
\begin{document}

\title{Mechanism Design and Risk Aversion}

\author{Anand Bhalgat \ \ \ \ \ \ \ Tanmoy Chakraborty \ \ \ \ \ \ \ Sanjeev Khanna\\
Univ. of Pennsylvania \ \ \ \ \  Harvard University\ \ \ \ \ \ \ \ Univ. of Pennsylvania\\
~~bhalgat@seas.upenn.edu \ \ tanmoy@seas.harvard.edu\ \    sanjeev@cis.upenn.edu}

\begin{titlepage}
\thispagestyle{empty}
\maketitle
\thispagestyle{empty}

\begin{abstract}
We develop efficient algorithms to construct utility maximizing mechanisms in the presence of risk averse players (buyers and sellers) in Bayesian single parameter and multi-parameter settings. We model risk aversion by a concave utility function, and players play strategically to maximize their expected utility. Bayesian mechanism design has usually focused on maximizing expected revenue in a {\em risk neutral} environment, {\em i.e.} where all the buyers and the seller have linear utility, and  no succinct characterization of expected utility maximizing mechanisms is known even for single-parameter multi-unit auctions.

We first consider the problem of designing optimal DSIC (dominant strategy incentive compatible) mechanism for a risk averse seller in
the case of multi-unit auctions, and we give a poly-time computable deterministic sequential posted pricing mechanism (SPM) that for any $\eps > 0$, yields a $(1-1/e-\eps)$-approximation to the expected utility of the seller in an optimal DSIC mechanism. Our result is based on a novel application of a correlation gap bound, along with {\em splitting} and {\em merging} of random variables to redistribute probability mass across buyers. This allows us to reduce our problem to that of checking feasibility of a small number of distinct configurations, each of which corresponds to a covering LP. A feasible solution to the LP gives us the distribution on prices for each buyer to use in a randomized SPM. We get a deterministic SPM by sampling from this randomized SPM. Our techniques extend to the multi-parameter setting with unit demand buyers.

We next consider the setting when buyers as well as the seller are risk averse, and the objective is to maximize the seller's expected utility.  We design a truthful-in-expectation mechanism whose utility is a $\left(\left(1-\frac{1}{e}\right)^2\times\max\left(1-\frac{1}{e}, 1-\frac{1}{\sqrt{2\pi k}}\right)\right)$-approximation to the optimal BIC mechanism under two mild assumptions: (a) ex post individual rationality and (b) no positive transfers. Our mechanism consists of multiple rounds. It considers each buyer in a round with small probability, and when a buyer is considered, it allocates an item to the buyer according to payment functions that are computed using stochastic techniques developed for DSIC mechanisms. Lastly, we consider the problem of revenue maximization for a risk neutral seller in presence of risk averse buyers, and give a poly-time algorithm to design an optimal mechanism for the seller.

We believe that the techniques developed in this work will be useful in handling other stochastic optimization problems with a concave objective function.

\end{abstract}

\end{titlepage}

\newpage

\section{Introduction}\label{intro}

Bayesian mechanism design has usually focused on maximizing expected revenue in a {\em risk neutral} environment, {\em i.e.} where all the buyers and the seller have linear utility, and choose their strategy with the aim of maximizing their {\em expected payoff}. However, since the payoff is a random outcome that depends on other players' valuations and strategies, there is risk associated with it. A standard model \cite{A65,P64} that captures risk aversion assumes that a player has a non-decreasing concave utility function $\u:(-\infty,\infty) \rightarrow (-\infty,\infty)$, so that when the payoff obtained is $R$, the player's utility is $\u(R)$. The player may choose to express various levels of risk aversion by specifying a suitable concave function as his utility, and then his aim becomes to maximize his expected utility. While mechanism design in a risk neutral (linear utility) environment is well understood for multi-unit auctions, many properties tend to break down in the presence of risk aversion (concave utility). In this paper, we develop efficient algorithms to compute mechanisms in the presence of risk averse players. We mainly focus on the prominent single parameter setting of multi-unit auctions.

\vspace*{-0.4cm}
\paragraph{Risk neutral Seller, Risk Averse Buyers:}
Let us first analyze the effect of risk aversion among buyers when the seller is risk neutral, i.e. he wants to maximize his expected revenue. This has been the predominant model for studying risk aversion in mechanism design. Myerson's characterization \cite{M81} of the optimal auction design does not apply when buyers are risk averse. In particular, revenue equivalence \cite{M81} does not hold, and an optimal dominant strategy incentive compatible (DSIC) mechanism may generate less expected revenue than an optimal Bayesian incentive compatible (BIC) mechanism. This is because when buyers are risk averse, the seller can extract greater expected revenue by offering a deterministic payment scheme to the buyers and charging extra for this {\em insurance}. As a specific example, consider buyers with {\em constant absolute risk aversion}, {\em i.e.} exponentially diminishing marginal utility. For any DSIC mechanism (such as Myerson's), one can construct a corresponding BIC mechanism with the same allocation curves: if a buyer $i$ reports valuation $v$, charge a deterministic payment $\alpha(i,v)$ (perhaps zero) if the buyer does not get the item, and $\beta(i,v)$ if he gets the item. For the latter mechanism to be truthful, it must satisfy {\em utility equivalence} \cite{M83,M87,H05}: the net expected utility of the buyer in the latter mechanism must match the former when his value is $v$, for each $v$. However, if his expected payment in the latter mechanism remains equal to the former, then his net expected utility will go up, because concave utility makes the deterministic payment more preferable. Thus the expected payment must be higher in the latter mechanism, implying greater revenue. Note that the direct revelation principle still holds, so there is a BIC mechanism that generates as much expected revenue as any Bayes-Nash equilibrium\footnote{Any equilibrium of the mechanism where buyers play strategically to maximize their utility.}.

Maskin and Riley~\cite{MR84} characterized optimal BIC mechanism in this setting for selling a single item when buyers' value distributions are IID. It is assumed that the buyers' utility functions are known to the mechanism designer (the seller), since truthfulness itself depends on these functions.  Another well-known result states that under some natural assumptions on the buyers' utility functions, first-price auction with reserve (specifically, Bayes-Nash equilibrium of this auction) generates greater revenue than second-price auction with the same reserve \cite{M83,MR84,M87}. These results show that the presence of risk averse buyers should have substantial effect on optimal mechanism design.

\begin{table}
\centering
\begin{minipage}{\textwidth}
\begin{tabular}{| p{3.5cm} || p{5cm} || p{2.8cm} | p{2.9cm} |}
\hline
\multirow{2}{*}{
\hspace{-15pt}
\begin{tabular}{l}
    Type of risk \\
    environment
    \end{tabular}
    }
&
\multicolumn{1}{c||}{Comparison with Optimal DSIC}
&
\multicolumn{2}{c|}{Comparison with Optimal BIC}
\\
\cline{2-4}
&
\multicolumn{1}{c||}{Poly-time DSIC \footnotemark[1]}
&
\multicolumn{1}{c|}{Poly-time TIE \footnotemark[2]}
&
\multicolumn{1}{c|}{Poly-time BIC \footnotemark[2]}
\\
\hline
Risk neutral seller, risk neutral buyers
&
\centering 1 [Myerson '81]
&
\centering
1 \hspace{2cm}
[Myerson '81]
&
1
[Myerson '81]
\\
\hline
Risk neutral seller, risk averse buyers
&
\centering 1 [Myerson '81]
&
\centering  $\gamma(k)$ \footnotemark[3] [Theorem~\ref{thm:bic_risk_neutral}]
&
$1$
[Theorem~\ref{thm:bic_risk_neutral}]
\\
\hline
Risk averse seller, risk neutral buyers
&
\centering $(1-1/e - \eps)$ [Theorem~\ref{thm:multi-unit}]
&
\centering $(1-1/e)^2\gamma(k) - \eps$ 
[Theorem~\ref{thm:non-iid}]\footnotemark[4]
&
 1 [Eso-Futo '99]
\\
\hline
Risk averse seller, risk averse buyers
&

\centering $(1-1/e - \eps)$
 [Theorem~\ref{thm:multi-unit}]
&
\centering $(1-1/e)^2\gamma(k) - \eps$ 
[Theorem~\ref{thm:non-iid}]\footnotemark[4]
&
$(1-1/e)^2\gamma(k) - \eps$ 
[Theorem~\ref{thm:non-iid}]\footnotemark[4]
\\
\hline
\end{tabular}

\footnotetext[1]{{\footnotesize Need to know seller's utility function. Independent of buyers' utility functions as long as they are non-decreasing.}}
\footnotetext[2]{{\footnotesize Need to know both seller and buyers' utility functions.}}
\footnotetext[3]{{\footnotesize $\gamma(k) = (1 - \frac{k^k}{k! e^k})$. $\gamma(1)=1-1/e$, and it approaches $(1 - \frac{1}{\sqrt{2\pi k}})$ for large $k$.}}
\footnotetext[4]{{\footnotesize Improves to $(1-1/e) \gamma(k) - \eps$ for IID buyers (Theorem~\ref{thm:iid}). Further, the factor improves to $(1-1/e-\eps)$ if $k\ge 1/\eps^3$. Here, comparison is made only against optimal BIC satisfying: (i) ex-post IR, and (ii) no positive transfers.}}
\end{minipage}
\label{table:results}
\caption{Summary of approximation results for $k$-unit auctions.}
\end{table}

Our results for a $k$-unit auction are summarized in Table \ref{table:results}. We first
design a poly-time algorithm to compute a  BIC mechanism for a risk neutral seller when buyers are risk averse with publicly known utility functions. This result extends the work of Maskin and Riley \cite{MR84}, in a computational sense, to the general setting of multi-unit auctions with non-identical distributions. Our algorithm is a linear program developed using a general form of Border's inequality. Further, we design a
a poly-time computable randomized truthful-in-expectation mechanism which is a $\gamma(k)$-approximation to the utility optimal BIC mechanism. Here, $\gamma(k) = (1 - \frac{k^k}{k! e^k})$; $\gamma(1)=1-1/e$, and approaches $(1 - \frac{1}{\sqrt{2\pi k}})$ for large $k$. A mechanism is said to be {\em truthful in expectation} (TIE) if truth-telling maximizes each buyer's expected utility, regardless of other buyers' bids (expected utility is measured only over the random bits used by the mechanism). This is a stronger truthfulness requirement than BIC, but weaker than DSIC \footnote{In this paper, we require that in a DSIC mechanism, truth-telling must maximize buyers' utility regardless of the mechanism's random bits. This distinction between DSIC and TIE will play a crucial role below.}, and does not rely on all buyers sharing the same belief about each other. TIE mechanisms can still generate greater expected revenue than DSIC mechanisms. In effect, this result {\em bounds the gap between TIE mechanisms and BIC mechanisms}.

\vspace*{-0.4cm}
\paragraph{Risk Averse Seller, Risk Averse Buyers:}
Next, we consider the scenario where the seller as well as the buyers are risk averse. Risk aversion from a seller's perspective has received relatively less attention, and no work has considered both sides to be risk averse. Eso and Futo \cite{EF99} designed an optimal BIC mechanism for a risk averse seller when {\em buyers are risk neutral}. In this setting, the seller can {\em transfer the entire risk} to risk neutral buyers, and obtain the expected revenue of Myerson's mechanism in every realization. A DSIC mechanism cannot do such a risk transfer, and as such, the gap between optimal BIC and optimal DSIC is unbounded, as illustrated below.
\smallskip
\noindent {\em Example.}\ Consider an instance with two buyers and two items. Each buyer has valuation $1$ for the item w.p. $\epsilon$ and $0$ otherwise. The seller's utility function $\u$ is as follows: $\u(t)=\min\{t,\epsilon\}$. The utility optimal DSIC mechanism sets a price of $1$ to each buyer, and gets utility $\epsilon$ with probability $2\epsilon(1-\epsilon)\le 2\eps$, otherwise its utility is $0$. So the expected utility of an optimal DSIC mechanism is at most $2\eps^2$. If the first buyer is risk neutral, then we can design a BIC mechanism as follows: charge the first buyer $\eps$ in every realization (even when his value is zero), and set a price of $1$ to the second buyer. If the second buyer pays up $1$ (which happens w.p. $\eps$), then pay $1$ dollar to the first buyer. The first buyer never gets the item, and the mechanism is incentive compatible for the first buyer. The seller gets a revenue of $\epsilon$ in every realization, so his expected utility is $\eps$. Therefore the gap is unbounded as $\eps\rightarrow 0$.
\smallskip

However, the extent to which the seller can transfer its risk to the buyers depends upon buyers' utility functions. So the result of ~\cite{EF99} does not hold when buyers are risk averse. We design TIE mechanisms that are constant approximation to a utility-optimal BIC mechanism for a risk averse seller. We restrict our comparison only to BIC mechanisms that do not allow {\em positive transfers} ({\em i.e.} there is no payment from the seller to any buyer in any realization), and are {\em ex post individually rational}. Our approximation factor is $(1-1/e)^2\gamma(k)$ ($(1-1/e) \gamma(k)$ for IID buyers), and approaches $(1-1/e)$ (using a slightly different algorithm) as $k$ becomes large. This implies a constant upper bound on the gap between TIE mechanisms and BIC mechanisms without positive transfer.


\vspace*{-0.4cm}
\paragraph{DSIC mechanisms for Risk Averse Sellers:}
So far, we have designed BIC or TIE mechanisms assuming that the players' utility functions are known to the designer. Though this has been a standard assumption through most of the literature on risk aversion, it is usually the seller who designs the mechanism, and it is not practical to assume that the seller knows all the buyers' utility functions. Unfortunately, it is impossible in general to even check if a mechanism is BIC or even TIE (or identify a Bayes-Nash equilibrium of a given mechanism) without knowing the buyers' utility functions. In contrast, 
{\em DSIC mechanisms are independent of buyers' utility functions} as long as buyers' utility functions are non-decreasing. So we do not need to know the buyers' utility functions to compute a utility-optimal DSIC mechanism for the seller.

Clearly, for a risk-neutral seller, Myerson's mechanism remains an optimal DSIC mechanism even when buyers are risk averse. However, that is not true for a risk averse seller. Further, a virtual value maximization approach does not apply when the seller has at least two units of inventory\footnote{If there is only one item to sale, then at most one buyer pays in any realization, and the seller's utility can be maximized by scaling the bid values using the utility function.}. This is because the contributions of different buyers cannot be counted separately when the seller has non-linear utility. We design a poly-time computable $(1-1/e-\eps)$-approximate DSIC mechanism. Our mechanism is a sequential posted-pricing mechanism (SPM), that decides a price for each buyer, and then makes take-it-or-leave it offers to the buyers in decreasing order of prices, till inventory runs out. Posted pricing mechanisms have been studied extensively for maximizing expected revenue ({\em eg.} \cite{CHMS10,CEGMM10_spm,Y11}).

Sundarajan and Yan \cite{SY10} designed DSIC mechanisms for multi-unit auctions for a risk-averse seller when buyers' valuation functions are {\em regular}.  They focused only on mechanisms that do not depend even on the seller's own utility function. While this may be  an attractive property, the optimal mechanism is no longer well-defined in this case (even neglecting  computational limitations), and it forces the approximation guarantees to be weaker -- $1/8$-approximation for regular distributions ($1/2$ when there is unlimited supply of items), and there is a lower bound instance implying  unbounded gap for general distributions. Non-regular distributions are not uncommon -- any distribution with more than one mode is non-regular. Risk aversion is particularly important in the presence of such high variance distributions, and it is a reasonable approach for a seller to decide upon his own utility function and then use our algorithm to design a mechanism.

As a final note, we are able to extend our techniques to give a constant approximation to an optimal deterministic DSIC mechanism in a multi-parameter setting, namely, when there are multiple distinct items and unit-demand buyers (see Appendix \ref{sec:unit-demand}).

\subsection{Overview of Techniques}
We first design a DSIC mechanism for a risk averse seller, which turns out to be a relatively simpler problem than competing against optimal BIC mechanism. We establish our main probabilistic tools in the process, which later get used for our BIC results. For DSIC mechanisms, we first argue that a $(1-\frac{1}{e})$-approximate (randomized) SPM can be obtained by using the same price distribution as that offered to each buyer in the optimal mechanism, except that the prices are now set independently (see Lemma~\ref{lem:exists}). The argument uses the {\em correlation gap} bound of Agrawal et. al. \cite{ADSY11} for submodular objectives. 
However, this is only an existential result, since getting the SPM requires oracle access to a utility-optimal DSIC mechanism.

Our main technical contribution is to show that it suffices (with same loss factor of $(1-1/e)$) to match the optimal mechanism only in the {\em sum of sale probabilities over all buyers, and not the sale probability for each buyer}, at every price. That is, any two mechanisms that match in this {\em coarse footprint} will have approximately equal expected utility. This property follows from a generalization of the correlation gap bound in \cite{ADSY11}, which not only introduces independence but also redistributes probability mass across variables (see Lemma~\ref{lem:mass-general}). The redistribution is achieved by {\em splitting} and {\em merging} random variables to transform one given mechanism to another that matches the coarse footprint. Using a sophisticated classification of prices, we show that it suffices to match an even coarser footprint containing only constant number of parameters, which define a {\em configuration}. The algorithm finds a feasible solution for each configuration using a {\em covering LP}. Then, it simulates these SPMs, one for each feasible configuration, to choose one with the highest expected utility. 

To design a BIC mechanism when the seller as well as the buyers are risk averse, the techniques developed for DSIC mechanisms can be used to establish that if  allocation and payment functions of the optimal mechanism across buyers are made independent, and inventory constraints removed, the utility will be at least $(1-1/e)\OPT_{\BIC}$. However, to convert such a {\em soft} mechanism into a mechanism that strictly satisfies the inventory constraint is not easy: if we restrict the allocation to buyers with {\em top $k$} payments in a realization of a soft mechanism, a function which is submodular, the resulting mechanism is no longer BIC. Further,  distributions on the revenue from any two allocations in the mechanism are incomparable, so restricting to {\em first $k$} allocations in a realization of a soft mechanism can be arbitrarily bad. To overcome this problem, we develop a mechanism with $L\rightarrow\infty$ {\em rounds}, such that in each round, each buyer is ignored with a high probability of $(1-1/L)$. We show that the revenue from each allocation in this mechanism has identical distribution, and the loss in the expected utility caused by imposing the hard inventory constraint is bounded.


\vspace*{-0.4cm}
\paragraph{Organization:} Rest of the paper is organized as follows.
In Section \ref{prelim}, we provide some background material. In
Sections \ref{sec:multi-unit} and \ref{sec:bayes-nash}, we present our DSIC and BIC mechanisms for risk averse seller, respectively. Our BIC mechanisms for the risk neutral seller and risk averse buyers appear in Appendix \ref{sec:risk-neutral-seller}, and our result for multi-parameter unit-demand setting appears in Appendix \ref{sec:unit-demand}.

\section{Preliminaries}\label{prelim}
\noindent
{\bf Single Parameter Multi-Unit Auctions:} The seller provides a single type of item (or service), of which he has $k$ identical copies. There are $n$ buyers $\{1,2,\ldots n\}$, who have some private value for that service. Let buyer $i$ have a valuation of $v_i$ for the item (and he can consume only one unit), which is drawn, independent of other buyers' valuations, from a known distribution with cdf $F_i(x) = \Pr{v_i\leq x}$. We refer to $\v = (v_1,v_2\ldots v_n)$ as the {\em valuation vector}. 

\smallskip
\noindent
{\bf Revenue, Utility and Optimality:} The {\em revenue} $\rev(\M,\v)$ of a mechanism $\M$, when the realized valuation vector is $\v$, is the sum of payments from each buyers. The expected revenue of a mechanism $\rev(\M)$ is $\Exp{\v}{\rev(\M,\v)}$. In this work, we assume that the seller has a monotonically increasing  concave utility function $\u$, which also satisfies $\u(0)=0$.  The utility of the mechanism is $\u(\rev(\M,\v))$, and the expected utility of the mechanism is $\u(\M)=\Exp{\v}{\u(\rev(\M,\v))}$. Let  $\OPT_{\DSIC}$ and  $\OPT_{\BIC}$ denote the expected utility of a utility-optimal DSIC and BIC mechanisms respectively.  A mechanism is said to be an {\em $\alpha$-approximation} to optimal DSIC (or BIC) mechanism if $\u(\M) \geq \alpha \OPT_{\DSIC}$ (or $\u(\M) \geq \alpha \OPT_{\BIC}$).


\smallskip
\noindent
{\bf DSIC Mechanisms:}
It is well-known ({\em eg.} \cite{M81}) that a DSIC mechanism sets a (possibly randomized) price for buyer $i$ based on $v_{-i}$ but independent of $v_i$, and buyer $i$ gets an item if and only if his valuation exceeds this price. Given this characterization, it is easy to see that as long as a buyer has a non-decreasing utility function, he will report truthfully in a DSIC mechanism, for any realization of valuation vector and random bits of the mechanism. Note that random bits do not help a DSIC mechanism obtain greater utility, since the definition of DSIC implies that truthfulness must hold even if the random bits were revealed prior to submitting bids. So there is a utility-optimal DSIC mechanism which is deterministic.

\smallskip
\noindent
{\bf Buyer's Risk Aversion and BIC Mechanisms:} Each buyer $i$ is associated with a publicly known monotone concave utility function $\U_i$ (defined on the value of item received minus payment) with $\U_i(0)=0$. 
A BIC mechanism is associated with two functions $h(\cdot,\cdot,\cdot)$ and $g(\cdot,\cdot,\cdot)$:
$h(i,j,v)$ is the probability that for valuation $v$, buyer $i$ is allocated an item for a payment of $p_j$, and $g(i,j,v)$ is the probability that he pays $p_j$ and is not allocated an item for valuation $v$. We refer to these two functions as the {\em payment functions} of the mechanism. We note that the allocation and payment of a buyer  is possibly correlated with other buyers' payments, allocations as well as their valuations. Thus, a mechanism is BIC if and only if for each $i,v,v'$, we have
\[\begin{array}{l}
\sum_{j} \left(\U_i(v-p_j)h(i,j,v) + \U_i(-p_j)g(i,j,v)\right)\ge \sum_{j} \left(\U_i(v-p_j)h(i,j,v')+ \U_i(-p_j)g(i,j,v')\right)
\end{array}\]
We note, given any buyer $i$, we allow his payment to be randomized rather than a fixed value as a function of buyer $i$'s valuation and whether he gets an item. This strictly gives more power to a risk averse seller maximizing his expected utility. This is in contrast to the setting considered by Maskin and Riley~\cite{MR84}, where it suffices to assume that buyer $i$'s payment for valuation $v$ is a fixed value as a function of $v$ and whether he gets the item.

We define a {\em soft randomized sequential mechanism} as a mechanism without inventory limit that arranges buyers in an arbitrary order, asks each buyer for his valuation one-by-one. If the buyer $i$'s reported valuation is $v$, the mechanism allocates an item to him independently w.p. $\sum_j h(i,j,v)$. If he is allocated an item, then the seller charges him $p_j$ w.p. $\frac{h(i,j,v)}{\sum_l h(i,l,v)}$. When he is not allocated an item, he pays $p_j$ w.p. $\frac{g(i,j,v)}{\sum_l g(i,l,v)}$. {\em Randomized sequential mechanisms} are same as {\em soft randomized sequential mechanisms} with an exception that they stop after running out of inventory. We note that if a {\em soft randomized sequential mechanism} is BIC, then the corresponding {\em randomized sequential mechanism} is also BIC.

\smallskip
\noindent
{\bf Stochastic Dominance:} Given two non-negative distributions $\D_1$ and $\D_2$, we say $\D_1$ {\em stochastically dominates} $\D_2$, denoted by $\D_1\succeq\D_2$, if $\forall a\ge 0$, $\Prob_{X\AC \D_1}(X_1\ge a) \ge \Prob_{X\AC \D_2}(X_1\ge a)$.
We note an important property of concave functions in the following lemma.
\begin{lemma}\label{lem:dominance}
Given any non-decreasing concave function $\u$, and three independent non-negative random variables $X, Y_1, Y_2$, let $\D_1$ and $\D_2$ be the distributions of $Y_1$ and $Y_2$ respectively. If $\D_1\succeq\D_2$, then $\Ex_{X, Y_1\AC\D_1}\left[\u(X+Y)-\u(Y)\right]\le \Ex_{X, Y_1\AC\D_2}\left[\u(X+Y)-\u(Y)\right]$.
\end{lemma}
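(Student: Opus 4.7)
The plan is to reduce the claim to two standard facts: that a concave function has non-increasing increments, and that stochastic dominance interacts with monotone test functions in the expected way.

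First I would condition on the realization of $X$, since $X$ is independent of both $Y_1$ and $Y_2$. It suffices to show, for every fixed $x \ge 0$, that the function
\[
f_x(y) := \u(x+y) - \u(y)
\]
satisfies $\Ex_{Y_1 \sim \D_1}[f_x(Y_1)] \le \Ex_{Y_2 \sim \D_2}[f_x(Y_2)]$; integrating over the distribution of $X$ then yields the stated inequality by Fubini.

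Next I would verify that $f_x$ is non-increasing in $y$ on $[0,\infty)$. This is the standard ``decreasing differences'' property of concave functions: for $0 \le y_1 \le y_2$ and $x \ge 0$, write $y_2 = \lambda(y_1) + (1-\lambda)(y_2 + x)$ and $y_1 + x = \lambda'(y_1) + (1-\lambda')(y_2 + x)$ for appropriate convex combinations, then apply concavity of $\u$ to conclude $\u(y_1+x) - \u(y_1) \ge \u(y_2+x) - \u(y_2)$. (Equivalently, if $\u$ were differentiable one could just note $\u'$ is non-increasing and integrate.)

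Finally I would invoke the standard characterization of stochastic dominance: $\D_1 \succeq \D_2$ iff $\Ex_{Y \sim \D_1}[g(Y)] \ge \Ex_{Y \sim \D_2}[g(Y)]$ for every non-decreasing $g$ for which both expectations exist, or equivalently $\Ex_{\D_1}[g] \le \Ex_{\D_2}[g]$ for non-increasing $g$. Applied to $g = f_x$, which we just showed is non-increasing, this gives the desired per-$x$ inequality, and averaging over $X$ finishes the proof. I do not anticipate any real obstacle: once the direction of monotonicity is pinned down (concavity gives smaller marginal utility at larger baselines, so dominating $Y$ distributions yield smaller expected increments), the argument is essentially two lines.
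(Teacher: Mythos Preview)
Your argument is correct and is exactly the standard route: fix $x$, use concavity to show $y \mapsto \u(x+y)-\u(y)$ is non-increasing, then apply the monotone-function characterization of first-order stochastic dominance. The paper itself does not supply a proof of this lemma; it is stated in the preliminaries as a known property of concave functions, so there is nothing further to compare against.
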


\section{Risk Averse Seller: DSIC Mechanisms for Multi-Unit Auctions}\label{sec:multi-unit}
In this section, we construct DSIC mechanisms for a risk averse seller. 
The following theorem summarizes our result.
\begin{theorem}\label{thm:multi-unit}
For multi-unit auctions, there is a poly-time computable deterministic SPM with expected utility at least $(1-\frac{1}{e}-\epsilon) \OPT$, for any $\eps>0$, where $\OPT$ is the expected utility of an optimal DSIC mechanism.
\end{theorem}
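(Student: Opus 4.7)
My starting point is the standard characterization of an optimal DSIC mechanism: on input $v_{-i}$ the seller posts a (possibly randomized) price for buyer $i$ and sells iff $v_i$ exceeds that price. This induces, for each buyer $i$, a marginal distribution over offered prices, and the seller's revenue has the form $\sum_i p_i \cdot \mathbf{1}[\text{buyer } i \text{ buys}]$. Because $\u$ is monotone concave and each summand is a product of an independent valuation indicator with a price, the seller's utility, viewed as a functional of the collection of sale events, is monotone submodular. Plugging into the correlation-gap bound of Agrawal et al.~\cite{ADSY11} (the basis of the existential lemma sketched in the overview), the randomized SPM that offers each buyer $i$ an \emph{independent} price drawn from his marginal under $\OPT$ achieves expected utility at least $(1-1/e)\OPT$.

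This is only existence, since we do not have access to the per-buyer price marginals of $\OPT$. The key technical step is the strengthened correlation gap indicated in the overview: any two independent-price SPMs that agree on the \emph{coarse footprint} $\phi(p):=\sum_i \Pr{p_i=p,\,v_i\geq p}$ at every price $p$ have expected utilities within a $(1-1/e)$ factor of one another. I would prove this by deforming one such SPM into another of identical coarse footprint through a sequence of \emph{split and merge} operations on the per-buyer price distributions, using Lemma~\ref{lem:dominance} together with submodularity to bound the effect of each deformation. Thus any SPM that matches the coarse footprint of $\OPT$ is already a $(1-1/e)$-approximation.

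To turn this into a polynomial-time algorithm, I would first discretize prices onto a $(1+\eps)$-geometric grid (losing $O(\eps)$) and then classify the discretized prices into a constant number of classes --- for instance distinguishing ``heavy'' prices where $\phi(p)$ is order $k$ from ``light'' prices where $\phi(p)\ll 1$, with sub-buckets inside each class. Using the robustness above, the coarse footprint can be summarized, up to a further $\eps$-loss, by a constant number of parameters, which I will call a \emph{configuration}, so only polynomially many configurations remain. For each configuration I would solve a covering LP whose variables $x_{i,p}$ give the probability that buyer $i$ is offered price $p$, subject to $\sum_p x_{i,p}\leq 1$ for each $i$ and to aggregate constraints $\sum_i x_{i,p}(1-F_i(p))$ matching the configuration's per-class targets; any feasible solution yields a randomized SPM via independent draws. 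The algorithm enumerates configurations, tests LP feasibility, and among feasible ones selects the SPM of highest expected utility (estimable to within $\eps$ either by Monte Carlo on the discretized support or by direct convolution). One configuration matches the coarse footprint of $\OPT$ up to rounding, so the chosen SPM attains $\u(\M)\geq (1-1/e-\eps)\OPT$. Derandomization is the last step: the randomized SPM is a convex combination over deterministic price vectors, so some deterministic vector achieves at least the expected utility, and can be found by sampling and keeping the best.

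The main obstacle is the strengthened correlation gap --- showing that merely matching the coarse footprint suffices. The split/merge argument has to handle an arbitrary monotone concave $\u$, arbitrary price supports, and mass transfers across buyers whose valuation distributions $F_i$ differ. Designing the price classification so that only constantly many parameters define a configuration, without forfeiting the approximation ratio, is the other delicate piece; the correlation-gap step and the LP/derandomization steps are comparatively routine applications of standard machinery once these two ingredients are in place.
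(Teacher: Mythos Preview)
Your high-level skeleton matches the paper's proof exactly: an existential $(1-1/e)$-approximate SPM via correlation gap on the submodular function $\ut{k}$; a strengthened statement (the paper's Lemma~\ref{lem:mass-general}) that matching only the \emph{total} sale probability at each price suffices, proved by split/merge; then configuration enumeration, a covering LP, and derandomization by sampling. That part is correct and essentially identical to the paper.

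The genuine gap is precisely where you flag it yourself---the price classification---and your specific proposal would not work. A $(1+\eps)$-geometric grid in price space leaves $\Theta(\log(p_{\max}/p_{\min})/\eps)$ distinct prices, which is polynomial in the input but not $\mathrm{poly}(1/\eps)$; guessing a sale-probability target $q_j$ for each such price gives $2^{\omega(\mathrm{poly}(1/\eps))}$ configurations. Your heavy/light split by the magnitude of $\phi(p)$ does not collapse this either, because many prices can individually have tiny $\phi(p)$ yet jointly matter. The paper's classification is different in kind: prices with $\u(p)\ge \OPT/\eps$ are \emph{huge} (their total sale probability is at most $\eps$ by Markov, so their contribution linearizes as $\sum \u(p_j)q_j$); a threshold $p^*$ is chosen so that the non-huge prices above it have total sale probability just exceeding $1/\eps^4$, and everything below $p^*$ is \emph{small} (their sum concentrates, so only the expected revenue from small prices matters). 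The remaining \emph{large} prices have $\u(p)<\OPT/\eps$ and are then discretized in \emph{utility} space to multiples of $\eps^5\OPT$, yielding only $O(1/\eps^6)$ distinct large prices. This three-way split plus utility-space discretization is exactly what forces the configuration to have $\mathrm{poly}(1/\eps)$ parameters; it is the content of Lemma~\ref{lem:new} in the paper, and it does not fall out of a geometric price grid or a heavy/light dichotomy on $\phi$.

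A smaller point: the split step is not proved via Lemma~\ref{lem:dominance} (stochastic dominance); it is another application of the correlation gap (Lemma~\ref{lem:correlation-gap}), since splitting just replaces correlated variables by independent ones with the same marginals. Merge is where submodularity is used directly.
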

We first prove the existence of an SPM that achieves a $(1-1/e)$-approximation to the optimal expected utility (Section~\ref{sec:existence}), however this result is not constructible and does not lead to an efficient implementation. We then identify a set of sufficient properties of $(1-1/e-\eps)$-approximate mechanisms that enables us to construct a poly-time algorithm (Section~\ref{sec:algorithm}).

\subsection{Existence of a $(1-1/e)$-approximate SPM}\label{sec:existence}
Given a set $S=\{x_1,x_2\ldots x_n\}$ of non-negative real number, let $\max_i \{ x_1,x_2\ldots x_n\}$ denote the $i^{th}$ largest value in the set, and let it be zero if $i>n$. Let $\ut{k}: \R^n \ra \R$ be the function defined as $\ut{k}(S)=\ut{k}(x_1,x_2\ldots x_n) = \u(\sum_{i=1}^k \max_i \{x_1,x_2\ldots x_n\})$, {\em i.e.}  utility of the sum of the $k$ largest arguments.
Let $\u(S)$ denote $\ut{\infty}(S) = \ut{|S|}(S)$, the utility of the sum of all variables. We note an important property of $\ut{k}$ in the following lemma; its proof is deferred to the appendix.
\begin{lemma}\label{lem:submodular}
For any concave utility function $\u$, and any $k$ and $n$, the function $\ut{k}:\R^n\ra \R$ is a symmetric, monotone and submodular.
\end{lemma}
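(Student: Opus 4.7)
I would prove the three properties in order, with symmetry being immediate, monotonicity following from composition, and submodularity being the substantive claim.

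Symmetry is immediate: both the sum of the top $k$ values and the application of $\u$ depend only on the multiset $\{x_1,\ldots,x_n\}$, not on the order of coordinates. For monotonicity, I would reason in two layers. Let $T_k(S)=\sum_{i=1}^{k}\max_i S$ be the sum of the top $k$ values (treating $S$ as a multiset extended by zeros so that it always has at least $k$ entries). Replacing any coordinate $x_j$ by a larger value $x_j'\ge x_j$ can only weakly increase $T_k$, since the top-$k$ selection can, in the worst case, still include $x_j'$ in place of $x_j$. Composing with the non-decreasing $\u$ preserves this.

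The core work is submodularity. My plan is to first establish that $T_k$ itself is submodular on multisets, and then invoke a general composition lemma: if $g$ is non-negative, monotone, and submodular, and $\phi:\R\to\R$ is non-decreasing and concave, then $\phi\circ g$ is submodular. For the first step, I would show that the marginal gain of inserting an element $x$ into a multiset $S$ equals $\max(0,\, x - \theta_k(S))$, where $\theta_k(S)$ denotes the $k$-th largest value of $S$ (taken to be $0$ if $|S|<k$). Indeed, if $x\le \theta_k(S)$ then $x$ displaces nothing from the top $k$, and otherwise $x$ enters the top $k$ and evicts the previous $k$-th largest element, giving gain $x-\theta_k(S)$. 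Since $\theta_k$ is monotone in $S$ (adding more entries can only raise or preserve the $k$-th largest), the marginal gain $\max(0,x-\theta_k(S))$ is non-increasing in $S$, which is exactly diminishing returns.

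For the composition step, I would argue as follows. Fix $S\subseteq T$ (viewing the arguments as indexed sets) and an element $x\notin T$. Write $a=T_k(S)$, $a'=T_k(S\cup\{x\})$, $b=T_k(T)$, $b'=T_k(T\cup\{x\})$. Monotonicity of $T_k$ gives $a\le b$, and submodularity of $T_k$ gives $a'-a\ge b'-b\ge 0$, so $a'\ge a+(b'-b)$. Concavity of $\u$ with $a\le b$ and common increment $b'-b$ implies the diminishing-returns inequality $\u(a+(b'-b))-\u(a)\ge \u(b')-\u(b)$, and monotonicity of $\u$ together with $a'\ge a+(b'-b)$ yields $\u(a')\ge \u(a+(b'-b))$. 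Chaining these gives $\u(a')-\u(a)\ge \u(b')-\u(b)$, which is exactly submodularity of $\ut{k}=\u\circ T_k$.

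The only step that requires any care is the identification of the marginal $\max(0,x-\theta_k(S))$; the composition argument is standard but must be applied carefully because $\u$ is concave on all of $\R$ and we need non-negativity of the arguments to exploit monotonicity properly, which is why I would state the result for non-negative inputs (consistent with how $\ut{k}$ is used on revenues elsewhere in the paper).
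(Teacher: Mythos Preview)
Your proof follows the same two–step plan as the paper: first show that the top-$k$ sum $T_k$ (the paper writes $I_k$) is itself submodular via an explicit marginal-gain formula, and then compose with the concave non-decreasing $\u$. Your composition step is essentially identical to the paper's use of the inequality $\u(a+b)-\u(b)\le \u(a'+b')-\u(b')$ for $a\le a'$, $b\ge b'$.

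The one point worth tightening: the paper proves \emph{lattice} submodularity on $\R^n$, i.e.\ $\ut{k}(x\vee y)+\ut{k}(x\wedge y)\le \ut{k}(x)+\ut{k}(y)$, which it reduces to showing that for $x\ge y$ coordinate-wise and $t\ge x_i\ge y_i$, raising coordinate $i$ from its current value to $t$ yields a smaller gain at $x$ than at $y$. Their marginal formula is $I_k(x\wedge_i t)-I_k(x)=\max\{0,\,t-\max\{x_i,\max_k(x)\}\}$, which visibly decreases as $x$ grows. You instead frame the argument as \emph{inserting a new element} into a multiset and write the marginal as $\max(0,x-\theta_k(S))$; this is exactly the special case $x_i=y_i=0$ of the paper's formula. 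Since the correlation-gap bound (Lemma~\ref{lem:correlation-gap}) that this lemma feeds is stated for lattice-submodular functions on $\R^n$, you want the stronger form. The fix is immediate: replace ``insert $x$'' by ``raise coordinate $i$ from $x_i$ to $t$'', obtaining marginal $\max\{0,\,t-\max\{x_i,\theta_k(x)\}\}$; monotonicity of $\max\{x_i,\theta_k(x)\}$ in $x$ gives diminishing returns, and your composition step then carries over verbatim.
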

We shall use the following {\em correlation gap} bound established by Agrawal et. al. \cite{ADSY11} for monotone submodular functions.
\begin{lemma}\cite{ADSY11}\label{lem:correlation-gap}
Given $n$ non-negative random variables $X_1, X_2, ..., X_n$ with distributions $D_1, D_2, ..., D_n$, let $D$ be an arbitrary joint distribution over these $n$ random variables such that the marginal distribution for each $X_i$ remains unchanged. Let $D_{ind}$ be the joint distribution where each $X_i$ is sampled from $D_i$ independent of $X_{-i}$. Then for any monotone submodular function $f$ over $X_1, X_2, ..., X_n$, we have  $\frac{{\bf E}_{X\sim D_{ind}} f(X)}{{\bf E}_{X\sim D} f(X)} \ge 1-1/e$.
\end{lemma}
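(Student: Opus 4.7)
The plan is to reduce the inequality to the classical correlation-gap bound for monotone submodular set functions and then appeal to its partition-matroid extension. As a first step, by a standard truncation-and-discretization argument---using monotonicity of $f$, truncating at a large threshold, and invoking bounded convergence as the discretization refines---I may assume that each $X_i$ has finite support $V_i \subset \mathbb{R}_{\geq 0}$.

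Next, I encode the problem as a set-function problem on a suitable ground set. Let $\Omega = \bigsqcup_{i=1}^n (V_i \times \{i\})$ be partitioned into blocks $\Omega_i = V_i \times \{i\}$, and for $S \subseteq \Omega$ define $x_i(S) = \max\{v : (v,i) \in S\}$ (with $x_i(S) = 0$ if $S \cap \Omega_i = \emptyset$). Set $g(S) = f(x_1(S), \ldots, x_n(S))$. I would verify that $g$ is monotone submodular on $2^\Omega$: monotonicity is inherited from $f$, and submodularity reduces to the lattice-submodularity of $f$ on $\mathbb{R}^n_{\geq 0}$ together with the observation that the marginal gain $x_i(S \cup \{(v,i)\}) - x_i(S) = \max(0, v - x_i(S))$ is non-increasing in $S$. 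Under this encoding, both $D$ and $D_{ind}$ correspond to distributions on subsets of $\Omega$ that include exactly one element per block, with matching marginals $\Pr{(v,i) \in S} = \Pr{X_i = v}$.

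The claim thus reduces to a \emph{partition-matroid} correlation gap: for a monotone submodular $g$ on $2^\Omega$ and any two transversal distributions with matching marginals, the product-of-categoricals distribution retains at least a $(1-1/e)$ fraction of the arbitrary distribution's expected value. The main obstacle is proving this partition-matroid version, which is subtler than the unconstrained (product-of-Bernoullis) correlation gap because the feasible polytope is no longer the full hypercube, and so the natural ``independent'' sample is a product of categoricals rather than of independent Bernoullis. I would handle it via the concave-closure framework: with $G(y) = \Exp{S \sim D_{ind}(y)}{g(S)}$ and $g^+(y)$ equal to the supremum of $\Exp{S \sim D}{g(S)}$ over all transversal distributions $D$ with marginals $y$, a block-wise continuous-greedy/pipage-rounding argument (as developed by Calinescu--Chekuri--P\'al--Vondr\'ak for matroid constraints and adapted in \cite{ADSY11} to this setting) yields $G(y) \geq (1-1/e)\, g^+(y)$ at every feasible $y$. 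Evaluating at $y_{(v,i)} = \Pr{X_i = v}$ and noting $g^+(y) \geq \Exp{S \sim D}{g(S)}$ completes the argument.
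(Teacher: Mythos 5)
The paper does not supply a proof of this lemma: it is imported verbatim from Agrawal--Ding--Saberi--Ye \cite{ADSY11}, so there is no in-paper argument to compare your sketch against. With that caveat, your reconstruction is a faithful unfolding of how the cited result is actually established, rather than an independent derivation. Your encoding---ground set $\Omega=\bigsqcup_i (V_i\times\{i\})$, $x_i(S)=\max\{v:(v,i)\in S\}$, $g(S)=f(x_1(S),\dots,x_n(S))$---is correct, and the submodularity of $g$ does follow by the case split you indicate: for $S\subseteq T$, $e=(v,i)\notin T$ with $v>x_i(T)\ge x_i(S)$, write $g(S\cup e)-g(S)=[f(x_{-i}(S),v)-f(x_{-i}(S),x_i(T))]+[f(x_{-i}(S),x_i(T))-f(x_{-i}(S),x_i(S))]$; the second bracket is nonnegative by monotonicity and the first dominates $g(T\cup e)-g(T)$ by the lattice-submodularity inequality $f(u,v)+f(w,b)\ge f(w,v)+f(u,b)$ for $u\le w$, $b\le v$. (Implicitly this requires reading ``monotone submodular over $X_1,\dots,X_n$'' as lattice-submodular on $\R_{\ge 0}^n$, which is the reading the paper itself uses when it proves $\ut{k}$ is submodular in Lemma~\ref{lem:submodular}.) The one place your sketch is not self-contained is the step you flag yourself: the partition-matroid correlation-gap bound. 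That bound is a theorem of \cite{ADSY11}, so as written your argument reduces the cited lemma to a different statement from the same source---defensible as exposition, but not a new proof. One concrete issue worth naming if you tried to make the last step elementary: the \emph{hypercube} correlation gap compares against Bernoulli (product-of-indicators) rounding, whereas the lemma's $D_{ind}$ is product-of-\emph{categoricals} rounding (exactly one element per block), and for the $\max$-encoding the expectation under these two samplers can differ in either direction; so you genuinely need the matroid-constrained version, or else you can sidestep it by first splitting each $X_i$ into $L\to\infty$ near-degenerate Bernoullis and applying the hypercube bound in the limit---precisely the split/merge machinery the paper develops in Lemmas~\ref{lem:merge}--\ref{lem:split} for its later, stronger redistribution statement (Lemma~\ref{lem:mass-general}).
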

Let $\M_{\OPTs}$ be a utility optimal DSIC mechanism for a $k$-unit auction. It follows that in $\M_{\OPTs}$, every buyer $i$ is offered a (random) price $P_i$ as a function of other buyers' bids; he receives an item and pays the offered price if and only if his value exceeds the price. The following lemma uses the correlation gap to establish the existence of an SPM which is a $(1-1/e)$-approximation to $\M_{\OPTs}$.
\begin{lemma}\label{lem:exists}
Suppose that $\M_{\OPTs}$ offers a (random) price $P_i$ to each buyer $i$ (the prices $P_i,\ 1\le i\le n,$ may be correlated). Let $\M'$ be a randomized SPM that selects an independent random price $P'_i$ for each buyer, such that $P'_i$ and $P_i$ have the same marginal distribution, and offers items to buyers in decreasing order of prices, until the items run out. Then $\u(\M') \ge (1-1/e) \OPT$.
\end{lemma}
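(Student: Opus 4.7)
The plan is to write the expected utility of both mechanisms as an expectation of the function $\ut{k}$ applied to appropriately defined random payment variables, and then apply the correlation gap bound (Lemma~\ref{lem:correlation-gap}) to the transition from the correlated joint distribution of the optimal mechanism's prices to the independent prices of $\M'$.

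First I introduce the payment variables. For $\M_{\OPTs}$, set $X_i = P_i \cdot \mathbf{1}[v_i \ge P_i]$, so $X_i$ is the amount buyer $i$ pays. Since $\M_{\OPTs}$ is DSIC and respects the $k$-unit inventory, at most $k$ of the $X_i$ are nonzero in any realization, and the total revenue equals $\sum_i X_i$, which is also the sum of the top $k$ values among the $X_i$. Hence $\u(\M_{\OPTs}) = \Ex[\ut{k}(X_1,\ldots,X_n)]$. For $\M'$, set $Y_i = P'_i \cdot \mathbf{1}[v_i \ge P'_i]$. Because $\M'$ considers buyers in decreasing order of $P'_i$ and only collects $P'_i$ when $v_i \ge P'_i$, the set of accepted buyers is precisely the top-$k$ among $\{i : Y_i > 0\}$ ordered by $P'_i$; among positive $Y_i$ values, ranking by $Y_i$ coincides with ranking by $P'_i$, so the total revenue of $\M'$ is the sum of the $k$ largest $Y_i$. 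Therefore $\u(\M') = \Ex[\ut{k}(Y_1,\ldots,Y_n)]$.

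Next I verify the two hypotheses needed to apply the correlation gap bound to $\ut{k}$. (i) \emph{Matching marginals:} in a DSIC mechanism the offered price $P_i$ is a function of $v_{-i}$ only and is independent of $v_i$; by construction $P'_i$ is independent of everything else; and $P_i,P'_i$ share the same marginal distribution by hypothesis. Since $v_i$ has distribution $F_i$ in both cases, the pairs $(P_i,v_i)$ and $(P'_i,v_i)$ have the same joint distribution, and hence $X_i$ and $Y_i$ have identical marginal laws. (ii) \emph{Independence of the $Y_i$:} the $v_j$ are mutually independent across buyers, and the $P'_j$ are drawn mutually independently and independently of all $v_\ell$, so $Y_1,\ldots,Y_n$ are mutually independent.

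Combining these, Lemma~\ref{lem:submodular} tells us that $\ut{k}$ is a non-negative, monotone, submodular function of its $n$ arguments, so Lemma~\ref{lem:correlation-gap} applies with the joint distribution of $(X_1,\ldots,X_n)$ playing the role of $D$ and the product distribution of $(Y_1,\ldots,Y_n)$ playing the role of $D_{\mathrm{ind}}$. This gives
\[
\u(\M') \;=\; \Ex[\ut{k}(Y_1,\ldots,Y_n)] \;\ge\; \bigl(1-\tfrac{1}{e}\bigr)\,\Ex[\ut{k}(X_1,\ldots,X_n)] \;=\; \bigl(1-\tfrac{1}{e}\bigr)\,\u(\M_{\OPTs}),
\]
which is the claimed bound. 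The only subtle step, and the one I expect requires the most care in the write-up, is establishing that the revenue of the SPM $\M'$ coincides exactly with the sum of the top $k$ values of $Y_1,\ldots,Y_n$; this is what lets us fit $\u(\M')$ inside the symmetric submodular function $\ut{k}$ and invoke the correlation gap without losing any structure. Everything else is bookkeeping that leverages the DSIC assumption (independence of $P_i$ from $v_i$) to align the marginal distributions of $X_i$ and $Y_i$.
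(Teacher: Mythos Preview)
Your proof is correct and follows essentially the same approach as the paper: define per-buyer payment variables, express both mechanisms' utilities as $\E{\ut{k}(\cdot)}$, and apply the correlation gap bound via the submodularity of $\ut{k}$. If anything, you are more explicit than the paper in verifying that $X_i$ and $Y_i$ share the same marginal (using that $P_i$ is independent of $v_i$ in a DSIC mechanism) and that the $Y_i$ are mutually independent.
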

\begin{proof}
Let $R_i$ be the payment obtained in $\M_{\OPTs}$ from buyer $i$. Note that $P_i$ and $R_i$ are correlated random variables that depend on the realization of the valuations, and $R_i=P_i$ if $v_i > P_i$, else $R_i=0$. As at most $k$ buyers can make a positive payment in any realization of $\M_{\OPTs}$, we have
\begin{align*}
\u(\M_{\OPTs})=\E{\u(R_1,R_2\ldots R_n)} = \E{\ut{k}(R_1,R_2\ldots R_n)}
\end{align*}
Let $R'_i=P'_i$ if $v_i>P'_i$, else $R'_i=0$. Since the SPM $\M'$ orders buyers in decreasing order of offer prices, so it collects the $k$ largest acceptable prices as payment. We have $\u(\M') = \E{\ut{k}(R'_1,R'_2\ldots R'_n)}$.
Note that $R_i$ and $R'_i$ have the same distribution for each $i$, except that $R_1,R_2\ldots R_n$ are correlated variables, while $R'_1,R'_2\ldots R'_n$ are mutually independent. Using the submodularity of $\ut{k}$ (Lemma \ref{lem:submodular}) and the correlation gap (Lemma  \ref{lem:correlation-gap}), we get
\begin{align*}
\u(\M') = \E{\ut{k}(R'_1,R'_2\ldots R'_n)} \geq (1-1/e) \E{\ut{k}(R_1,R_2\ldots R_n)} = (1-1/e)\u(\M_{\OPTs})
\end{align*}
This completes the proof.
\end{proof}

Correlation gap was used by Yan \cite{Y11} to show the same approximation ratio for an SPM to expected revenue maximization.
However, for revenue maximization, it suffices for the SPM to match a revenue-optimal mechanism only in the probability of sale to each buyer, which solely determines the buyer's contribution to expected revenue. In contrast, for the utility maximization result of Lemma \ref{lem:exists}, {\em the SPM should match a utility-optimal mechanism in the entire distribution of prices to each buyer}. Also, the SPM for revenue maximization is poly-time computable, since a revenue-optimal mechanism is known (Myerson's mechanism).
To the best of our knowledge, the SPM designed in Lemma \ref{lem:exists} is not poly-time computable: constructing it would need an oracle access to a utility-optimal mechanism. Further, as we have to match $\M_{\OPTs}$ for each buyer-price pair, guessing the entire price distribution would require time exponential in the number of buyers.

\subsection{Algorithm to compute a $(1-1/e - \eps)$-approximate SPM}\label{sec:algorithm}
We now present a polynomial time algorithm to compute an SPM whose approximation guarantee essentially matches the existential result above. 
To simplify the exposition of our algorithm, we assume that prices offered by any truthful mechanism belong to some known set $\P=\{p_1, p_2, \ldots \}$ whose size is polynomial in $n$.
Let $\pi_{ij}$ be the probability that buyer $i$ is offered price $p_j$ in $\M_{\OPTs}$.
We divide the prices in $\P$ into $3$ classes, {\em small}, {\em large} and {\em huge}. Fix some $1>\eps>0$. Let $\P_{\hg}$ be the set of {\em huge prices} defined as $p_j\ge \u^{-1}(\OPT/\eps)$. The distinction between small and large prices depend more intricately on the optimal mechanism. Let $p^*$ be the largest price such that $\sum_{\u^{-1}(\OPT/\eps) > p_j\ge p^*} q_j \ge 1/\eps^4$, {\em i.e.} the threshold where the total sale probability of all large prices add up to at least $1/\eps^4$. If such a threshold does not exist, then let $p^*=0$ (note that $p*$ must be zero if $k<1/\eps^4$). Let $\P_{\sm}$ be all prices less than $p^*$, so that $\P_{\lrg}=\{p_j \mid \u^{-1}(\OPT/\eps) > p_j\ge p^*\}$.

In the following lemma, we present a key set of sufficient conditions for a $(1-1/e-\eps)$-approximate mechanism which forms the basis of our algorithm; we defer its proof to later in the section.
\begin{lemma}\label{lem:mass-variant}
Consider any SPM $\M'$, that offers price $p_j$ to buyer $i$ w.p. $\pi'_{ij}$, such that (a) $\sum_{i,p_j\in \P_{\sm}} p_j \pi'_{ij} (1-F_{i}(p_j)) = \sum_{p_j\in \P_{\sm}} p_j q_j$, (b) for each $p_j\in\P_{\lrg}$, $\sum_{i} \pi'_{ij} (1-F_{i}(p_j))=q_j$, (c) $\sum_{i,p_j \in \P_{\hg}} \u(p_j) \pi'_{ij} (1-F_{i}(p_j)) =\sum_{p_j\in \P_{\hg}} \u(p_j) q_j$ and (d) $\sum_{i,p_j\in \P} \pi'_{ij} (1-F_{i}(p_j)) \le k$. Then we have $\u(\M')\ge (1-\frac{1}{e}-O(\epsilon))\OPT$.
\end{lemma}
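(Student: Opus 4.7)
The proof strategy is to introduce an auxiliary SPM $\tilde{\M}$ that matches $\M_{\OPTs}$'s per-price total sale probability $q_j$ at every price, use the generalized correlation gap (Lemma~\ref{lem:mass-general}) to obtain $\u(\tilde{\M})\ge(1-1/e)\OPT$, and then show $\u(\M')\ge\u(\tilde{\M})-O(\eps)\OPT$ by a class-by-class comparison that exploits the distinguishing structural property of each class. Concretely, I would define $\tilde{\M}$ as the independent SPM that agrees with $\M'$ on large prices (condition (b) already matches $\M_{\OPTs}$'s per-price totals) and whose small- and huge-price probabilities $\tilde{\pi}_{ij}$ are chosen so that $\sum_i\tilde{\pi}_{ij}(1-F_i(p_j))=q_j$ for every $p_j\in\P_{\sm}\cup\P_{\hg}$. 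Applying Lemma~\ref{lem:mass-general} to the symmetric monotone submodular $\ut{k}$ (Lemma~\ref{lem:submodular}) then yields $\u(\tilde{\M})\ge(1-1/e)\OPT$.

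For the huge class, the crucial fact is that $\u(p_j)\ge\OPT/\eps$ for every $p_j\in\P_{\hg}$, so Markov's inequality applied to $\E{\u(R_{\hg})}\le\OPT$ in $\M_{\OPTs}$ gives $\Pr{\text{any huge sale in }\M_{\OPTs}}\le\eps$; the analogous bound for $\M'$ follows from condition (c) combined with $\u(p_j)\ge\OPT/\eps$. In all but an $O(\eps)$-fraction of realizations, at most one huge sale occurs in each mechanism, and on single-sale events the expected utility contribution is exactly $\sum_{p_j\in\P_{\hg}}\u(p_j)q_j$—matched by condition (c) for $\M'$ and by construction for $\tilde{\M}$. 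Multi-huge-sale realizations contribute at most $O(\eps)\OPT$ by subadditivity of $\u$ (valid because $\u$ is concave with $\u(0)=0$), absorbing the residual difference.

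For the small class, both mechanisms have identical expected small-price revenue $\E{R_{\sm}}$, by condition (a) and the construction of $\tilde{\M}$. Every small sale contributes at most $p^*$, and by the defining property of $p^*$ we have $\E{R_{\lrg}}\ge p^*/\eps^4$; a Bernstein-type concentration bound on the independent sums $R'_{\sm}$ and $\tilde{R}_{\sm}$ shows both concentrate tightly around $\E{R_{\sm}}$, and a local Lipschitz estimate on $\u$ at the revenue scale set by the large-price bulk bounds the utility difference attributable to small prices by $O(\eps)\OPT$. Regimes in which $\E{R_{\sm}}/p^*$ is too small for tight concentration are handled directly by showing that the entire small-price utility contribution is then $O(\eps)\OPT$ and thus negligible.

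The main obstacle is carefully handling the hard inventory constraint of condition (d) alongside the top-$k$ structure of $\ut{k}$: $\M'$ drops its smallest accepted offers whenever realized acceptances exceed $k$, potentially cutting into small-price (and occasionally large-price) revenue. I would apply a Chernoff bound to the sum of independent acceptance indicators (total expected count $\le k$ by (d)) to show overflow is rare, and exploit the decreasing-price ordering of the SPM to ensure that any dropped revenue consists of the smallest accepted offers, bounding the induced utility loss by $O(\eps)\OPT$. Combining the three class-wise comparisons with this inventory correction yields $\u(\M')\ge\u(\tilde{\M})-O(\eps)\OPT\ge(1-1/e-O(\eps))\OPT$, completing the proof.
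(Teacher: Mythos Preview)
Your proposal is correct and follows essentially the same route as the paper. The paper introduces an auxiliary set of independent variables matching $\M_{\OPTs}$ on every per-price total $q_j$ while agreeing with the target SPM on the large-price per-buyer distribution, applies Lemma~\ref{lem:mass-general} to get $(1-1/e)\OPT$, and then compares the auxiliary to the target. The only organizational difference is that the paper packages your three class-wise comparisons (separating huge via Markov, linearizing small via concentration, and handling the top-$k$ inventory cap via a Chebyshev bound on the number of non-zero variables) into a single standalone Lemma~\ref{lem:new}, which it then applies once to the auxiliary and once to $\M'$; since the two invocations produce the same value $\sum_{X_i\in S_{\hg}}\E{\u(X_i)}+\E{\u(\E{S_{\sm}}+\sum_{X_i\in S_{\lrg}}X_i)}$, the result follows. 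Your inlined argument reproduces the content of that lemma's proof.
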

Lemma~\ref{lem:mass-variant} states that instead of matching $\M_{\OPTs}$ in the probability mass of each $<$buyer, large-price$>$ pair, {\em it suffices to match the total probability mass at each large price, summed over all buyers}. Thus the {\em probability mass can be redistributed across buyers} without much loss in utility.

Further, Lemma \ref{lem:mass-variant} effectively states that the contribution of the small prices and the large prices can be {\em linearized}. Intuitively, if the small prices make a significant contribution to utility, then the mechanism must be collecting many small prices, so the total revenue from small prices exhibits a concentration around its expectation. Moreover, whenever a huge price is obtained in a realization, we can neglect the contribution from all other buyers in that realization, without losing much of the expected utility. So the contribution of huge prices can be measured separately. {\em This separation of huge and small prices from large prices enables us to keep the number of distinct large prices to at most a constant.}

\smallskip
\noindent
{\bf Algorithm:} We give an outline of the algorithm; the details are provided in the appendix. From Lemma~\ref{lem:mass-variant}, it suffices to match $\M_{\OPTs}$ in (a) the expected revenue from the small prices ($\Rev$), (b) the expected contribution to utility from the huge prices $H$, and (c)  the total sale probability at each large price ($q_j$). The values of  these parameters define a {\em configuration}, and we guess the value of each parameter with appropriate discretization. The number of distinct configurations is bounded by $2^{poly(1/\eps)}$. For each configuration, we check if there exists an SPM satisfying the configuration, using the covering linear program (LP) below. In the LP, the variable $x_{ij}$ denotes the probability that buyer $i$ is offered price $p_j$.
\[ \begin{array}{rcll}
\sum_{i}\left(1-F_{i}(p_j)\right)x_{ij} &\ge& q_j  &\forall p_j\in \P_{\lrg}\\
\sum_{i,p_j\in \P_{\sm}} \left(1-F_{i}(p_j)\right) p_j x_{ij} &\ge& \Rev  \\
\sum_{i,p_j\in \P_{\hg}} \left(1-F_{i}(p_{j})\right)\u(p_j) x_{ij} &\ge& H \\
\sum_{i,j} \left(1-F_{i}(p_j)\right)x_{ij} &\le& k \\
\sum_{j} x_{ij}  &\le& 1 &\forall j \\
x_{ij} &\in& [0,1] &\forall i,j
\end{array}\]
Any feasible solution to this linear program gives a distribution of prices for each buyer, which gives us an SPM that satisfies the guessed configuration. We iterate through all the configurations, and pick the best among these SPMs. A deterministic SPM with desired utility guarantees can be easily identified by sampling from this randomized SPM.

\subsection{Proof of Lemma~\ref{lem:mass-variant}}
We begin by introducing two operations on random variables, {\em split} and {\em merge}. Using these two operations, we prove two key properties of random variables in Lemmas~\ref{lem:mass-general} and \ref{lem:new}; Lemma~\ref{lem:mass-variant} would follow as a corollary of these two lemmas.

\smallskip
\noindent
{\bf Split and Merge Operations: } We now define two operations, {\em merge} and {\em split}, on non-negative random variables. In the {\em merge} operation, given a set $S$ of independent non-negative random variables, let $X_i,X_j$ be any two variables in $S$ such that $\Pr{X_i\neq 0}+ \Pr{X_j\neq 0}\le 1$, then variables $X_i, X_j$ are replaced by a new variable $Y$ such that, for each $p> 0$, $\Pr{Y=p}=\Pr{X_i=p}+\Pr{X_j=p}$ and $Y$  is independent of other variables in $S\backslash\{X_i, X_j\}$.

The {\em split} operation breaks a random variable into a set of independent variables.
Formally, given a set $S$ of non-negative (possibly correlated) random variables, first the variables in $S$ are made mutually independent, and then each variable $X_i\in S$ is {\em split} into an arbitrary pre-specified set of independent random variables $\{X_{i1},X_{i2},..., X_{it}\}$ such that for each $p>0$, $\sum_{1\le j\le t} \Pr{X_{ij}=p}=\Pr{X_{i}=p}$ and the sets of variables created are also made mutually independent. Intuitively, the merge operation introduces negative correlation. Analogously, the split operation introduces independence.
In Lemmas~\ref{lem:merge} and~\ref{lem:split}, we establish  useful properties of merge and the split operations for a concave non-decreasing function; their proofs are deferred to the appendix.

\begin{lemma}\label{lem:merge}
Let $S$ be a set of independent non-negative random variables, Let $X_1,X_2 \in S$, and let  $Y$ be the variable formed by merging $X_1$ and $X_2$. Then $\E{\ut{k}(S)}\le \E{\ut{k}((S\setminus\{X_1,X_2\})\cup\{Y\})}$.
\end{lemma}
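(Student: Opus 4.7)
The plan is to condition on $S' := S \setminus \{X_1, X_2\}$, reduce the claim to a pointwise submodularity inequality, and then average. Since $S'$ is independent of $X_1, X_2$ (and hence of $Y$), it suffices to prove the inequality for any fixed realization of the variables in $S'$. Fix such a realization and set $f(a, b) := \ut{k}(S' \cup \{a, b\})$; note that $f(a, 0) = \ut{k}(S' \cup \{a\})$ because appending a zero does not change the sum of the top $k$ entries, so in particular $\ut{k}(S' \cup \{Y\}) = f(Y, 0)$.

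By Lemma~\ref{lem:submodular}, $\ut{k}$ is a (lattice-)submodular function on $\R^n$. Applied to the two vectors that agree with $S'$ in all but the last two coordinates, where they take values $(a, 0)$ and $(0, b)$ respectively, this yields
$$
f(a, b) + f(0, 0) \;\le\; f(a, 0) + f(0, b) \qquad \text{for all } a, b \ge 0.
$$
Taking expectation over the independent pair $X_1, X_2$ (and using $\E{f(0, X_2)} = \E{f(X_2, 0)}$ by symmetry of $f$) gives $\E{f(X_1, X_2)} + f(0, 0) \le \E{f(X_1, 0)} + \E{f(X_2, 0)}$. Separately, by unpacking $\E{f(Y, 0)}$ directly from the distribution of $Y$ — splitting the sum at $Y = 0$ and substituting $\Pr{Y = p} = \Pr{X_1 = p} + \Pr{X_2 = p}$ for $p > 0$ together with $\Pr{Y = 0} = 1 - \Pr{X_1 \neq 0} - \Pr{X_2 \neq 0}$ — a short calculation gives the identity
$$
\E{f(Y, 0)} \;=\; \E{f(X_1, 0)} + \E{f(X_2, 0)} - f(0, 0).
$$
Combining this identity with the previous display yields $\E{f(X_1, X_2)} \le \E{f(Y, 0)}$, and averaging over the realization of $S'$ (using its independence from $X_1, X_2, Y$) gives the lemma.

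The only point requiring care — and I do not expect it to be a real obstacle — is verifying that the submodularity of Lemma~\ref{lem:submodular} is strong enough to imply $f(a,b) + f(0,0) \le f(a,0) + f(0,b)$ for arbitrary $a, b \ge 0$; this is exactly the standard lattice inequality $g(x \vee y) + g(x \wedge y) \le g(x) + g(y)$ applied to $\ut{k}: \R^n \to \R$ with $x, y$ differing only in the last two coordinates, which is the natural interpretation of submodularity for a real-valued function on $\R^n$.
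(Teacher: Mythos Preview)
Your proof is correct and follows essentially the same approach as the paper: condition on $S\setminus\{X_1,X_2\}$, reduce to the pointwise lattice-submodularity inequality for $\ut{k}$, expand $\E{f(Y,0)}$ from the definition of $Y$, and combine. The paper carries out the same computation by enumerating the non-zero values $p_1,\ldots,p_\ell$ explicitly, arriving at $\sum_{i,j}\pi_i\pi'_j\bigl(\ut{k}(p_i,\hat z)+\ut{k}(p_j,\hat z)-\ut{k}(p_i,p_j,\hat z)-\ut{k}(0,\hat z)\bigr)\ge 0$, which is exactly your inequality $f(a,0)+f(0,b)\ge f(a,b)+f(0,0)$ summed against $\pi_i\pi'_j$; your presentation is arguably cleaner in that it keeps the $f(0,0)$ term visible throughout. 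Your closing caveat is unnecessary: the proof of Lemma~\ref{lem:submodular} in the paper establishes precisely the lattice inequality $\ut{k}(x\vee y)+\ut{k}(x\wedge y)\le \ut{k}(x)+\ut{k}(y)$ on $\R^n$, which is what you need.
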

\begin{lemma}\label{lem:split}
Consider a sequence of split operations on a set $S$ of arbitrarily correlated non-negative random variables and let $S'$ be the set of independent random variables at the end of the split operation. Then $\E{\ut{k}(S')}\ge (1-\frac{1}{e})\E{\ut{k}(S)}$.
\end{lemma}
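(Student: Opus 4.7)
The plan is to reduce Lemma~\ref{lem:split} to a single application of the correlation gap bound (Lemma~\ref{lem:correlation-gap}), applied to the split variables $S' = \{X_{ij}\}$ viewed as the ground set. Although $S'$ has many more variables than $S$, the key insight is that one can exhibit an auxiliary joint distribution $D^*$ on $S'$ which (i) has the same one-dimensional marginals as the actual post-split distribution $D'$, so that Lemma~\ref{lem:correlation-gap} is applicable, and (ii) produces, in every realization, exactly the same multiset of positive values as the original distribution $D$ on $S$. The factor $(1-1/e)$ will then arise from the single invocation of correlation gap bridging $D^*$ and $D'$; no additional loss is incurred for ``splitting'' beyond the one implicit correlation step.

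Concretely, let $D$ denote the original joint distribution on $S = \{X_1,\ldots,X_n\}$, and let $D'$ denote the post-split distribution on $S'$, which by construction is the product of the marginals satisfying $\sum_j \Pr{X_{ij}=p} = \Pr{X_i=p}$ for every $p>0$. I would define the coupling $D^*$ on $S'$ by the following sampling procedure: draw $(X_1,\ldots,X_n) \sim D$; for each $i$ with $X_i = p > 0$, independently draw an index $J_i$ with $\Pr{J_i=j} = \Pr{X_{ij}=p}/\Pr{X_i=p}$, set $X_{iJ_i}=p$, and set $X_{ij}=0$ for $j \neq J_i$; for each $i$ with $X_i=0$, set all $X_{ij}=0$. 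A short computation using the split condition verifies that for every $p \ge 0$ and every $(i,j)$, the marginal probability $\Pr{X_{ij}=p}$ under $D^*$ equals that under $D'$, so $D^*$ and $D'$ share identical one-dimensional marginals on the ground set $S'$.

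Next, I would observe that in every realization from $D^*$, either $X_i=0$ and every $X_{ij}=0$, or $X_i=p>0$ and exactly one $X_{iJ_i}=p$ with the remaining $X_{ij}=0$. Hence the multiset of positive entries in $\{X_{ij}\}$ under $D^*$ equals the multiset of positive entries in $\{X_1,\ldots,X_n\}$ under $D$, realization by realization; in particular the sum of the top $k$ values coincides, so $\Exp{D^*}{\ut{k}(S')} = \Exp{D}{\ut{k}(S)}$. Since $\ut{k}$ is monotone submodular by Lemma~\ref{lem:submodular} and $D'$ is the product of the $D^*$-marginals, Lemma~\ref{lem:correlation-gap} yields
\[\Exp{D'}{\ut{k}(S')} \;\ge\; (1-1/e)\,\Exp{D^*}{\ut{k}(S')} \;=\; (1-1/e)\,\Exp{D}{\ut{k}(S)},\]
which is the desired bound. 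The main conceptual obstacle is precisely constructing $D^*$: one needs a coupling that simultaneously matches the independent split marginals (for correlation gap to apply) and faithfully reproduces the sum-of-top-$k$ structure of $S$; once $D^*$ is in hand, the remainder is a routine marginal check together with a single appeal to Lemma~\ref{lem:correlation-gap}.
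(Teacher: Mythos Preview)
Your proposal is correct and follows essentially the same approach as the paper: construct a correlated distribution on $S'$ that both matches the independent marginals and reproduces $\ut{k}(S)$ realization-by-realization, then apply a single invocation of the correlation gap (Lemma~\ref{lem:correlation-gap}). The paper states this more tersely (``the realization of $X_i$ can be simulated by $\sum_j X_{ij}$ with appropriate correlation''), whereas you make the coupling $D^*$ explicit via the random index $J_i$; this is exactly the ``appropriate correlation'' the paper invokes.
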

Using these two operations, we establish an important property in the following lemma, that not only introduces independence across correlated random variables, but also allows to redistribute the probability mass across variables.
\begin{lemma}\label{lem:mass-general}
Given an arbitrarily correlated set $S=\{X_1, X_2, ..., X_n\}$of non-negative random variables, consider any set $S'=\{X_1', X_2', X_3', ..., X_m'\}$ of independent non-negative random variables, such that for each value $p_j>0$, we have $\sum_i\Pr{X_i=p_j}=\sum_i\Pr{X_i'=p_j}$. Then for any concave function $\u$ and any $k>0$, we have $\E{\ut{k}(S')}\ge (1-\frac{1}{e})\E{\ut{k}(S)}$.
\end{lemma}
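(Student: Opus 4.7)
The plan is to prove Lemma~\ref{lem:mass-general} by interpolating $S$ and $S'$ through a common ``atomic'' intermediate set $T$: first go from $S$ to $T$ via splits (losing at most a factor of $1 - 1/e$ by Lemma~\ref{lem:split}), and then from $T$ to $S'$ via merges (losing nothing, by Lemma~\ref{lem:merge}).

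To build the atomic refinement, for each value $p > 0$ I would solve the bipartite mass-transport problem whose sources are the masses $\{\Pr{X_i = p}\}_i$ and whose sinks are the masses $\{\Pr{X'_j = p}\}_j$; this is feasible because the two totals agree by hypothesis. Let $a_{ij}(p) \ge 0$ denote the mass transported from source $i$ to sink $j$ at value $p$, so $\sum_j a_{ij}(p) = \Pr{X_i = p}$ and $\sum_i a_{ij}(p) = \Pr{X'_j = p}$. I would then split each $X_i$ into independent Bernoulli-like atoms $Z_{ijp}$ that equal $p$ with probability $a_{ij}(p)$ and $0$ otherwise. By construction the marginal condition $\sum_{j,p} \Pr{Z_{ijp} = p} = \Pr{X_i = p}$ required by the definition of a split is satisfied, so Lemma~\ref{lem:split} yields $\E{\ut{k}(T)} \ge (1 - 1/e)\,\E{\ut{k}(S)}$ for the independent set $T = \{Z_{ijp}\}$.

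To reach $S'$ from $T$, I would carry out two rounds of merges. First, for each fixed pair $(j,p)$, the atoms $\{Z_{ijp}\}_i$ all take the value $p$ or $0$ and have non-zero probabilities summing to $\Pr{X'_j = p} \le 1$; pairwise merges combine them into a single variable $W_{jp}$ with $\Pr{W_{jp} = p} = \Pr{X'_j = p}$. Second, for each fixed $j$ the independent atoms $\{W_{jp}\}_p$ have non-zero probabilities summing to $\sum_p \Pr{X'_j = p} \le 1$, and pairwise merges combine them into a single variable whose distribution is exactly that of $X'_j$. At every step the hypothesis of Lemma~\ref{lem:merge} is met, so $\E{\ut{k}(S')} \ge \E{\ut{k}(T)}$, and chaining with the split bound gives the claimed inequality.

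The main obstacle, and the only place where the hypothesis is actually used, is setting up the atomic refinement so that the split of $S$ produces a multiset of atoms that can be reassembled into $S'$ through legal merges. Both requirements, namely that the $a_{ij}(p)$'s exist and that every intermediate merge satisfies $\Pr{A \neq 0} + \Pr{B \neq 0} \le 1$, reduce to the equality $\sum_i \Pr{X_i = p} = \sum_i \Pr{X'_i = p}$ for every value $p > 0$ together with the fact that each $X'_j$ is itself a bona fide random variable. Once this atom-level picture is in place, Lemmas~\ref{lem:split} and~\ref{lem:merge} compose directly to give the $(1 - 1/e)$-approximation.
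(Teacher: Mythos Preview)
Your proof is correct and follows essentially the same split-then-merge strategy as the paper's own proof, invoking Lemmas~\ref{lem:split} and~\ref{lem:merge} in the same order. The only cosmetic difference is in the construction of the atomic refinement: the paper splits each $X_i$ into $L\to\infty$ copies, each taking value $p_j$ with probability $\Pr{X_i=p_j}/L$, so that the atoms are infinitesimal and can be merged back into any target $S'$ with matching per-value totals, whereas you achieve the same reassemblability with finitely many atoms via an explicit bipartite mass transport---arguably cleaner, but not a different idea.
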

\begin{proof}
We perform the {\em split} operation on $S$ to create a set $Y=\{Y_{ijl}\}$ of variables as follows: for each $1\leq i\leq n$ and $p_j>0$, create $L\rightarrow \infty$ variables  $\{Y_{ijl}|1\le l\le L\}$ where $Y_{ijl}$ takes value $p_j$ w.p. $\frac{\Pr{X_i=p_j}}{L}$ and $0$ otherwise. Using Lemma~\ref{lem:split}, we get that $\E{\ut{k}(Y)}\ge (1-\frac{1}{e})\E{\ut{k}(S)}$

Now we perform merge operation repeatedly on variables in $Y$ to simulate variables in $S'$. The condition in the lemma statement ensures that such merging is always possible, since $L\rightarrow \infty$. Then by Lemma~\ref{lem:merge}, we get $\E{\ut{k}(S')}\ge \E{\ut{k}(Y)} \ge (1-1/e)\E{\ut{k}(S)}$.
\end{proof}

The following lemma effectively states that, given a set of independent random variables, the contribution to the utility of huge values can be separated, and for small values, the variables can be replaced by their expectations; we defer its proof to the appendix.
\begin{lemma}\label{lem:new}
Given any $\epsilon>0$ and a set of independent non-negative random variables $S=\{X_1, X_2, X_3 \ldots\}$ such that $X_i$ takes value $p_i$ w.p. $\pi_i$ and $0$ otherwise, where $p_1 \ge p_2 \ge p_3 \ge \ldots \ge 0$. Also, suppose that $\sum_{X_i\in S} \pi_i \le k$. Let $\hat{p}$ be a price that satisfies $\hat{p} \ge \u^{-1}\left(\E{\ut{k}(S)}/\eps\right)$, and let
$p^*$ be any price such that  $\sum_{p_i\in [p^*,\hat{p})} \pi_i > \frac{1}{\eps^4}$ ($p^*$ is $0$, if no such price exists). Also, let $S_{\sm} = \{X_i | p_i < p^*\}$, $S_{\lrg} = \{X_i | p^*\le p_i < \hat{p}\}$, and $S_{\hg} = \{X_i | p_i > p^*\}$. Then $\E{\ut{k}(S)}$ is approximated to within a factor of $(1\pm O(\eps))$ by $\sum_{X_i \in S_{\hg}} \E{\u(X_i)}  + \E{\u\left(\E{S_{\sm}} + \sum_{X_i\in S_{\lrg}} X_i\right)}$.
\end{lemma}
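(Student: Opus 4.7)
The plan is to decompose $\E{\ut{k}(S)}$ by conditioning on the number $H$ of realized huge variables, and then separately compare the huge contribution to $B := \sum_{X_i \in S_{\hg}} \pi_i \u(p_i)$ and the non-huge contribution to $C := \E{\u(\E{S_{\sm}} + \sum_{X_i \in S_{\lrg}} X_i)}$.

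First I would bound the total mass $\mu := \sum_{X_i \in S_{\hg}} \pi_i$ of huge variables. Since every realization of a huge variable contributes at least $\u(\hat{p}) \geq \E{\ut{k}(S)}/\eps$ to the top-$k$, and the event ``some huge is realized'' has probability at least $1 - e^{-\mu}$, we get $\E{\ut{k}(S)} \geq (1-e^{-\mu})\,\E{\ut{k}(S)}/\eps$, forcing $\mu = O(\eps)$. A union bound on pairs then yields $\Pr{H \geq 2} \leq \mu^2/2 = O(\eps^2)$, and subadditivity of $\u$ (concave with $\u(0)=0$) shows that the contribution of this event to $\E{\ut{k}(S)}$ is $O(\eps)\,\E{\ut{k}(S)}$.

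Next I would isolate the dominant cases $H \in \{0,1\}$. The $H=0$ contribution equals $(1-O(\eps))\,\E{\ut{k}(\text{non-huge})}$ by independence of huge and non-huge variables. For $H=1$ with $X_i$ as the unique realized huge, the utility is $\u(p_i + r)$ where $r$ is the top-$(k-1)$ sum of the non-huge variables. The upper bound uses subadditivity $\u(p_i + r) \leq \u(p_i) + \u(r)$; the lower bound uses monotonicity $\u(p_i + r) \geq \u(p_i)$, the extra ``$\u(r)$'' piece being already captured in the $H=0$ term. Summing over $i \in S_{\hg}$ yields
\[
\E{\ut{k}(S)} \;=\; B \;+\; \E{\ut{k}(\text{non-huge})} \;\pm\; O(\eps)\,\E{\ut{k}(S)}.
\]

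It then remains to show $\E{\ut{k}(\text{non-huge})} = (1 \pm O(\eps))\,C$. The upper bound is immediate: the top-$k$ sum is at most $V_S + V_L$ where $V_S = \sum_{X_i \in S_{\sm}} X_i$ and $V_L = \sum_{X_i \in S_{\lrg}} X_i$, and conditioning on $V_L$ and applying Jensen's inequality to the concave function $\u(\cdot + V_L)$ gives $\E{\u(V_S + V_L)} \leq \E{\u(\E{V_S} + V_L)} = C$. For the lower bound I would invoke two concentration arguments: Chebyshev on $V_S$ (variance at most $p^* \E{V_S}$) to justify replacing $V_S$ by its mean up to $O(\eps)$, and a Chernoff bound on the number $N$ of non-huge realizations, exploiting the ``largeness'' condition $\sum_{X_i \in S_{\lrg}} \pi_i > 1/\eps^4$, which forces $k \geq 1/\eps^4$ and makes both concentrations sharp. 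The value dropped by the top-$k$ truncation is then at most (excess count) $\cdot\, p^*$, and by concavity of $\u$ this translates into an $O(\eps)$-fraction loss in utility.

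The main obstacle will be the lower bound in the final step: one must simultaneously control the concentration of $V_S$ around $\E{V_S}$ and the probability that the top-$k$ truncation discards substantial value from $V_L$. The concentration parameters are calibrated precisely by the threshold $1/\eps^4$ defining ``large,'' and boundary regimes (e.g., $\E{V_S}$ comparable to $p^*$, or $\nu_L := \sum_{X_i \in S_{\lrg}} \pi_i$ close to $k$) need separate treatment: in the former, small prices contribute negligibly to $\u$ by monotonicity, while in the latter, small prices contribute negligibly to the realization count so the truncation rarely fires. Putting these pieces together yields the claimed $(1 \pm O(\eps))$-approximation.
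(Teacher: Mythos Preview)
Your proposal is correct and follows essentially the same three-stage structure as the paper's proof: (i) separate off the huge contribution using the bound $\Pr{H\ge 1}\le \eps$ (the paper gets this directly by Markov on $\ut{k}(S_{\hg})$, you get it by first bounding $\mu=O(\eps)$, which is equivalent), (ii) argue that for $k>1/\eps^4$ the top-$k$ truncation is nearly vacuous via Chebyshev/Chernoff on the realization count, and (iii) replace $V_S$ by $\E{V_S}$ via a case split on whether $\E{V_S}\gtrsim p^*/\eps^3$ (Chebyshev concentration) or not (then $V_L$ dominates $\E{V_S}$ with high probability, so the small term is negligible). The only cosmetic difference is that the paper cleanly factors step (ii) as $(1+\eps)\E{\ut{k}(S')}\ge \E{\ut{k(1+\eps)}(S')}\ge (1-\eps)\E{\u(S')}$, which lets it avoid tracking ``what the truncation discards'' and is a bit cleaner than your phrasing of the boundary regime ``$\nu_L$ close to $k$.''
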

\junk{
\begin{proof}
We here give an outline of the proof; the detailed proof is given in the appendix.  The proof consist of 3 stages.
\begin{description}
\item The first step is to separate the contribution of $S_{\hg}$ from the rest. We show that $\E{\ut{k}(S)}$ is approximated within a factor of $(1\pm O(\eps))$ by $\E{\ut{k}(S_{\sm} \cup S_{\lrg})} + \sum_{X_i \in S_{\hg}} \E{\u(X_i)}$.
\item The second step establishes that for $k>1/\eps^4$, one may simply assume that $k=\infty$. In particular, let $S' = S_{\sm} \cup S_{\lrg}$. If $k> 1/\eps^4$, then $(1+\eps) \E{\ut{k}(S')} \ge \E{\ut{k(1+\eps)}(S')} \ge (1-\eps) \E{\u(S')}$.
\item Let $S' = S_{\sm} \cup S_{\lrg}$. Then $\E{\ut{k}(S')}$ is approximated within a factor of $(1\pm O(\eps))$ by $\E{\u\left(\E{S_{\sm}} + \sum_{X_i\in S_{\lrg}} X_i\right)}$. This claim, combined with the first claim, completes the proof of Lemma \ref{lem:new}.
\end{description}
\end{proof}
}
\smallskip
Now we are ready to prove Lemma~\ref{lem:mass-variant}.
The revenue from a buyer in a mechanism can be represented by a random variable, possibly correlated with other buyers' random variables. Let $\M'$ be a mechanism that matches $\M_{\OPTs}$ on the total sale probability for each price, and its sale probability for each $<$buyer, large-price$>$ pair is same as $\M$. Using Lemma~\ref{lem:mass-general}, we get $\u(\M')\ge (1-1/e)\u(\M_{\OPTs})$. As $\M$ and $\M'$ have (approximately) identical revenues from small prices and  utilities from huge prices, we can invoke Lemma~\ref{lem:new} to establish that $\u(\M) \ge (1-\epsilon)\u(\M')$. This completes the proof.

\section{Risk Averse Seller and Risk Averse Buyers: BIC Mechanisms}\label{sec:bayes-nash}
We design mechanisms when buyers as well as the seller are risk averse, and the seller's objective is to maximize his expected utility. We make two assumptions: (a) for any $i,t<0$, $\U_i(0)=0$ and $\U_i(t)=-\infty$; this implies that the mechanism is {\em ex post individually rational}, and the payment from a buyer is $0$ whenever he does not get the item, (b) we further restrict to the set of mechanisms in which payments are always non-negative, i.e. there is no {\em positive transfer} from the seller to a buyer.
Consider any mechanism $\M$ that satisfies these assumptions, let $g(\cdot, \cdot, \cdot)$ and $h(\cdot, \cdot, \cdot)$ be its payment functions. Then we have $g(i,j,v)=0$ for each $i,v$ and payment $p_j$, and $h(i,j,v)=0$ for each $i,v$ and payment $p_j<0$. Thus, the function $g(\cdot, \cdot,\cdot)$ is not required to describe the mechanism. The following theorem summarizes our result.
\begin{theorem}\label{thm:non-iid}
There exists a polynomial time algorithm to compute a truthful-in-expectation mechanism for a $k$-unit auction with expected utility at least $\left(1-\frac{1}{e}\right)^2 \gamma(k) \OPT$ where $\OPT$ is the expected utility of an optimal BIC mechanism.
Moreover, for $k\ge 1/\epsilon^3$, there is a $\left(1-\frac{1}{e}-\epsilon\right)$-approximation.
\end{theorem}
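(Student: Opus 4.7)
The plan is to proceed in three stages, following the outline in the techniques overview. Stage one takes an optimal BIC mechanism $\M^*$, whose payment functions reduce under the ex-post IR and no-positive-transfer assumptions to a single family $h(i,j,v)$ with $g \equiv 0$. I construct the soft randomized sequential mechanism $\M_{\soft}$ that, on report $v$ from buyer $i$, independently allocates an item with probability $\sum_j h(i,j,v)$ and charges $p_j$ with conditional probability $h(i,j,v)/\sum_\ell h(i,\ell,v)$. The buyer-$i$ BIC inequality of $\M^*$ is exactly the TIE inequality for $\M_{\soft}$, so truthfulness is preserved across the decorrelation. Because the revenue of $\M^*$ in every realization is at most $\ut{k}$ of the per-buyer payments while the per-buyer payment distributions agree with those of $\M_{\soft}$, Lemma~\ref{lem:mass-general} together with the submodularity of $\ut{k}$ (Lemma~\ref{lem:submodular}) yields $\u(\M_{\soft}) \ge (1-1/e)\,\OPT$ for the \emph{uncapped} soft mechanism.

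Stage two converts $\M_{\soft}$ into a TIE mechanism respecting the hard cap of $k$ items. Truncation to the first $k$ sales can be arbitrarily lossy because different allocations in $\M_{\soft}$ need not be comparable under $\succeq$, while taking the top-$k$ payments in a realization outright breaks TIE. The remedy is to run $L\to\infty$ rounds, where each buyer is independently \emph{considered} in each round with probability $1/L$; a considered buyer reporting $v$ is handled by the per-buyer policy of $\M_{\soft}$, and the mechanism stops once $k$ items have been sold. Each round's incentive inequality for buyer $i$ is just a $1/L$-scaling of the BIC inequality of $\M^*$, so the multi-round mechanism is TIE, and capping at $k$ sales preserves TIE by the soft-to-hard remark in Section~\ref{prelim}. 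By the symmetry across rounds, the payment in the $j$-th successful round is drawn from a single common distribution $\D$, i.i.d.\ across $j$, and the total number $N$ of successes converges as $L \to \infty$ to $\mathrm{Poisson}(K)$ with $K = \sum_i \Exp{v_i}{\sum_j h(i,j,v_i)} \le k$.

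Stage three quantifies the cap loss and handles the non-IID case algorithmically. Writing the uncapped utility as $\E{\u(X_1+\cdots+X_N)}$ and the capped one as $\E{\u(X_1+\cdots+X_{\min(N,k)})}$ for i.i.d.\ $X_j \sim \D$, a Poisson-tail calculation combined with the concavity of $\u$ (via Lemma~\ref{lem:dominance}) bounds their ratio by $\gamma(k) = 1 - k^k/(k!\,e^k) = 1 - \mathrm{Pr}[\mathrm{Poisson}(k) = k]$, which is tight at $K = k$. Together with Stage one this delivers the IID guarantee $(1-1/e)\gamma(k)$. For non-IID buyers the algorithm has no oracle access to $\M^*$ and must instead compute the per-buyer allocation probabilities; it does so by adapting the configuration-plus-covering-LP template of Section~\ref{sec:algorithm} to match, in expectation, the per-buyer allocation and payment of a hypothetical soft mechanism. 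Invoking Lemma~\ref{lem:mass-variant} in this adapted form costs an additional $(1-1/e)$ factor and produces the stated $(1-1/e)^2\gamma(k)$ bound. When $k \ge 1/\eps^3$, Poisson mass concentrates tightly around $K$ and the sharper huge/large/small separation of Lemma~\ref{lem:new} removes the need to match each large-price probability individually, collapsing the two $(1-1/e)$ losses into a single $(1-1/e-\eps)$.

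The principal obstacle is Stage two. One must simultaneously maintain TIE, make the $k$ allocation slots statistically identical so that the correlation-gap machinery can be wielded, and keep the inventory-overflow loss bounded by $\gamma(k)$. The $L\to\infty$ Poissonization accomplishes all three at once: scaling each round by $1/L$ preserves the BIC inequality of $\M^*$ as a positive scalar multiple, so truthfulness is undisturbed by the concave $\U_i$; round exchangeability forces all successful-round revenues to share one marginal $\D$; and the Poisson limit produces the tight closed-form $\gamma(k)$. Verifying the $\gamma(k)$ bound with the required sharpness, which reduces to a clean comparison between $\E{\u(\sum_{j=1}^N X_j)}$ and $\E{\u(\sum_{j=1}^{\min(N,k)} X_j)}$ for truncated Poisson $N$ using concavity of $\u$ and $\u(0)=0$, is the single most delicate piece of the analysis.
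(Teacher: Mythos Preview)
Your architecture (Poissonize via $L\to\infty$ rounds, show the per-allocation revenue distribution is common, bound the cap loss by $\gamma(k)$) matches the paper's, but you have misidentified where the second $(1-1/e)$ factor comes from, and this hides a real gap in Stage~two. You assert that each buyer is independently considered in every round with probability $1/L$, and that the number of successes is Poisson$(K)$ with i.i.d.\ per-success payments. That analysis is correct only if a single buyer may be allocated in several rounds---i.e., only on the auxiliary ``copies'' instance the paper calls $\I_{\copies}$. In the actual mechanism each buyer can receive at most one item; once processed he drops out, the rounds are no longer exchangeable, and neither the Poisson limit nor the i.i.d.\ claim holds. The paper's second $(1-1/e)$ is precisely the cost of bridging the copies instance (where your $\gamma(k)$ argument works, via $\M_{\hard}$) to the real mechanism $\M_{\rounds}$: in $\I_{\copies}$ each buyer's copies are processed once in expectation, whereas $\M_{\rounds}$ processes him with probability $1-(1-1/L)^L\to 1-1/e$, and a stochastic-dominance argument bounds the resulting utility loss by $1/e$.

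By contrast, the LP/configuration step does \emph{not} cost an extra $(1-1/e)$ as you claim in Stage~three. Lemma~\ref{lem:mass-general} already gives $(1-1/e)$ for any independent family matching $\M_{\OPTs}$ in the \emph{total} mass at each price, so the soft mechanism built from the LP-computed $h_{\rounds}$ is directly $(1-1/e-\eps)\OPT$ (this is the paper's Lemma~\ref{lem:many_copies}), not $(1-1/e)$ times your Stage-one quantity. This is also why the IID result does not use rounds at all: by symmetry the ordinary soft sequential mechanism already has identically distributed per-allocation revenues, so the copies/real-buyer gap never appears and one gets $(1-1/e)\gamma(k)$ directly. Finally, for $k\ge 1/\eps^3$ the paper's argument is simpler than what you sketch: run the soft sequential mechanism with $h_{\rounds}$ but independently discard each buyer with probability $\eps$; concentration then makes the hard inventory constraint hold with probability $1-\eps$, and Lemma~\ref{lem:prob-red} turns this into a $(1-O(\eps))$ utility loss, leaving only the single $(1-1/e)$ from Lemma~\ref{lem:mass-general}.
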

Our result giving an improved approximation ratio of $\left(1-\frac{1}{e}\right)\gamma(k)$ for IID buyers is deferred to Appendix~\ref{sec:iid}. In the rest of the section, we prove Theorem~\ref{thm:non-iid}. Let $\M_{\OPTs}$ be a utility optimal BIC mechanism with these two restrictions, and $h_{\OPTs}(\cdot,\cdot,\cdot)$ be its payment function.

\smallskip
\noindent
{\bf Overall Idea:} Consider any {\em soft} randomized sequential mechanism $\M'$ that processes buyers independently (according to its payment function $h'(\cdot,\cdot,\cdot)$), and matches $\M_{\OPTs}$ for (a) the total probability of each large payment summed over all buyers, (b) the total revenue from small payments and (c) the utility from huge payments. Using techniques developed for DSIC mechanisms, we get $\u(\M')\ge (1-1/e)\OPT$. However, converting such soft mechanism into a mechanism that strictly satisfies the inventory constraint while maintaining truthfulness is not easy. In the case of DSIC mechanisms, the buyers were arranged in a decreasing order of prices, noting that {\em top-$k$} is a sub-modular function. Here, if we allocate items to buyers with {\em top-$k$} payments in a realization of $\M'$, then the mechanism is no longer truthful. Further, as {\em first-$k$} is not a sub-modular function, the desired approximation guarantee cannot be proven if we process buyers according to a fixed order. We get around this problem by constructing a mechanism with $L\rightarrow\infty$ rounds, where in every round, each buyer is processed independently w.p. $1/L$. The revenue from each allocation in this mechanism has an identical distribution. This helps to limit the loss caused by imposing strict inventory constraints.  We now describe our mechanism in detail.

\smallskip
\noindent
{\bf The Mechanism:} Our mechanism $\M_{\rounds}$ consists of $L\rightarrow\infty$ rounds and $h_{\rounds}(\cdot, \cdot,\cdot)$ is the payment function associated with it.  In each round, buyers arrive according to a predefined order. When buyer $i$ arrives, subject to availability of items,  he is independently processed with probability $\frac{1}{L}$ as follows: if his reported valuation is $v$, then he is given an item w.p. $\sum_j h_{\rounds}(i,j,v)$, and whenever he is given an item, he makes a payment of $p_j$ w.p. $\frac{h_{\rounds}(i,j,v)}{\sum_l h_{\rounds}(i,l,v)}$. Once processed, buyer $i$ is not considered for any future rounds. Further, the payment function $h_{\rounds}(\cdot, \cdot, \cdot)$ satisfies following properties:

\smallskip
\noindent
(a) $\sum_{i,v,p_j\in \P_{\sm}} p_j h_{\rounds}(i,j,v)f_i(v) = \sum_{i,v,p_j\in \P_{\sm}} p_jh_{\OPTs}(i,j,v)f_i(v) $, \\
(b) for each $p_j\in\P_{\lrg}$, $\sum_{i,v} h_{\rounds}(i,j,v)f_i(v)= \sum_{i,v} h_{\OPTs}(i,j,v)f_i(v)$, \\
(c) $\sum_{i,v,p_j \in \P_{\hg}} \u(p_j)h_{\rounds}(i,j,v)f_i(v) =\sum_{i,v,p_j \in \P_{\hg}} \u(p_j)h_{\OPTs}(i,j,v)f_i(v)$, \\
(d) for each $i,v, v'$, $\sum_{j}\U_i(v-p_j)h_{\rounds}(i,j,v) \ge \sum_{j} \U_i(v-p_j)h_{\rounds}(i,j,v')$, and \\
(e) $\sum_{i,j,v}h_{\rounds}(i,j,v)f_i(v)\le k$.
\smallskip
\junk{
\begin{algorithm}
\caption{Mechanism $\M_{\rounds}$}\label{alg:bic}.
\begin{algorithmic}
\STATE Choose $L\rightarrow\infty$; the mechanism consists of $L$ rounds.
\FOR{round $l=1:L$}
\FOR{each buyer $i=1:n$}
\STATE Skip buyer $i$ w.p. $\left(1-\frac{1}{L}\right)$.
\IF{buyer $i$ is not skipped in the current round, he is not processed in previous rounds, and the items are still available}
\STATE Let $v$ be his valuation. Allocate the item to him w.p. $\sum_j h_{\rounds}(i,j,v)$ for a payment of $p_j$.
\ENDIF
\ENDFOR
\ENDFOR
\end{algorithmic}
\end{algorithm}
}

We draw a parallel between the properties of $h_{\rounds}(\cdot, \cdot, \cdot)$ with the algorithm developed in the case of DSIC mechanisms: the first three properties are equivalent to designing a mechanism that matches $\M_{\OPT}$ in the total probability for each large payment, the expected revenue from small payments and the expected utility from huge payments. The fourth constraint establishes the truthfulness of $\M_{\rounds}$, and the last constraint ensures its feasibility in expectation. We further note that $\M_{\rounds}$ is {\em truthful-in-expectation}: conditioned on processing buyer $i$ in some round, the payment function ensures truthfulness in terms of his expected utility.

The following lemma bounds the utility of $\M_{\rounds}$, we defer its proof to later in the section.
\begin{lemma}\label{lem:non-iid}
As $L\rightarrow\infty$, $\u\left(\M_{\rounds}\right)\ge (1-\epsilon)\left(1-\frac{1}{e}\right)^2\gamma(k)\OPT$.
\end{lemma}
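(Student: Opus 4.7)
My plan is to chain together three inequalities, each losing one of the factors $(1-1/e)$, $(1-1/e)$, and $\gamma(k)$ in the conclusion. First, I would introduce a fully-processed soft analog $\M_{\soft}$ in which, for each buyer $i$ reporting value $v$, the mechanism independently charges $p_j$ with probability $h_{\rounds}(i,j,v)$ and ignores inventory. Property~(d) makes $\M_{\soft}$ truthful in expectation since each buyer's payment is now independent of everyone else's, while conditions (a)--(c) and (e) assert that the aggregate payment footprints of $\M_{\soft}$ and $\M_{\OPTs}$ agree at small, large, and huge prices. Viewing each buyer's payment in $\M_{\OPTs}$ as a (possibly correlated) random variable and each buyer's payment in $\M_{\soft}$ as an independent random variable with matching aggregate marginals, the proof of Lemma~\ref{lem:mass-variant}---which invokes Lemma~\ref{lem:mass-general} to pass from correlated to independent via split and merge, and Lemma~\ref{lem:new} to linearize the small and huge contributions---transfers essentially verbatim to give $\u(\M_{\soft}) \ge (1-O(\epsilon))(1-1/e)\,\OPT$.

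Next, I would relate $\M_{\soft}$ to an intermediate mechanism $\M^{\soft}_{\rounds}$ obtained by removing the inventory constraint from $\M_{\rounds}$. As $L \to \infty$, each buyer in $\M^{\soft}_{\rounds}$ is attempted in $L$ independent rounds with per-round probability $1/L$ and exits after the first success, so each buyer is processed exactly once with probability $1 - (1-1/L)^L \to 1 - 1/e$, independently across buyers, and behaves as in $\M_{\soft}$ once processed. Let $R_i$ denote the payment of buyer $i$ in $\M_{\soft}$ and $B_i \sim \mathrm{Bernoulli}(1 - 1/e)$ be independent processing indicators, so the total revenue in $\M^{\soft}_{\rounds}$ is $\sum_i R_i B_i$. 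Fixing $(R_i)$ and writing $R = \sum_i R_i$, the fraction $\alpha = (\sum_i R_i B_i)/R$ lies in $[0,1]$, so concavity of $\u$ together with $\u(0)=0$ gives $\u(\alpha R) \ge \alpha\, \u(R)$; taking expectation over $B$ yields $\Ex_B[\u(\sum_i R_i B_i)] \ge \Ex_B[\alpha]\, \u(R) = (1-1/e)\,\u(R)$, and averaging over $(R_i)$ produces $\u(\M^{\soft}_{\rounds}) \ge (1-1/e)\,\u(\M_{\soft})$.

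Finally, I would compare $\M^{\soft}_{\rounds}$ to $\M_{\rounds}$, whose only additional constraint is the cap of $k$ on allocations. The round structure was chosen precisely so that every allocation has the same marginal payment distribution: as $L \to \infty$, attempt times are i.i.d.\ across buyers (each round tries each unprocessed buyer with the same probability $1/L$), and averaging over the identity of the buyer responsible for the $j$-th chronological allocation makes its payment distribution independent of $j$. The expected number of allocations in $\M^{\soft}_{\rounds}$ is at most $(1-1/e)k \le k$, and an argument mirroring Yan's~\cite{Y11} analysis of $k$-unit SPMs shows that the probability any given allocation survives the cap is at least $\gamma(k) = 1 - k^k/(k!\, e^k)$; combining this with the concavity trick from the previous step, now applied to identically-distributed allocations rather than to independent buyers, yields $\u(\M_{\rounds}) \ge \gamma(k)\,\u(\M^{\soft}_{\rounds})$. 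Chaining the three bounds gives $\u(\M_{\rounds}) \ge (1-O(\epsilon))(1-1/e)^2\gamma(k)\,\OPT$; the $k \ge 1/\epsilon^3$ refinement follows by using a larger per-round processing probability so that the processing factor approaches $1$ and the inventory factor approaches $1-\epsilon$ via Chernoff concentration on the total allocation count. The hard part is the third step's $\gamma(k)$ estimate: while a Poisson-process heuristic makes the answer plausible, formally setting up the identical-marginals claim and then coupling it to a Yan-style concentration bound for the $k$-unit cap is where the round construction does the real work---and this is exactly the step that the ``Overall Idea'' discussion singles out as the reason for introducing rounds in the first place.
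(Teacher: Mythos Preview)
Your first two steps are sound, and in fact your concavity argument in step~2 is a clean way to account for the $(1-1/e)$ processing factor. The gap is in step~3: the claim that ``every allocation in $\M^{\soft}_{\rounds}$ has the same marginal payment distribution'' is false. It is true that the \emph{processing} times of the $n$ buyers are exchangeable (so the processing order is a uniformly random permutation as $L\to\infty$), but allocations are the sub-sequence of processings filtered by each buyer's own allocation probability, and those probabilities differ across buyers. Concretely, take two buyers where buyer~1 is allocated with probability~$1$ and always pays~$1$, while buyer~2 is allocated with probability~$1/2$ and pays~$2$. In $\M^{\soft}_{\rounds}$ the first allocation pays~$1$ with probability $3/4$ (either buyer~1 is processed first, or buyer~2 is processed first but not allocated), whereas the second allocation, conditioned on existing, pays~$1$ with probability $1/2$. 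Without identical per-allocation distributions you lose the monotonicity $Z_1\ge Z_2\ge\cdots$ of marginal utility contributions that the $\gamma(k)$ comparison relies on, and neither the Yan-style argument nor the ``concavity trick from the previous step'' recovers it.

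The paper avoids this by reversing the order of your last two losses. It first passes to a \emph{copies} instance $\I_{\copies}$ in which each buyer is replaced by $L$ independent copies and round~$\ell$ considers the $\ell$-th copy of every buyer; rounds are then genuinely i.i.d., so Lemma~\ref{lem:distribution-hard} holds and imposing the inventory cap costs only $\gamma(k)$ (Lemma~\ref{lem:utility-hard}). Only afterward does the paper compare the capped mechanism $\M_{\hard}$ to $\M_{\rounds}$, spending the second $(1-1/e)$ factor on the fact that $\M_{\hard}$ may process several copies of the same buyer while $\M_{\rounds}$ processes at most one. In short, the i.i.d.\ round structure that drives the $\gamma(k)$ step lives on the copies side of the coupling, not on the once-per-buyer side; your reordering places the $\gamma(k)$ step after the once-per-buyer restriction, where that structure is already gone.
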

\noindent
{\bf Algorithm:} To construct an algorithm, we guess the total probability for each large payment ($q_j$), the utility from huge payments ($H$) and the revenue from the small payments ($\Rev$). The feasibility of a configuration can be checked using a covering LP; the details of the LP are given in the appendix. There are $2^{\poly(1/\epsilon)}$ configurations, and we select a feasible configuration with maximum expected utility. Further, the number of rounds can be limited to $O(n^2)$ with a small loss in the approximation factor. To establish our result, it remains to prove Lemma~\ref{lem:non-iid}.

\smallskip
\begin{proofof}{Lemma~\ref{lem:non-iid}}
Let $\I_{\copies}$ be an instance of the problem where each buyer is split into $L$ independent copies, the copies of buyer $i$ are $i1, i2, ..., iL$, and the valuation for each copy is drawn independently from $F_i$. Consider a mechanism $\M_{\soft}$ on $\I_{\copies}$ with $L$ iterations. The $l$th copy of every buyer is considered in the $l$th iteration; when buyer $il$ arrives, $\M_{\soft}$ discards him w.p. $(1-1/L)$, otherwise it processes him according to $h_{\rounds}(i, \cdot, \cdot)$.
In the following lemma, we lower bound the utility of $\M_{\soft}$; its proof follows from Lemma~\ref{lem:mass-general} and Lemma~\ref{lem:new}.
\begin{lemma}\label{lem:many_copies}
$\u(\M_{\soft}) \ge (1-1/e-\epsilon)\OPT$.
\end{lemma}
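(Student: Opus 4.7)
The plan is to mirror the two-step approximation strategy from Section~\ref{sec:multi-unit}, adapted to the soft $L$-round sequential structure of $\M_{\soft}$. The first step uses Lemma~\ref{lem:mass-general} to replace the correlated payments of $\M_{\OPTs}$ by independent ones matching only the aggregate probability mass at each price, and the second uses Lemma~\ref{lem:new} to argue that the three aggregate matches imposed by conditions (a)--(c) on $h_{\rounds}$ are already enough to preserve this independent utility up to a $(1-O(\epsilon))$ factor.

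First I would introduce the payment variables. Let $R_i$ be the (possibly correlated) payment received from buyer $i$ in $\M_{\OPTs}$. Because $\M_{\OPTs}$ is ex post IR with no positive transfers and satisfies the $k$-unit constraint, at most $k$ of the $R_i$ are nonzero in any realization, so $\OPT=\E{\u(\sum_i R_i)}=\E{\ut{k}(R_1,\ldots,R_n)}$. On the $\I_{\copies}$ side, define two collections of mutually independent single-value indicators indexed by $(i,j,l)$ with $l\in[L]$: $Y_{ijl}=p_j$ w.p.\ $\frac{1}{L}\sum_v f_i(v)h_{\rounds}(i,j,v)$ and $X_{ijl}=p_j$ w.p.\ $\frac{1}{L}\sum_v f_i(v)h_{\OPTs}(i,j,v)$, each $0$ otherwise. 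As $L\to\infty$ the probability that two different $Y_{ijl}$ (resp.\ $X_{ijl}$) sharing the same $(i,l)$ are simultaneously nonzero is $O(1/L)$, so $\u(\M_{\soft})=\E{\u(\sum_{i,j,l}Y_{ijl})}\ge \E{\ut{k}(\{Y_{ijl}\})}$, and an analogous identity holds for the $X$-based proxy mechanism.

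For each price $p_j$,
\[ \sum_{i,l}\Pr{X_{ijl}=p_j}\;=\;\sum_{i,v}f_i(v)h_{\OPTs}(i,j,v)\;=\;\sum_i\Pr{R_i=p_j}, \]
so Lemma~\ref{lem:mass-general} applies and gives $\E{\ut{k}(\{X_{ijl}\})}\ge (1-1/e)\OPT$. Both $\{X_{ijl}\}$ and $\{Y_{ijl}\}$ are independent sets whose total support probability is at most $k$ (by $\M_{\OPTs}$'s inventory constraint and condition (e), respectively), so Lemma~\ref{lem:new}---invoked with thresholds $\P_{\sm},\P_{\lrg},\P_{\hg}$ defined, as in Section~\ref{sec:multi-unit}, in terms of a common scale set by $\OPT$---rewrites each of $\E{\ut{k}(\{X_{ijl}\})}$ and $\E{\ut{k}(\{Y_{ijl}\})}$, up to a $(1\pm O(\epsilon))$ factor, as the sum of the huge-price expected utility plus $\u$ applied to the expected small-price revenue plus the random large-price sum.

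The heart of the argument is that these three ingredients agree between the $X$- and $Y$-proxies. The huge-price utility matches termwise via (c); the expected small-price revenue matches via (a); and the large-price sum $\sum_{p_j\in\P_{\lrg}}p_j\bigl(\sum_{i,l}\mathbf{1}[\,\cdot\,=p_j]\bigr)$ has matching distributions because for each $p_j\in\P_{\lrg}$ the Poisson-limit rate of the inner count, $\sum_{i,v}f_i(v)h_{\cdot}(i,j,v)$, is equalised by (b), and distinct $p_j$'s yield independent counts. Chaining the bounds,
\[ \u(\M_{\soft})\;\ge\;\E{\ut{k}(\{Y_{ijl}\})}\;\ge\;(1-O(\epsilon))\E{\ut{k}(\{X_{ijl}\})}\;\ge\;(1-O(\epsilon))\bigl(1-\tfrac{1}{e}\bigr)\OPT, \]
which after rescaling $\epsilon$ is the lemma. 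The delicate point to verify is the last step: namely, that the $(1\pm O(\epsilon))$ errors from the two invocations of Lemma~\ref{lem:new} combine into a single $(1-O(\epsilon))$ factor rather than compounding, and that a common threshold $\hat{p}$ can be chosen that simultaneously dominates $\u^{-1}(\E{\ut{k}(\cdot)}/\epsilon)$ for both proxies without inflating $\P_{\hg}$ enough to spoil the termwise matching of (c).
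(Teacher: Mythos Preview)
Your proposal is correct and follows exactly the approach the paper intends: the paper states only that the lemma ``follows from Lemma~\ref{lem:mass-general} and Lemma~\ref{lem:new}'', and you have spelled out precisely that two-step argument (mass redistribution via split/merge to pass from $\M_{\OPTs}$ to an independent proxy at cost $(1-1/e)$, then Lemma~\ref{lem:new} to reduce to the three aggregate matches (a)--(c)), mirroring the proof of Lemma~\ref{lem:mass-variant}. The technical caveat you flag about choosing a common huge-price threshold is real but minor, and the paper does not address it explicitly either.
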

To simplify notation, in the rest of the proof, we refer to the payment function of $\M_{\soft}$ by $h(\cdot,\cdot,\cdot)$. Further, let $k_{\expt}=\sum_{i,j,v}h_{\rounds}(i,j,v)f_i(v)$; note $k_{\expt}\le k$.

Observe that mechanisms $\M_{\rounds}$ and $\M_{\soft}$ are equivalent with two exceptions: (a) hard inventory constraint of $\M_{\rounds}$, and (b) $\M_{\soft}$ can process more than one copy of a buyer in a realization. We first address the issue of the inventory constraint. Using the correlation gap, we get that the expected number of allocations in $\M_{\soft}$ after first $k$ allocations is at most $k_{\expt}/e$. This alone is not sufficient to prove the lemma as $\u$ is not linear.
We note a crucial property of $\M_{\soft}$ in Lemma~\ref{lem:distribution-hard}, it establishes that {\em the revenue from any allocation in $\M_{\soft}$ has an identical distribution}. Let $\D_i$ be the distribution on the revenue from first $i$ allocations in $\M_{\soft}$.
\begin{lemma}\label{lem:distribution-hard}
As $L\rightarrow \infty$, we have $\Prob_{X_{i}\AC \D_i, X_{i-1}\AC \D_{i-1}}\left[(X_{i}-X_{i-1})=p_j\right]= \frac{\sum_{i,v}h(i,j,v)f_i(v)}{k_{\expt}}$.
\end{lemma}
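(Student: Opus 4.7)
My plan is to model $\M_{\soft}$ as a collection of $nL$ mutually independent Bernoulli trials indexed by slots $(l,i)$ with $l\in[L]$, $i\in[n]$, and to argue via Poisson approximation that in the limit $L\to\infty$ the sequence of allocation revenues becomes i.i.d.\ with the claimed distribution over $\{p_j\}$. Unwinding the mechanism's definition, slot $(l,i)$ independently produces an allocation at price $p_j$ with probability $\mu_{ij}/L$, where $\mu_{ij}:=\sum_v h(i,j,v)f_i(v)$, and no allocation with probability $1-\mu_i/L$ for $\mu_i:=\sum_j\mu_{ij}$. Order the slots lexicographically; in the coupled realization of $\M_{\soft}$, $X_m-X_{m-1}$ is the price collected at the $m$-th allocation (here $m$ renames the lemma's allocation index to avoid clash with the buyer index $i$). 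Letting $B_m$ denote the buyer at the $m$-th allocation and using that conditional on slot $(l,i)$ yielding that allocation the price is $p_j$ with probability $\mu_{ij}/\mu_i$, one obtains
\[
\Prob[X_m-X_{m-1}=p_j\mid N\ge m]=\sum_i \Prob[B_m=i\mid N\ge m]\cdot\frac{\mu_{ij}}{\mu_i},
\]
where $N$ is the total number of allocations in $\M_{\soft}$.

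The heart of the proof is the identity $\Prob[B_m=i\mid N\ge m]\to\mu_i/k_{\expt}$ as $L\to\infty$. Embed slot $(l,i)$ at time $t_{l,i}:=((l-1)n+i-1)/(nL)\in[0,1]$. For each fixed buyer $i$, the point process of buyer-$i$ allocations sits on buyer $i$'s $L$ slot-times with independent Bernoulli$(\mu_i/L)$ marks; as $L\to\infty$ it converges in distribution to a homogeneous Poisson process on $[0,1]$ of rate $\mu_i$ (the expected count on any $[a,b]$ is $(b-a)\mu_i$, while higher-order collisions contribute $O(1/L)$). The $n$ per-buyer processes are independent because distinct slots are independent, so their superposition is a Poisson process on $[0,1]$ of rate $k_{\expt}$ whose arrivals carry i.i.d.\ buyer-labels with $\Prob[\text{label}=i]=\mu_i/k_{\expt}$. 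Hence $\Prob[B_m=i\mid N\ge m]\to\mu_i/k_{\expt}$, and substituting back yields
\[
\Prob[X_m-X_{m-1}=p_j\mid N\ge m]\to\sum_i\frac{\mu_i}{k_{\expt}}\cdot\frac{\mu_{ij}}{\mu_i}=\frac{\sum_i\mu_{ij}}{k_{\expt}}=\frac{\sum_{i,v}h(i,j,v)f_i(v)}{k_{\expt}},
\]
exactly as claimed.

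The main obstacle is making the Poisson convergence rigorous in this elementary setting. If invoking point-process machinery feels heavy-handed, I would instead compute directly: expand $\Prob[B_m=i,N\ge m]=\sum_l(\mu_i/L)\cdot\Prob[\text{exactly }m{-}1\text{ allocations strictly before }(l,i)]$, apply a Le Cam-type bound to replace the inner Bernoulli-sum probability by the Poisson$(k_{\expt} t_{l,i})$ mass at $m{-}1$ with additive error $O(1/L)$, and turn the Riemann sum $\sum_l(1/L)g(t_{l,i})\to\int_0^1 g(t)\,dt$ into $\mu_i\int_0^1 e^{-k_{\expt} t}(k_{\expt} t)^{m-1}/(m-1)!\,dt$. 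The standard identity between the incomplete gamma function and Poisson tails rewrites this integral as $\Prob[\text{Poisson}(k_{\expt})\ge m]/k_{\expt}=\Prob[N\ge m]/k_{\expt}$, and dividing by $\Prob[N\ge m]$ gives $\mu_i/k_{\expt}$. Cross-buyer dependence within a single slot (prices at one slot being mutually exclusive) contributes only $O(1/L)$ everywhere and vanishes in the limit.
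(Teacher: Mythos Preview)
Your argument is correct, but it works considerably harder than the paper does. The paper's proof exploits a symmetry you did not use: in $\M_{\soft}$ on the copies instance $\I_{\copies}$, the \emph{rounds} $l=1,\ldots,L$ are i.i.d., because round $l$ touches only the fresh copies $(1l,\ldots,nl)$ with independent valuations and independent discard coins. Letting $Y_l$ denote the revenue collected in round $l$, the paper simply observes that (i) the $Y_l$ are identically distributed, (ii) as $L\to\infty$ each round has at most one allocation (two allocations in the same round has probability $O(n^2/L^2)$), and (iii) hence $\Prob[Y_l=p_j\mid Y_l\neq 0]\to\frac{\sum_{i,v}h(i,j,v)f_i(v)}{k_{\expt}}$, a quantity that does not depend on $l$. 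Finally, conditioning on the event ``the $m$-th allocation falls in round $l$'' is, by independence of rounds, the same as conditioning on $Y_l\neq 0$ together with an event measurable in $\{Y_{l'}\}_{l'\neq l}$; so the price at the $m$-th allocation is distributed as $Y_l\mid Y_l\neq 0$, independently of $m$.

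Your route instead fixes the buyer identity $B_m$ of the $m$-th allocation and proves $\Prob[B_m=i\mid N\ge m]\to\mu_i/k_{\expt}$ via Poisson superposition (or the Le Cam/Riemann-sum computation), then averages $\mu_{ij}/\mu_i$ over $i$. This is sound and yields explicit $O(1/L)$ error control, which the paper's argument only gestures at; it would also survive mild perturbations of the round structure where strict i.i.d.\ of rounds fails. The cost is that you invoke point-process convergence or an incomplete-gamma identity where the paper gets by with a two-line exchangeability argument. If you want to match the paper's economy, collapse your slot-level analysis to the round level: once you note that rounds are i.i.d.\ and single-allocation in the limit, the buyer-label decomposition and the Poisson machinery become unnecessary.
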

\begin{proof}
Let $Y_l$ be a random variable indicating the revenue made in round $l$. Clearly $Y_l$ and $Y_{l'}$ have the  same distribution for any $l,l'\le L$. Furthermore, as $L\rightarrow\infty$, conditioned on one allocation in an iteration, the probability of an additional allocation in the same iteration is (almost) $0$. Thus we get $\Pr{Y_l=p_j|Y_l\neq 0} = \frac{\sum_{i,v}h(i,j,v)f_i(v)}{k{\expt}}$.
Conditioned on $i$th allocation happening in round $l$, the probability that $i$th allocation has revenue $p_j$, is exactly $\Pr{Y_l=p_j|Y_l\neq 0}$. This proves the lemma.
\end{proof}

Consider a new mechanism $\M_{\hard}$ on $\I_{\copies}$ that is identical to $\M_{\soft}$ with an exception that it stops after $k$ allocations. We now bound its utility.
\begin{lemma}\label{lem:utility-hard}
As $L\rightarrow \infty$, $\u(\M_{\hard}) \ge \gamma(k)\u(\M_{\soft})$.
\end{lemma}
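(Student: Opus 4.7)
The plan is to exploit the i.i.d.\ structure of allocation revenues from Lemma~\ref{lem:distribution-hard} together with a concavity/decomposition argument, reducing the claim to an elementary Poisson calculation. First I would make the Poisson limit precise: as $L \to \infty$, the total number of allocations $N$ in $\M_{\soft}$ converges to a Poisson random variable with mean $\lambda := k_{\expt} \le k$ (the probability of more than one allocation in a single iteration is $O(1/L^2)$ and vanishes), and by Lemma~\ref{lem:distribution-hard} the successive allocation revenues $Y_1, Y_2, \ldots$ are i.i.d.\ draws from a distribution $\D$ on the price set. Writing $S_n := Y_1 + \cdots + Y_n$, the revenue of $\M_{\soft}$ is $S_N$ and the revenue of $\M_{\hard}$ is $S_{\min(N,k)}$, so defining $f(n) := \E{\u(S_n)}$ the claim reduces to showing $\E{f(\min(N,k))} \ge \gamma(k)\,\E{f(N)}$.

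Next I would verify that $f$ is non-decreasing, concave, and satisfies $f(0)=0$. The zero and monotonicity follow immediately from $\u(0)=0$, $Y_i \ge 0$, and $\u$ non-decreasing. For concavity, the increment $f(n+1) - f(n) = \E{\u(S_n + Y) - \u(S_n)}$ for a fresh $Y$ drawn from $\D$ independent of $S_n$ is non-increasing in $n$: $S_n$ is stochastically non-decreasing in $n$, so the diminishing-returns property of Lemma~\ref{lem:dominance} applies with $X = Y$, $Y_1 = S_{n+1}$, $Y_2 = S_n$.

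Any such $f$ admits a decomposition as a non-negative combination of hinge functions. Setting $a_n := f(n) - f(n-1) \ge 0$, the sequence $(a_n)$ is non-increasing, so $b_j := a_j - a_{j+1} \ge 0$ and $f = \sum_{j \ge 1} b_j\, f_j$ where $f_j(n) := \min(n, j)$. By linearity, and since a ratio of non-negative sums is at least the minimum of the componentwise ratios, it suffices to prove $\E{f_j(\min(N,k))} \ge \gamma(k)\,\E{f_j(N)}$ for every $j \ge 1$. For $j \le k$ we have $\min(\min(n,k), j) = \min(n, j)$, so both sides are equal. For $j > k$ we have $f_j(\min(n,k)) = \min(n, k)$, so the inequality becomes $\E{\min(N,k)} \ge \gamma(k)\,\E{\min(N,j)}$; since $\E{\min(N,j)} \le \E{N} = \lambda$, this further reduces to $\E{\min(N,k)} \ge \gamma(k)\,\lambda$.

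The final step is an elementary Poisson calculation. Using the identity $\sum_{n \ge m} n\,\Pr{N=n} = \lambda\,\Pr{N \ge m-1}$, one gets $\E{(N-k)^+} = \lambda\,\Pr{N \ge k} - k\,\Pr{N \ge k+1} \le \lambda\,\Pr{N = k}$, where the last step uses $k \ge \lambda$. Since $\Pr{N=k} = e^{-\lambda}\lambda^k/k!$ is increasing on $[0,k]$ and attains its maximum value $k^k/(k!\,e^k) = 1 - \gamma(k)$ at $\lambda = k$, we conclude $\E{\min(N,k)}/\lambda = 1 - \E{(N-k)^+}/\lambda \ge \gamma(k)$, closing the remaining case and hence the lemma. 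I expect the main obstacle to be the concavity step for $f$; once Lemma~\ref{lem:dominance} is invoked correctly, the hinge-function decomposition and the final Poisson inequality are routine.
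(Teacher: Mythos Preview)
Your proposal is correct and takes essentially the same approach as the paper: both use the i.i.d.\ structure of successive allocation revenues coming out of Lemma~\ref{lem:distribution-hard}, the concavity of $n\mapsto \E{\u(S_n)}$ via Lemma~\ref{lem:dominance}, and the bound $\E{(N-k)^+}\le \lambda(1-\gamma(k))$, which you prove explicitly whereas the paper only states it. Your hinge-function decomposition is a clean reorganization of the paper's direct argument with the decreasing increments $Z_i=f(i)-f(i-1)$ and survival probabilities $r_i=\Pr{N\ge i}$; note only that Lemma~\ref{lem:distribution-hard} as stated gives identical distributions, and the independence of the $Y_i$'s (and of $N$) is the extra consequence of the ``at most one allocation per iteration'' limit you already invoke.
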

\begin{proof}
As payments are non-negative, using Lemma~\ref{lem:distribution-hard}, we get $\D_i \succeq\D_{j}$ for each $i$ and $j<i$. The contribution to the utility from the $i$th allocation in $\M_{soft}$ is
$$\Ex_{X\AC \D_i, Y\AC \D_{i-1}}\left[\u(X)-\u(Y) \right] = \Ex_{X\AC \D_1, Y\AC \D_{i-1}}\left[\u(X+Y)-\u(Y) \right]$$
Let $Z_i$ denote the above quantity. Using stochastic dominance of $\D_{i}$ over $\D_{j}$ for every $j<i$ and Lemma~\ref{lem:distribution-hard}, we get that $Z_i\le Z_{j}$ for any $j<i$.

As the allocations in $\M_{\soft}$ are independent, and the expected number of allocations in $\M_{\soft}$ after first $k$ allocations can be bounded by $k_{\expt}(1-\gamma(k))$.
Let $r_1, r_2, ..., r_n$ be the probabilities of $1$st, $2$nd, ..., $n$th allocation in $\M_{\soft}$. We have $r_i\ge r_{>i}$, thus we get
\[ \begin{array}{rcl}
\u(\M_{\hard}) = \sum_{1\le i\le k} r_iZ_i \ge \gamma(k)\left(\sum_{1\le i\le k} r_iZ_i +\sum_{k+1\le i\le n} r_i Z_{k}\right) \ge \gamma(k)\left(\sum_{1\le i\le n} r_iZ_i\right) \ge \gamma(k)\u(\M_{\soft})
\end{array}\]
This proves the lemma.
\end{proof}

To bound the utility of $\M_{\rounds}$, we need to address one more issue: $\M_{\hard}$ can process more than one copy of a buyer. Let $\D_{i1}$ be the distribution on the revenue from all copies of first $i$ buyers in $\M_{\soft}$. Let $\D_{i2}$ be the distribution on the revenue from first $i$ buyers in $\M_{\rounds}$. As payments are non-negative, we have $\D_{i1} \succeq\D_{i2}$. Furthermore, for any fixed $i$, for each $l$, the distribution on the revenue from the $l$th allocation among buyer $i$'s copies is same. The expected number of copies of buyer $i$ processed in $\M_{\soft}$ is $1$. Using correlation gap, the expected number of rounds in which buyer $i$ is processed in $\M_{\hard}$ after first processing is $1/e$. Using stochastic dominance of $\D_{i1}$ over $\D_{i2}$, the expected loss in the utility can be bounded by a factor $1/e$. This completes the proof.
\end{proofof}
Now we give an improved result when $k\ge 1/\epsilon^3$. In a soft randomized sequential mechanism with payment function same as $\M_{\rounds}$,  if we discard each buyer independently w.p. $\epsilon$, then w.p. at least $(1-\epsilon)$, we do not run out of items. The $(1-1/e-O(\epsilon))$-approximation follows by the following property of concave functions.
\begin{lemma}\label{lem:prob-red}
Let $X$ be a random variable that takes value between $0$ and $\Rev$ for some
$\Rev > 0$. Then, $\E{\u(\Rev-X)}\geq \left(1 - \frac{\E{X}}{\Rev}\right)\left(\u(\Rev)-\u(0)\right)$ for any non-decreasing concave function $\u$.
\end{lemma}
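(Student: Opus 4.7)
The plan is to prove this by a single application of Jensen/concavity pointwise, followed by taking expectations. The key observation is that since $X \in [0,\Rev]$, the quantity $\Rev - X$ lies in $[0,\Rev]$ and admits the convex combination representation
\[
\Rev - X \;=\; \left(1 - \frac{X}{\Rev}\right)\cdot \Rev \;+\; \frac{X}{\Rev}\cdot 0,
\]
with coefficients in $[0,1]$ summing to $1$. This is the only nontrivial structural move in the argument.

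Next I would apply concavity of $\u$ to this convex combination pointwise (i.e., for every realization of $X$) to obtain
\[
\u(\Rev - X) \;\geq\; \left(1 - \tfrac{X}{\Rev}\right)\u(\Rev) \;+\; \tfrac{X}{\Rev}\,\u(0).
\]
Then I would take expectations on both sides, using linearity of expectation on the right, to get
\[
\E{\u(\Rev - X)} \;\geq\; \left(1 - \tfrac{\E{X}}{\Rev}\right)\u(\Rev) \;+\; \tfrac{\E{X}}{\Rev}\,\u(0).
\]

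Finally, I would invoke the paper's standing assumption $\u(0)=0$ (stated in the Preliminaries for the seller's utility, and also in Section~\ref{sec:bayes-nash} for buyers') to drop the $\u(0)$ term, yielding
\[
\E{\u(\Rev - X)} \;\geq\; \left(1 - \tfrac{\E{X}}{\Rev}\right)\u(\Rev) \;=\; \left(1 - \tfrac{\E{X}}{\Rev}\right)\bigl(\u(\Rev) - \u(0)\bigr),
\]
which is exactly the claimed inequality. There is no real obstacle here: the only subtlety is noticing that the factor $(1 - X/\Rev)$ is a valid convex weight precisely because of the range assumption on $X$, and that monotonicity of $\u$ is not needed — concavity together with $\u(0)=0$ suffices. (If one wanted to avoid relying on $\u(0)=0$, the stronger inequality $\E{\u(\Rev-X)} \geq \u(0) + (1 - \E{X}/\Rev)(\u(\Rev)-\u(0))$ would follow from the same argument, and the stated form is an immediate weakening.)
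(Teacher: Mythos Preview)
Your argument is correct: writing $\Rev - X$ as the convex combination $(1 - X/\Rev)\cdot\Rev + (X/\Rev)\cdot 0$, applying concavity pointwise, and taking expectations is exactly the right move, and you correctly flag that the standing assumption $\u(0)=0$ is what turns the resulting bound into the stated form. The paper itself does not supply a proof of this lemma --- it is stated as a basic property of concave functions and left to the reader --- so there is nothing to compare against, but your approach is the natural and intended one.
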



\newpage
\appendix
\section{Omitted Proofs from Section~\ref{sec:multi-unit}}
\begin{proofof}{Lemma~\ref{lem:submodular}}
It is obvious that $\ut{k}$ is monotone and symmetric. Let $x=(x_1,x_2\ldots x_n)$ and $y=(y_1,y_2\ldots y_n)$, and let $\max \{ x,y \} = (z_1,z_2\ldots z_n)$ and $\min \{ x,y \} = (z'_1,z'_2\ldots z'_n)$, such that $\forall 1\le i\le n$, $z_i = x_i, z'_i = y_i$ if $x_i \ge y_i$, and $z_i = y_i, z'_i = x_i$ if $x_i < y_i$. To show submodularity we need to show that for any $x,y$

$$\ut{k}(\max \{ x,y \}) + \ut{k}(\min \{ x,y \}) \leq \ut{k}(x) + \ut{k}(y)$$
Or equivalently, that for any $x\ge y$ ({\em i.e.} $x_i \ge y_i \forall i$), $1\le i\le n$, and $t \ge x_i\ge y_i$, if $x\wedge_i t= x_1\ldots x_{i-1},t,x_{i+1}\ldots$ and $y \wedge_i t = y_1\ldots y_{i-1},t,y_{i+1}\ldots$, then

$$\ut{k}(x\wedge_i t) - \ut{k}(x) \le \ut{k}(y \wedge_i t) - \ut{k}(y)\ .$$

We shall prove the latter statement. Let $I_k(x) = \sum_{j=1}^k \max_j (x)$. It is easy to see that $I_k(x\wedge_i t) - I_k(x) = \max\left\{0, t - \max\{x_i, \max_k(x)\}\right\} \le \max\left\{0, t - \max\{y_i, \max_k(y)\}\right\} = I_k(y \wedge_i t) - I_k(y)$, {\em i.e.} $I_k$ itself is submodular. Also, $I_k(x)\ge I_k(y)$. The statement now follows from noting that $\ut{k}(x) = \u(I_k(x))$, and that $\u(a+b) - \u(b) \le \u(a'+b') - \u(b')\ \forall a\le a', b\ge b'$, for any concave function $\u$.
\end{proofof}

\begin{proofof}{Lemma~\ref{lem:merge}}
The lemma follows from the fact that $\ut{k}$ is submodular -- in fact, it applies to any submodular function.
Let $\{p_1, p_2, ..., p_{\ell}\}$ be the set of different non-zero values realized by either $X_1$ or $X_2$ with non-zero probability. Let $\pi_i, \pi'_i$ be the probabilities that $X_1$ and $X_2$ realize to $p_i$, respectively. Fix a realization of random variables in $S\backslash\{X_1, X_2\}$, and let this realization be denoted by $\hat{z}$; in this case, we shall express $\ut{k}(S)$ as $\ut{k}(X_1,X_2,\hat{z})$ and $\ut{k}((S\setminus\{X_1,X_2\})\cup\{Y\})$ as $\ut{k}(Y,\hat{z})$.

\[ \begin{array}{l}
\E{\ut{k}\left( Y, \hat{z} \right)} - \E{\ut{k}\left(X_1, X_2 , \hat{z}\right)} \\
=\sum_i (\pi_i+\pi'_i)\ut{k}(p_i)-\sum_i\left(\pi_i(1-\sum_j\pi'_j)+\pi'_i(1-\sum_j\pi_j)\right)\ut{k}(p_i)-\sum_{i,j}\pi_i\pi'_j \ut{k}(p_i+p_j) \\
=\sum_i \left(\pi_i(\sum_j\pi'_j)+\pi'_i(\sum_j\pi_j)\right)\ut{k}(p_i)-\sum_{i,j}\pi_i\pi'_j\ut{k}(p_i+p_j)\\
=\sum_{i,j}\pi_i\pi'_j\left(\ut{k}(p_i)+\ut{k}(p_j)-\ut{k}(p_i+p_j)\right) \ge 0
\end{array}\]
The last inequality follows from the definition of submodularity. Since the above inequality holds for any $\hat{z}$, the lemma follows.
\end{proofof}

\begin{proofof}{Lemma~\ref{lem:split}}
This lemma holds for any monotone submodular function, including $\ut{k}$. If a variable $X_i\in S$ is split into $\{X_{i1}, X_{i2}, ..., X_{it}\}$, then the realization of $X_i$ can be simulated by $\sum_{1\le j\le t}X_{ij}$  with appropriate correlation among them. Thus with appropriate correlation among the variables in $S'$, we can ensure that $\E{\ut{k} (S')}=\E{\ut{k}(S)}$. When variables in $S'$ are made independent, its expected utility is at least $(1-\frac{1}{e})\E{\ut{k}(S)}$,   by the correlation gap bound in Lemma \ref{lem:correlation-gap}, since $\ut{k}$ is monotone and submodular.
\end{proofof}

\begin{proofof}{Lemma~\ref{lem:new}}
Our proof has 3 steps. Our first step is to separate the contribution of $S_{\hg}$ from the rest.
\begin{claim}
$\E{\ut{k}(S)}$ is approximated within a factor of $(1\pm O(\eps))$ by $\E{\ut{k}(S_{\sm} \cup S_{\lrg})} + \sum_{X_i \in S_{\hg}} \E{\u(X_i)}$.
\end{claim}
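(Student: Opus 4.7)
The plan is to sandwich $\E{\ut{k}(S)}$ within a multiplicative $(1\pm O(\eps))$ factor of the target expression. The upper bound will fall out of subadditivity with no loss, while the lower bound requires conditioning on the random subset of huge variables that realize. The crucial preliminary step is to bound the aggregate realization probability of the huge variables, $Q := \sum_{X_i \in S_{\hg}} \pi_i$.

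To bound $Q$, I would observe that whenever \emph{any} huge variable realizes, the top-$k$ sum is at least $\hat{p}$, so by the defining property of $\hat{p}$,
\[
\E{\ut{k}(S)} \;\ge\; \Pr{\exists X_i \in S_{\hg}:\ X_i \ne 0}\cdot \u(\hat{p}) \;\ge\; \Pr{\exists X_i \in S_{\hg}:\ X_i \ne 0}\cdot \E{\ut{k}(S)}/\eps.
\]
Hence $\Pr{\exists X_i \in S_{\hg}:\ X_i\ne 0}\le \eps$. Since the variables in $S_{\hg}$ are independent, this probability equals $1-\prod_{X_i \in S_{\hg}}(1-\pi_i) \ge 1 - e^{-Q} \ge Q/2$ for $Q\le 1$, so $Q \le 2\eps$.

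For the upper direction, I would apply two subadditivity facts: (i) the sum of the top-$k$ of $A\cup B$ is at most the sum of the top-$k$ of $A$ plus the sum of the top-$k$ of $B$, since each element of the former comes from one of the two sets; and (ii) $\u(a+b)\le \u(a)+\u(b)$ for $a,b\ge 0$, which follows from concavity with $\u(0)=0$ via $\u(a)\ge \tfrac{a}{a+b}\u(a+b)$. Together these give $\ut{k}(S)\le \ut{k}(S_{\sm}\cup S_{\lrg}) + \ut{k}(S_{\hg})\le \ut{k}(S_{\sm}\cup S_{\lrg}) + \sum_{X_i \in S_{\hg}} \u(X_i)$, and taking expectations yields the upper bound with no $\eps$ loss.

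For the lower direction, I would condition on the random set $H\subseteq S_{\hg}$ of huge variables that realize. Using independence of $S_{\sm}\cup S_{\lrg}$ from $S_{\hg}$,
\[
\E{\ut{k}(S)} \;=\; \Pr{H=\emptyset}\,\E{\ut{k}(S_{\sm}\cup S_{\lrg})} + \sum_{H \ne \emptyset} \Pr{H}\,\E{\ut{k}((S_{\sm}\cup S_{\lrg}) \cup \{p_i : i \in H\})}.
\]
The first term is at least $(1-Q)\,\E{\ut{k}(S_{\sm}\cup S_{\lrg})}$. For the second, monotonicity of $\ut{k}$ gives $\ut{k}((S_{\sm}\cup S_{\lrg}) \cup \{p_i:i\in H\}) \ge \u(\sum_{i \in H} p_i)$, and dropping all non-singleton $H$,
\[
\sum_{H\ne\emptyset}\Pr{H}\,\u\Big(\textstyle\sum_{i\in H} p_i\Big) \;\ge\; \sum_{X_i\in S_{\hg}} \pi_i \prod_{j\ne i}(1-\pi_j)\,\u(p_i) \;\ge\; (1-Q)\sum_{X_i\in S_{\hg}}\pi_i\,\u(p_i).
\]
Using $Q=O(\eps)$ then yields the required $(1-O(\eps))$-factor lower bound. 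The main delicate point is the lower bound: keeping only singleton $H$ loses nothing to leading order precisely because $Q=O(\eps)$, and this same bound on $Q$ also makes the ``no huge realizes'' factor $\Pr{H=\emptyset}\ge 1-Q$ nearly $1$, so both components of the target expression survive with $(1-O(\eps))$ multiplicative error.
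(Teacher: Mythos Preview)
Your argument is correct and follows essentially the same route as the paper. Both proofs use subadditivity of $\ut{k}$ (via concavity with $\u(0)=0$) for the lossless upper bound, and for the lower bound both first establish $\Pr{\text{some huge variable realizes}}\le\eps$ from $\u(\hat p)\ge \E{\ut{k}(S)}/\eps$, then split into the event ``no huge realizes'' (giving $(1-O(\eps))\E{\ut{k}(S_{\sm}\cup S_{\lrg})}$) and the singleton events ``exactly $X_i$ realizes'' (giving $(1-O(\eps))\sum_{X_i\in S_{\hg}}\pi_i\u(p_i)$). The paper packages the latter slightly differently, writing $\E{\ut{k}(S)}=\E{\ut{k}(S)-\ut{k}(S_{\hg})}+\E{\ut{k}(S_{\hg})}$ and lower-bounding each piece, but the content is identical.

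One small slip: your intermediate inequality $\ut{k}\big((S_{\sm}\cup S_{\lrg})\cup\{p_i:i\in H\}\big)\ge \u\big(\sum_{i\in H}p_i\big)$ is not true when $|H|>k$, since then $\ut{k}$ only picks the top $k$ huge values. This is harmless because in the very next line you discard all non-singleton $H$, and for $|H|=1$ the bound $\ut{k}(\cdot)\ge \u(p_i)$ is certainly valid; just state the singleton case directly rather than going through the general $H$.
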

\begin{proof}
It is easy to upper bound the expected utility: $\E{\ut{k}(S)} \le \E{\ut{k}(S_{\sm} \cup S_{\lrg})} + \E{\ut{k}(S_{\hg})} \le \E{\ut{k}(S_{\sm} \cup S_{\lrg})} + \sum_{X_i \in S_{\hg}} \E{\u(X_i)}$, since $\u$ is concave.

For the lower bound, note that $\Pr{\ut{k}(S_{\hg}) > 0} \le \eps$, by Markov's inequality, since $\ut{k}(S_{\hg})>0$ implies $\ut{k}(S) \ge \E{\ut{k}(S)}/\eps$. Further, the probability that some particular $X_i \in S_{\hg}$ is non-zero, while all other variables in $S_{\hg}$ are zero, is at least $\pi_i (1-\eps)$.
So we have
\begin{equation*}
\begin{aligned}
\E{\ut{k}(S)} &=  \E{\ut{k}(S) - \ut{k}(S_{\hg})} + \E{\ut{k}(S_{\hg})}\\
&\ge  \Pr{\ut{k}(S_{\hg}) = 0} \E{\ut{k}(S_{\sm} \cup S_{\lrg}) \mid \ut{k}(S_{\hg}) = 0} + \E{\ut{k}(S_{\hg})} \\
&\ge (1-\eps) \E{\ut{k}(S_{\sm} \cup S_{\lrg})} +  \sum_{X_i\in S_{\hg}}  \pi_i (1-\eps)\u(p_i)\\
&=   (1-\eps) \E{\ut{k}(S_{\sm} \cup S_{\lrg})} + (1-\eps) \sum_{X_i\in S_{\hg}}  \E{\u(X_i)}
\end{aligned}
\end{equation*}
\end{proof}

\noindent

Without loss of generality, we may assume that $1/\eps$ is an integer. Our second step establishes that for $k>1/\eps^4$, one may simply assume that $k=\infty$.

\begin{claim}
Let $S' = S_{\sm} \cup S_{\lrg}$. If $k> 1/\eps^4$, then $(1+\eps) \E{\ut{k}(S')} \ge \E{\ut{k(1+\eps)}(S')} \ge (1-\eps) \E{\u(S')}$.
\end{claim}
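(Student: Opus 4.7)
The plan is to handle the two inequalities separately; the first is a pointwise algebraic estimate, the second uses concentration of the number of non-zero summands.

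For the first inequality, $(1+\eps)\E{\ut{k}(S')} \ge \E{\ut{k(1+\eps)}(S')}$, I would argue pointwise. Writing $I_m(S')$ for the sum of the top $m$ values of $S'$, the $\eps k$ additional summands in $I_{k(1+\eps)}(S')$ beyond the top $k$ are each at most the $(k+1)$-th largest value, which is in turn at most the average $I_k(S')/k$ of the top $k$. Hence $I_{k(1+\eps)}(S') \le (1+\eps)I_k(S')$ pointwise. Since $\u$ is concave with $\u(0)=0$, we have $\u(\alpha t) \le \alpha\u(t)$ for every $\alpha \ge 1$, giving $\u(I_{k(1+\eps)}(S')) \le (1+\eps)\u(I_k(S'))$ pointwise; taking expectations yields the claim.

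For the second inequality, $\E{\ut{k(1+\eps)}(S')} \ge (1-\eps)\E{\u(S')}$, let $N$ count the non-zero $X_i$'s in $S'$, a sum of independent Bernoullis with mean $\mu = \sum_{X_i \in S'}\pi_i$. The hypothesis $\sum_{X_i \in S}\pi_i \le k$ gives $\mu \le k$, and the definition of $S_{\lrg}$ forces $\mu > 1/\eps^4$. A multiplicative Chernoff bound then gives $\Pr{N > k(1+\eps)} \le \exp(-\Omega(\eps^2\mu)) \le \exp(-\Omega(1/\eps^2))$, which is super-polynomially small in $1/\eps$. On the event $\{N \le k(1+\eps)\}$ every non-zero $X_i$ is captured in the top $k(1+\eps)$, so $\ut{k(1+\eps)}(S') = \u(T) = \u(S')$, where $T := \sum_{X_i \in S'} X_i$. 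It therefore suffices to control the bad-event contribution $\E{\u(S')\mathbf{1}[N > k(1+\eps)]}$.

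To bound this contribution I would use the pointwise inequality $\u(S') \le N\u(\hat{p})$, which follows from $T \le N\hat{p}$ (every non-zero $X_i \in S'$ is below $\hat{p}$) together with concavity and $\u(0)=0$. So the contribution is at most $\u(\hat{p})\,\E{N\mathbf{1}[N > k(1+\eps)]}$, and integrating the Chernoff tail gives $\E{N\mathbf{1}[N > k(1+\eps)]} = O(k\exp(-\Omega(1/\eps^2)))$. To match this against $\eps \E{\u(S')}$, I would lower bound $\E{\u(S')}$ using the $S_{\lrg}$ summands alone: $T \ge p^* N_{\lrg}$, and $N_{\lrg}$ concentrates around $\mu_{\lrg} \ge 1/\eps^4$, giving $\E{\u(S')} = \Omega(\u(p^*/\eps^4))$. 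The sublinearity estimate $\u(\hat{p})/\u(p^*) \le \hat{p}/p^*$ (a consequence of concavity and $\u(0)=0$), combined with the definition $\u(\hat{p}) \ge \E{\ut{k}(S)}/\eps$, then closes the accounting.

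The main obstacle is this last accounting step: although $\u(\hat{p})$ may be large relative to $\u$ at the scale of $p^*/\eps^4$, it is only polynomially large in $\hat{p}$ by the sublinearity of $\u$, while the bad-event tail factor $\exp(-\Omega(1/\eps^2))$ is super-polynomially small in $1/\eps$. Thus the product $\u(\hat{p})\cdot k\exp(-\Omega(1/\eps^2))$ is comfortably dominated by $\eps \E{\u(S')}$, completing the proof of the second inequality.
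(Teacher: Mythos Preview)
Your proof of the first inequality is correct and is exactly the paper's argument spelled out: the paper just asserts the pointwise bound $(1+\eps)\ut{k}(S')\ge \ut{k(1+\eps)}(S')$, and you supply the computation via $I_{k(1+\eps)}(S')\le (1+\eps)I_k(S')$ and sublinearity of $\u$.

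For the second inequality your route diverges from the paper's and the final accounting step does not close. You bound the bad-event contribution by $\u(\hat p)\cdot O\!\left(k\exp(-\Omega(1/\eps^2))\right)$ and then claim this is dominated by $\eps\,\E{\u(S')}$ because ``$\u(\hat p)$ is only polynomially large while the tail is super-polynomially small in $1/\eps$.'' But $\hat p$ is a \emph{free} parameter in Lemma~\ref{lem:new}: it is only required to satisfy $\hat p \ge \u^{-1}(\E{\ut{k}(S)}/\eps)$, so $\u(\hat p)$ can be made arbitrarily large and is in no sense a polynomial function of $1/\eps$. The sublinearity estimate $\u(\hat p)/\u(p^*)\le \hat p/p^*$ does not help either, since nothing in the hypotheses bounds $\hat p/p^*$ in terms of $\eps$. (Separately, your lower bound $\E{\u(S')}=\Omega(\u(p^*/\eps^4))$ presumes $\sum_{X_i\in S_{\lrg}}\pi_i>1/\eps^4$, which need not hold when $p^*=0$; the claim's only hypothesis is $k>1/\eps^4$.) Concretely: take $\u$ linear, all $X_i\in S'$ equal to $1$ with total mass $k$, and $\hat p$ enormous; your upper bound on the bad contribution blows up while the true bad contribution and $\E{\u(S')}$ stay bounded. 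The crude step $\u(S')\le N\u(\hat p)$ is where you throw away too much.

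The paper's argument is shorter and, crucially, \emph{scale-free}: it never invokes $\hat p$ or $p^*$ for this inequality. It uses Chebyshev (with $\mathrm{Var}(N)\le \mu\le k$ and $k>1/\eps^4$) to get $\Pr{N>k(1+\eps)}\le \eps$, and then bounds the \emph{loss} $\u(S')-\ut{k(1+\eps)}(S')$ on the bad event directly by a quantity comparable to $\E{\u(S')}$ itself, so that multiplying by $\Pr{N>k(1+\eps)}\le\eps$ finishes. Your Chernoff bound is stronger than needed; the missing idea is to compare the loss to $\E{\u(S')}$ without passing through the extraneous scale $\u(\hat p)$.
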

\begin{proof}
The first inequality trivially follows from the definition of $\ut{k}$, since $(1+\eps) \ut{k}(S') \ge \ut{k(1+\eps)}(S')$ on every realization.

Since $k> 1/\eps^4$ and $\sum_{X_i\in S'} \pi_i \le k$, so the probability of the event that the number of non-zero variables in a realization exceeds $k(1+\eps)$, is at most $\eps$. This follows directly from Chebyshev's inequality. Conditioned on this event, the expected contribution to $\u(S')$ from variables that are positive but does not contribute to $\ut{k(1+\eps)}(S')$, is at most $\E{\u(S')}$. This is because the realizations of these {\em excess positive variables} is independent of the said event. So the difference between $\E{\u(S')}$ and $\E{\ut{k(1+\eps)}(S')}$ is at most $\eps \E{\u(S')}$, hence the second inequality.
\end{proof}

Our third claim, combined with the first claim, completes the proof of Lemma \ref{lem:new}.
\begin{claim}
Let $S' = S_{\sm} \cup S_{\lrg}$. Then $\E{\ut{k}(S')}$ is approximated within a factor of $(1\pm O(\eps))$ by $\E{\u\left(\E{S_{\sm}} + \sum_{X_i\in S_{\lrg}} X_i\right)}$.
\end{claim}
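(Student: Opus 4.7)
My plan is to reduce to the preceding claim and then add a concentration step for $Y_{\sm}:=\sum_{X_i\in S_{\sm}} X_i$. First observe that if $k\le 1/\eps^4$ then the definition of $p^*$ forces $p^*=0$, so $S_{\sm}=\emptyset$ and the asserted approximation reduces to $\E{\ut{k}(S_{\lrg})}\approx \E{\u(\sum_{X_i\in S_{\lrg}} X_i)}$, which I would handle separately using a Chebyshev bound on the count of positive large summands together with the cap $\u(\hat p)\le \E{\ut{k}(S)}/\eps$ to show that the top-$k$ truncation can discard at most an $O(\eps)$ fraction of the expected utility. In the main regime $k>1/\eps^4$, the preceding claim yields $\E{\ut{k}(S')}=(1\pm O(\eps))\,\E{\u(Y_{\sm}+Y_{\lrg})}$ where $Y_{\lrg}:=\sum_{X_i\in S_{\lrg}} X_i$ is independent of $Y_{\sm}$, so it suffices to prove $\E{\u(Y_{\sm}+Y_{\lrg})}=(1\pm O(\eps))\,\E{\u(\E{S_{\sm}}+Y_{\lrg})}$.

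The easy direction $\E{\u(Y_{\sm}+Y_{\lrg})}\le \E{\u(\E{S_{\sm}}+Y_{\lrg})}$ is Jensen's inequality, applied conditionally on $Y_{\lrg}$ and using concavity of $\u$ in the independent variable $Y_{\sm}$. For the matching lower bound I split on the magnitude of $\E{Y_{\sm}}$. If $\E{Y_{\sm}}\ge p^*/\eps^3$, then since every summand of $Y_{\sm}$ is bounded above by $p^*$ we get $\Var{Y_{\sm}}\le p^*\,\E{Y_{\sm}}$, so Chebyshev gives $Y_{\sm}\ge (1-\eps)\E{Y_{\sm}}$ with probability at least $1-\eps$. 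The elementary fact that concavity together with $\u(0)=0$ implies $\u(ct)\ge c\,\u(t)$ for $c\in[0,1]$ then gives $\u(Y_{\sm}+Y_{\lrg})\ge (1-\eps)\,\u(\E{S_{\sm}}+Y_{\lrg})$ on the concentration event. Because the concentration event depends only on $Y_{\sm}$ and is thus independent of $Y_{\lrg}$, the complementary event contributes at most $\eps\,\E{\u(\E{S_{\sm}}+Y_{\lrg})}$, yielding the required $(1-O(\eps))$ ratio.

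If instead $\E{Y_{\sm}}<p^*/\eps^3$, the definition of $p^*$ guarantees $\E{Y_{\lrg}}\ge p^*/\eps^4$, so $\E{Y_{\sm}}\le \eps\,\E{Y_{\lrg}}$. Here I would apply Chebyshev not to $Y_{\lrg}$ itself but to the count $N_{\lrg}$ of positive summands in $S_{\lrg}$: since $\E{N_{\lrg}}>1/\eps^4$ and $\Var{N_{\lrg}}\le \E{N_{\lrg}}$, one has $N_{\lrg}\ge (1-\eps)\E{N_{\lrg}}$ with probability at least $1-\eps$. Combined with the uniform lower bound $p_i\ge p^*$ for all $X_i\in S_{\lrg}$, this forces $Y_{\lrg}\ge (1-\eps)p^*/\eps^4 \ge (1-\eps)\E{S_{\sm}}/\eps$ on the good event, and then concavity plus monotonicity give $\u(\E{S_{\sm}}+Y_{\lrg})\le (1+O(\eps))\u(Y_{\lrg})\le (1+O(\eps))\u(Y_{\sm}+Y_{\lrg})$ as required. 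The $O(\eps)$-probability bad event is absorbed using the crude upper bound $\u(\E{S_{\sm}}+Y_{\lrg})\le \u(\E{S_{\sm}}+k\hat p)$ and the cap $\u(\hat p)\le \E{\ut{k}(S)}/\eps$.

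\textbf{Main obstacle.} The trickiest step is the small-$\E{Y_{\sm}}$ subcase. In contrast to the small variables, individual large values can be as big as $\hat p$, which may be arbitrarily larger than $p^*$, so Chebyshev applied directly to the sum $Y_{\lrg}$ is too weak to guarantee $Y_{\lrg}\approx \E{Y_{\lrg}}$. The key conceptual move is therefore to concentrate the \emph{count} $N_{\lrg}$ instead, whose variance is controlled by its mean alone, and then convert count-concentration into a one-sided lower bound on $Y_{\lrg}$ using the uniform floor $p^*$ on entries of $S_{\lrg}$. Bookkeeping the failure event so as not to lose more than an $O(\eps)$ factor in the multiplicative ratio, and sewing the two subcases together without overlap, is the delicate part of the argument.
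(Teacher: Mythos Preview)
Your overall architecture matches the paper's proof almost step for step: reduce via the preceding claim to comparing $\E{\u(Y_{\sm}+Y_{\lrg})}$ with $\E{\u(\E{S_{\sm}}+Y_{\lrg})}$; get one direction by Jensen; for the other direction split on whether $\E{Y_{\sm}}\gtrless p^*/\eps^3$; in the large case apply Chebyshev to $Y_{\sm}$; in the small case apply Chebyshev to the \emph{count} $N_{\lrg}$ (not the sum) to get the one-sided lower bound $Y_{\lrg}\ge (1-\eps)p^*/\eps^4$. You correctly identify this last move as the crux, and it is exactly what the paper does.

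There is, however, one genuine gap: your absorption of the bad event $\bar A$ in the small-$\E{Y_{\sm}}$ subcase does not work as written. On $\bar A$ you bound the \emph{value} $\u(\E{S_{\sm}}+Y_{\lrg})\le \u(\E{S_{\sm}}+k\hat p)$ and then invoke $\u(\hat p)\le \E{\ut{k}(S)}/\eps$. But $\u(\E{S_{\sm}}+k\hat p)\le 2k\,\u(\hat p)$, so after multiplying by $\Pr{\bar A}\le\eps$ you are left with a term of order $k\,\E{\ut{k}(S)}$, which is far too large to absorb into a $(1+O(\eps))$ factor. The paper avoids this by bounding the \emph{increment} rather than the value: concavity gives, pointwise,
\[
\u\!\left(\E{S_{\sm}}+Y_{\lrg}\right)-\u(Y_{\lrg})\ \le\ \u(\E{S_{\sm}})-\u(0)\ =\ \u(\E{S_{\sm}})\ \le\ \u\!\left(p^*/\eps^3\right),
\]
and since $p^*/\eps^3\le (1-\eps)p^*/\eps^4$ (for $\eps\le 1/2$) this is at most $\u((1-\eps)p^*/\eps^4)\le \E{\u(Y_{\lrg})}/(1-\eps)$. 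Hence the $\bar A$-contribution to the \emph{difference} $\E{\u(\E{S_{\sm}}+Y_{\lrg})}-\E{\u(Y_{\lrg})}$ is at most $\eps\cdot O(\E{\u(Y_{\lrg})})$, which is exactly the order needed. Replacing your crude value bound by this increment bound fixes the argument.

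A smaller remark on the $k\le 1/\eps^4$ side-case: the paper simply declares it trivial once $S_{\sm}=\emptyset$, whereas you propose to re-run a Chebyshev-on-count argument. Be aware that for small $k$ (say $k=1$) there is no concentration of $N_{\lrg}$, so your sketched route does not go through there either; the paper does not supply a separate argument for this regime.
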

\begin{proof}
Also, if $k\le 1/\eps^4$, then $S_{sm}=\emptyset$ by definition, so the claim is trivially true; so we assume that $k>1/\eps^4$. Then, by our second claim above, $\E{\ut{k}(S')}$ is approximated within a factor of $(1 + 2\eps)$ by $\E{\u(S')}$. So it suffices to show that $\E{\u(S')}$ is approximated within a factor of $(1\pm O(\eps))$ by $\E{\u\left(\E{S_{\sm}} + \sum_{X_i\in S_{\lrg}} X_i\right)}$ .
Again, the upper bound is easy to obtain: $\E{\u(S')} \le \E{\u\left(\E{S_{\sm}} + \sum_{X_i\in S_{\lrg}} X_i\right)}$ by concavity of $\u$.

For the lower bound, let us first consider the case that $\E{S_{\sm}} \ge p^*/\eps^3$. Then $\Var{S_{\sm}} \le \eps^3 \E{S_{\sm}}$, since $X_i < p^*$ for all $X_i \in S_{\sm}$. Then by Chebyshev's inequality, we have $\sum_{X_i\in S_{\sm}} X_i > (1-\eps) \E{S_{\sm}}$ with probability at least $1-\eps$. So $$\E{\u(S')}\ge (1-\eps) \E{\u\left((1-\eps)\E{S_{\sm}} + \sum_{X_i\in S_{\lrg}} X_i\right)} \ge (1-2\eps) \E{\u\left(\E{S_{\sm}} + \sum_{X_i\in S_{\lrg}} X_i\right)}\ .$$

Now let us assume that $\E{S_{\sm}} < p^*/\eps^3$. In this case, we show below that
$$\E{\u(S')} \le \E{\u\left(\E{S_{\sm}} + \sum_{X_i\in S_{\lrg}} X_i\right)} \le (1+O(\eps)) \E{\u(S_{\lrg})} \le (1+O(\eps)) \E{\u(S')}\ .$$
Note that with probability at least $(1 - \eps)$, the number of non-zero variables in $S_{\lrg}$ is at least $(1-\eps)/\eps^4$ (by Chebyshev's inequality on the indicator variables of $X_i$), which implies that $\sum_{X_i\in S_{\lrg}} X_i \ge (1-\eps)p^*/\eps^4$ -- let this event be denoted by $A$, so that $\Pr{A}\ge (1-\eps)$, and let $\bar{A}$ denote the complementary event. Let $\E{S_{\sm}} = \Rev_{\sm}$.

\begin{equation*}
\begin{aligned}
&\E{\u\left(\Rev_{\sm} + \sum_{X_i\in S_{\lrg}} X_i\right) - \u(S_{\lrg})}\\
&= \E{\u\left(\Rev_{\sm} + \sum_{X_i\in S_{\lrg}} X_i\right) - \u(S_{\lrg}) \mid A} \Pr{A} + \E{\u\left(\Rev_{\sm} + \sum_{X_i\in S_{\lrg}} X_i \right) - \u(S_{\lrg}) \mid \bar{A}} (1- \Pr{A}) \\
&\le (1-\eps)\left(\u\left(\Rev_{\sm} + \frac{(1-\eps)p^*}{\eps^4}\right) - \u\left(\frac{(1-\eps)p^*}{\eps^4}\right)\right)  +
\epsilon \left(\u(\Rev_{\sm} + 0) - \u(0)\right)  \\
&\le (1-\eps) \frac{\eps^4 \R_{\sm}}{(1-\eps)p^*} \u\left(\frac{(1-\eps)p^*}{\eps^4}\right) + \eps \u(\Rev_{\sm})\\
&\le \eps  \u\left(\frac{(1-\eps)p^*}{\eps^4}\right) + \eps \u\left( \frac{p^*}{\eps^3} \right) \le 2\eps \u\left(\frac{(1-\eps)p^*}{\eps^4}\right)\\
\end{aligned}
\end{equation*}
Both the first and second inequalities follow from concavity of $\u$, while the last inequality holds for $\eps\le \frac{1}{2}$. Finally, observe that $\E{\u(S_{\lrg})} \ge (1-\eps) \u\left(\frac{(1-\eps)p^*}{\eps^4}\right)$.
\end{proof}
\end{proofof}

\subsection{Detailed Algorithm for the DSIC Mechanism}
From Lemma~\ref{lem:mass-variant}, it suffices to guess (a) the threshold $p^*$ separating large and small prices, (b) the total probability of sale $q_k$ for each large price $p_k$, (c) the expected revenue from small prices $\Rev$, and (d) the expected contribution to utility, say $h$, from huge prices. The large prices (and hence the threshold $p^*$ as well) can be discretized so that they come only from the set $\u^{-1} (\eps^5 t \OPT)$ for any integer $0\le t\le \frac{1}{\eps^6}$. This is because the loss in utility due to discretization, in every realization,  is at most $\eps^5 \OPT$ per sale at a large price. This is immediate if the large price sale is the first sale to occur in the realization, and concavity implies that the loss can only be smaller if some revenue has already been collected in the realization. So the total loss in expected utility due to this discretization is at most $\eps^5\OPT$ times the expected number of large prices sales. Since there are $O(\frac{1}{\eps^4})$ large price sales in expectation, the loss is $O(\eps \OPT)$.

Similarly, $h$ and $\u(\Rev)$ can be guessed up to a multiple of $\eps\OPT$. We shall guess the total sale probability for each large price to the nearest multiple of $\epsilon^{8}$. The total error over all price buckets is less  than $\eps^2$ and so the loss in the utility due to rounding of the probability masses is bounded  by $\epsilon\OPT$. All these guesses define a {\em configuration}. The number of configurations is bounded by $2^{poly(1/\eps)}$.
 For each configuration we check if there exists an SPM satisfying the configuration, using the covering linear program (LP) below. In this, the variable $x_{ij}$ denotes the probability that buyer $i$ is offered price $p_j$.
\[ \begin{array}{rcll}
\sum_{i}\left(1-F_{i}(p_j)\right)x_{ij} &\ge& q_j  &\forall p_j\in \P_{\lrg}\\
\sum_{i,p_j\in \P_{\sm}} \left(1-F_{i}(p_j)\right) p_j x_{ij} &\ge& \Rev  \\
\sum_{i,p_j\in \P_{\hg}} \left(1-F_{i}(p_{j})\right)\u(p_j) x_{ij} &\ge& H \\
\sum_{i,j} \left(1-F_{i}(p_j)\right)x_{ij} &\le& k \\
\sum_{j} x_{ij}  &\le& 1 &\forall j \\
x_{ij} &\in& [0,1] &\forall i,j
\end{array}\]
Any feasible solution to this linear program gives a distribution of prices for each buyer, which gives us an SPM that satisfies the guessed configuration. For the configuration that matches $\M_{\OPTs}$, any feasible solution of the program gives an SPM that is a $(1-1/e-\eps)$-approximation. So our algorithm is to iterate through all the configurations, generate at most one randomized SPM for each configuration, and then simulate them to find out, with high confidence, the best among these SPMs.

\section{BIC Mechanisms for a Risk Averse Seller and Risk Averse IID Buyers}\label{sec:iid}
In this section, we consider the problem of designing a BIC mechanism for a risk averse seller when buyers are risk averse, each buyer's valuation is drawn IID from a known distribution $F$ and their utility functions $\U$ are identical. The following theorem summarizes our result in this setting.
\begin{theorem}\label{thm:iid}
There exists a polynomial time algorithm to compute a truthful-in-expectation mechanism with expected utility $(1-\epsilon)\left(1-\frac{1}{e}\right)\gamma(k)\OPT$, where $\OPT$ is the expected utility of the optimal BIC mechanism.
\end{theorem}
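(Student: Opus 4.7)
The proof follows the three-layered structure of Theorem~\ref{thm:non-iid} but uses the IID symmetry to collapse the auxiliary instance $\I_{\copies}$, which is precisely where the second factor of $(1-1/e)$ is lost in the non-IID analysis. By a standard symmetrization argument (uniformly randomize the buyer labels before executing $\M_{\OPTs}$; since buyers are IID, the joint distribution of the total revenue is preserved, hence so is the expected utility of the seller), we may assume the utility-optimal BIC mechanism has a payment function $h_{\OPTs}(j,v)$ symmetric across buyers.

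Following Section~\ref{sec:bayes-nash}, we guess a configuration $(\{q_j\}_{p_j\in\P_{\lrg}},\Rev,H)$ with the same discretization (yielding $2^{\poly(1/\epsilon)}$ candidates), and for each we solve the symmetric covering LP in variables $\{x(j,v)\}\in[0,1]$:
\[\begin{array}{rcll}
n\sum_v x(j,v)\,f(v) &\ge& q_j & \forall p_j\in\P_{\lrg},\\
n\sum_{v,\,p_j\in\P_{\sm}} p_j\, x(j,v)\,f(v) &\ge& \Rev,\\
n\sum_{v,\,p_j\in\P_{\hg}} \u(p_j)\, x(j,v)\,f(v) &\ge& H,\\
n\sum_{j,v} x(j,v)\,f(v) &\le& k,\\
\sum_j x(j,v) &\le& 1 & \forall v,\\
\sum_j \U(v-p_j)\, x(j,v) &\ge& \sum_j \U(v-p_j)\, x(j,v') & \forall v,v'.
\end{array}\]
The last constraint is the symmetric per-buyer BIC condition, which $h_{\OPTs}$ already satisfies, so the configuration induced by $h_{\OPTs}$ is feasible. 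Let $h$ denote any feasible solution. The mechanism $\M^{IID}$ processes the $n$ buyers in a fixed order, applying $h$ to each buyer (allocate with probability $\sum_j h(j,v)$, and conditional on allocation charge $p_j$ with probability $h(j,v)/\sum_l h(l,v)$), stopping once $k$ items have been sold.

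For the analysis, let $\M_{\soft}^{IID}$ be the soft counterpart of $\M^{IID}$ with no inventory constraint. Since $h$ is symmetric and buyers are IID, the $n$ per-buyer revenues in $\M_{\soft}^{IID}$ are IID, with common marginal matching that of each buyer in $\M_{\OPTs}$ by construction of the LP. Applying Lemma~\ref{lem:mass-general} and Lemma~\ref{lem:new} directly to these $n$ per-buyer revenues of $\M_{\OPTs}$ (possibly correlated) versus those of $\M_{\soft}^{IID}$ (independent) yields $\u(\M_{\soft}^{IID}) \ge (1-1/e-\epsilon)\OPT$. Crucially we never introduce $\I_{\copies}$: we compare $n$ variables against $n$ variables rather than $nL$ against $n$, and this is exactly where the extra $(1-1/e)$ factor is saved compared to Theorem~\ref{thm:non-iid}. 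For the inventory step, IID exchangeability directly yields the analog of Lemma~\ref{lem:distribution-hard}: the per-position payments $X_1,\ldots,X_n$ in $\M_{\soft}^{IID}$ are IID, so the payment at the $i$-th allocation has the same marginal distribution for every $i$. Combined with the correlation-gap bound that the expected number of allocations beyond the $k$-th is at most $k_{\expt}(1-\gamma(k))$, the concavity argument of Lemma~\ref{lem:utility-hard} carries over verbatim to give $\u(\M^{IID}) \ge \gamma(k)\,\u(\M_{\soft}^{IID})$. Chaining yields the claimed $(1-\epsilon)(1-1/e)\gamma(k)\OPT$ bound.

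The main delicate point is establishing TIE of $\M^{IID}$ under the hard inventory constraint. Fix any buyer $i$. Because each buyer is processed at most once and in a fixed order, the event that buyer $i$ is processed is determined entirely by the valuations and random bits associated with buyers in positions $1,\ldots,i-1$ and is thus independent of $i$'s own report. Conditional on being processed, $i$'s allocation probability and payment distribution depend on his report $v'$ only through $h(\cdot,v')$. Hence the expected utility of reporting $v'$ while holding true value $v$ factors as $q_i\sum_j \U(v-p_j)\,h(j,v')$ for some constant $q_i\in[0,1]$ that is independent of $v'$, and the symmetric BIC constraint enforced in the LP immediately implies that $v'=v$ is optimal. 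Iterating over the polynomially many configurations and returning the $\M^{IID}$ with the best simulated expected utility completes the polynomial-time algorithm.
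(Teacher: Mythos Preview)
Your approach is essentially the paper's: symmetrize $\M_{\OPTs}$, pass to a soft symmetric sequential mechanism losing only a single $(1-1/e)$ via the correlation gap, impose the hard inventory constraint to lose $\gamma(k)$ (using that in the IID symmetric setting each allocation's payment has the same marginal, so the rounds construction is unnecessary), and recover polynomial time by guessing a coarse configuration and solving a symmetric covering LP with the per-buyer BIC constraint. Your TIE argument is also correct and matches the paper's reasoning that conditioning on ``buyer $i$ is reached'' is independent of $i$'s report.

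One imprecision worth flagging: you assert that the per-buyer revenues in $\M_{\soft}^{IID}$ have ``common marginal matching that of each buyer in $\M_{\OPTs}$ by construction of the LP.'' This is not true---the LP only matches the coarse footprint (per-large-price mass, small-price expected revenue, huge-price expected utility), not the full marginal, so Lemma~\ref{lem:mass-general} does not apply directly between $\M_{\OPTs}$ and $\M_{\soft}^{IID}$. The correct two-step argument (which the paper makes explicit via Lemma~\ref{lem:mass-variant-iid}) is: first pass from $\M_{\OPTs}$ to the soft mechanism with the \emph{same} payment function $h_{\OPTs}$, where marginals genuinely match and the correlation gap yields $(1-1/e)$; then compare that soft mechanism to $\M_{\soft}^{IID}$ via Lemma~\ref{lem:new} on both sides, losing only $(1\pm O(\eps))$ since both are independent, symmetric, and share the coarse footprint. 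Your invocation of Lemma~\ref{lem:new} alongside Lemma~\ref{lem:mass-general} suggests you had this in mind, but as written the intermediate step is elided and the claim about matching marginals is false.
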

In the rest of the section, we prove Theorem~\ref{thm:iid}. We call a mechanism to be symmetric if, for all $i\neq i',j,v$, we have $h(i,j,v)=h(i',j,v)$. We first establish an important property of symmetric mechanisms in the following lemma.
\begin{lemma}\label{lem:symmetric}
There exists a utility optimal BIC mechanism that is symmetric for all buyers.
\end{lemma}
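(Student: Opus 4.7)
The plan is a standard symmetrization argument exploiting the IID structure of the buyers. Let $\M^*$ be any utility-optimal BIC mechanism, with payment functions $h^*(\cdot,\cdot,\cdot)$ and $g^*(\cdot,\cdot,\cdot)$. For each permutation $\sigma \in S_n$, define $\M_\sigma$ to be the mechanism obtained from $\M^*$ by relabeling buyer $i$ as position $\sigma(i)$; formally, $h_\sigma(i,j,v) = h^*(\sigma(i), j, v)$ and $g_\sigma(i,j,v) = g^*(\sigma(i), j, v)$. Let $\M^{\text{sym}}$ be the randomized mechanism that first draws $\sigma$ uniformly at random from $S_n$ and then runs $\M_\sigma$. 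I would verify three properties of $\M^{\text{sym}}$.

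First, each $\M_\sigma$ is BIC. Because all buyers share the same value distribution $F$ and the same utility function $\U$, the other buyers' reports in $\M_\sigma$ (after relabeling) are distributed identically to those faced by position $\sigma(i)$ in $\M^*$; hence the BIC inequality for buyer $i$ in $\M_\sigma$ is exactly the BIC inequality for position $\sigma(i)$ in $\M^*$, which holds by hypothesis. Since the BIC constraints are linear in $(h, g)$ and $\M^{\text{sym}}$ is the uniform convex combination of the $\M_\sigma$'s, $\M^{\text{sym}}$ is BIC as well.

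Second, the payment function of $\M^{\text{sym}}$ is $h^{\text{sym}}(i, j, v) = \frac{1}{n!} \sum_{\sigma \in S_n} h^*(\sigma(i), j, v) = \frac{1}{n} \sum_{i'=1}^n h^*(i', j, v)$, using that for fixed $i$, each value $i' \in \{1,\ldots,n\}$ appears as $\sigma(i)$ in exactly $(n-1)!$ permutations. The right-hand side is independent of $i$, and the analogous identity holds for $g^{\text{sym}}$, so $\M^{\text{sym}}$ is symmetric as required. Third, $\u(\M^{\text{sym}}) = \u(\M^*)$: for each fixed $\sigma$, the revenue $\rev(\M_\sigma, \v)$ (over random $\v$ and mechanism randomness) has the same distribution as $\rev(\M^*, \v)$, because $\v$ is exchangeable under the IID assumption, so $\u(\M_\sigma) = \u(\M^*)$; averaging over $\sigma$ then gives $\u(\M^{\text{sym}}) = \u(\M^*)$, and $\M^{\text{sym}}$ inherits optimality from $\M^*$.

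The proof has no major obstacle, but one subtlety is worth highlighting: because $\u$ is concave and applied to the total revenue, merely preserving expected revenue would not suffice to preserve expected utility. What saves us is that the full \emph{distribution} of total revenue (not just its mean) is invariant under relabeling of the buyers, precisely because the buyers are IID in both their valuation distribution and their utility function. The same observation is what ensures that each $\M_\sigma$ remains BIC after relabeling.
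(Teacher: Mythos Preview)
Your proof is correct and follows essentially the same symmetrization argument as the paper: take an optimal mechanism, average it over all relabelings of the buyers, and check that the result is symmetric, BIC, and has the same expected utility. The paper presents this more tersely (it just asserts $\u(\M')=\u(\M_{\OPTs})$ as ``clear''), whereas you explicitly articulate why the concave utility is preserved---namely, that exchangeability of the IID valuations makes the entire revenue \emph{distribution} invariant under relabeling, not merely its mean---which is a worthwhile clarification.
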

\begin{proof}
If the optimal BIC mechanism $\M_{\OPTs}$ is not symmetric, then consider a new mechanism $\M'$ that for every set of reported bids, renames buyers randomly, and runs $\M_{\OPTs}$ on the these bids. Clearly $\u(\M_{\OPTs})=\u(\M')$. To prove the lemma, it suffices to establish that $\M'$ is a BIC mechanism. Fix any buyer $i$, his utility in $\M'$ for reporting his true valuation $v$ is
$$\frac{1}{n}\left(\sum_{j,l} \left(\U(v-p_j)h(l,j,v)\right)\right)$$
and his utility when he reports $v'$ instead of true valuation $v$ is
$$\frac{1}{n}\left(\sum_{j,l} \left(\U(v-p_j)h(l,j,v')\right)\right)$$
From the incentive compatibility property of $\M_{\OPTs}$, we get
$$\sum_{j} \left(\U(v-p_j)h(l,j,v)\right)\ge \sum_{j} \left(\U(v-p_j)h(l,j,v')\right)$$
for every $l$. Summing over $l$ proves the lemma.
\end{proof}

So from here onwards, we can assume that $\M_{\OPTs}$ is symmetric; in the optimal mechanism, let $h(j,v)$ be the probability that the buyer gets an item for payment $p_j$ when his valuation is $v$. Thus the revenue of $\M_{\OPTs}$ can be represented using the sum of identical (and correlated) random variables $X_1, X_2, ..., X_n$, where for each $i$, $\Prob[X_i=p_j] = \sum_v h(j,v) f(v)$. Let $X'_1, X'_2, ..., X'_n$ be a set of random variables such that the marginal distribution of $X_i$ and $X'_i$ is same for each $i$. Let $\M_{\soft}$ be a soft randomized sequential mechanism whose allocation and payment functions are same as $\M_{\OPTs}$. Using correlation gap, we get
$$\u\left(\M_{\soft}\right) = \E{\u\left( \sum_i X'_i\right)}\ge \left(1-\frac{1}{e}\right) \OPT$$
Let $\M$ be the mechanism obtained by applying the inventory constraint on $\M_{\soft}$, i.e. it stops when items run out. We note that $\M$ is still a BIC mechanism: even though the probability of allocations and payments for a buyer for valuation $v$ is not same in $\M_{\soft}$ and $\M$, these values remain same conditioned on reaching a buyer without running out of inventory.
We now bound the expected utility of $\M$ in the following lemma, its proof is similar to the proof of Lemma~\ref{lem:utility-hard}.
\begin{lemma}
$\u\left(\M\right)\ge \gamma(k)\M_{\soft}$.
\end{lemma}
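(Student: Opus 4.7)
The plan is to run the same chain of inequalities as in the proof of Lemma~\ref{lem:utility-hard}; the IID/symmetric setting makes this cleaner because there is no need to handle multiple copies of the same buyer. By Lemma~\ref{lem:symmetric}, I may take $\M_{\OPTs}$ (and hence $\M_{\soft}$) to be symmetric, so each buyer's (allocation, payment) pair in $\M_{\soft}$ is IID across buyers, and different buyers are processed independently. Let $N$ denote the random total number of allocations of $\M_{\soft}$, and order the successful allocations by the buyer index in which they occur; let $R_1, \ldots, R_N$ denote the corresponding payments. Conditional on $N = m$, the values $R_1, \ldots, R_m$ are IID with the common distribution of a single buyer's payment given allocation. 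Writing $S_i = R_1 + \cdots + R_i$, $s_i = \Pr{N \ge i}$, and $Z_i = \E{\u(S_i) - \u(S_{i-1})}$, telescoping yields $\u(\M_{\soft}) = \sum_{i \ge 1} s_i Z_i$, and since $\M$ aborts after $k$ allocations, $\u(\M) = \sum_{i=1}^{k} s_i Z_i$.

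Next I would check that $Z_i$ is non-increasing in $i$. Since payments are non-negative, $S_i \succeq S_{i-1}$, while $R_{i+1}$ is independent of $S_i$ and distributed as $R_1$. Applying Lemma~\ref{lem:dominance} with $X = R_{i+1}$, $Y_1 = S_i$, $Y_2 = S_{i-1}$, and the concave $\u$ gives $Z_{i+1} \le Z_i$. Separately, since allocations to distinct buyers in $\M_{\soft}$ are independent and $\sum_{i \ge 1} s_i = \E{N} = k_{\expt} \le k$, the same tail estimate invoked in Lemma~\ref{lem:utility-hard} yields $\sum_{i > k} s_i = \E{(N-k)^+} \le (1-\gamma(k)) k_{\expt}$, equivalently $(1-\gamma(k)) \sum_{i \le k} s_i \ge \gamma(k) \sum_{i > k} s_i$.

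Combining these two ingredients exactly as in Lemma~\ref{lem:utility-hard} --- through the chain $\sum_{i \le k} s_i Z_i \ge Z_k \sum_{i \le k} s_i \ge \frac{\gamma(k)}{1-\gamma(k)} Z_k \sum_{i > k} s_i \ge \frac{\gamma(k)}{1-\gamma(k)} \sum_{i > k} s_i Z_i$ --- I obtain $\u(\M) = \sum_{i=1}^{k} s_i Z_i \ge \gamma(k) \sum_{i \ge 1} s_i Z_i = \gamma(k)\, \u(\M_{\soft})$. The only step requiring attention beyond citing Lemma~\ref{lem:utility-hard} is the IID structure of the payment sequence $R_1,\ldots,R_N$, which follows from Lemma~\ref{lem:symmetric} together with the per-buyer-independent processing of a soft randomized sequential mechanism; no multi-copy loss arises here, which is precisely why the IID bound is tighter than the non-IID bound by the $(1-1/e)$ factor.
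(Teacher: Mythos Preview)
Your proposal is correct and follows essentially the same approach as the paper, which in fact proves this lemma only by reference (``its proof is similar to the proof of Lemma~\ref{lem:utility-hard}''). Your $s_i$ and $Z_i$ coincide with the paper's $r_i$ and $Z_i$ from that proof, and your telescoping/monotonicity/tail-bound chain matches the paper's chain of inequalities; your explicit observation that in the symmetric IID setting the payment sequence $R_1,\ldots,R_N$ is IID and independent of $N$ (so no multi-copy correction is needed) is exactly the simplification the paper alludes to.
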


Thus it suffices to guess the values $h(j,v)$ for each $j,v$ to compute a $(1-1/e)^2$-approximate mechanism. However such algorithm will require exponential time. We need the following lemma to get a polynomial time algorithm.
\begin{lemma}\label{lem:mass-variant-iid}
Consider two randomized and symmetric sequential pricing mechanism $\M_1$ and $\M_2$ with inventory constraint of $k$, and let $h_1(v,j)$ and  $h_2(v,j)$ be their payment functions, such that (a) $\sum_{v,p_j\in \P_{\sm}} p_j h_1(j,v)f(v) = \sum_{v,p_j\in \P_{\sm}} p_j h_2(j,v)f(v) $, (b) for each $p_j\in\P_{\lrg}$, $\sum_{v} h_1(j,v)f(v)= \sum_{v} h_2(j,v)f(v)$, and (c) $\sum_{v,p_j \in \P_{\hg}} \u(p_j)h_1(j,v)f(v) =\sum_{v,p_j \in \P_{\hg}} \u(p_j)h_2(j,v)f(v)$.
\end{lemma}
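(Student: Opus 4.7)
\textbf{Proof plan for Lemma~\ref{lem:mass-variant-iid}.} The plan is to show that the expected utility of any symmetric sequential mechanism with inventory $k$ in the IID setting is, up to a $(1 \pm O(\epsilon))$ factor, a functional of only the three coarse parameters appearing in conditions (a)--(c); since $\M_1$ and $\M_2$ agree on these parameters, their utilities will also agree to within that factor.

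First I would pass to the soft versions $\M_{1,\soft}$ and $\M_{2,\soft}$. Because each mechanism is symmetric and buyers are i.i.d.\ from $f$, the per-buyer payments $S^m = \{X_1^m, \ldots, X_n^m\}$ of $\M_{m,\soft}$ are mutually independent and identically distributed, with $\Pr{X^m = p_j} = \sum_v h_m(j,v) f(v)$. Conditions (a), (b), (c) on the payment functions $h_m$ then translate directly into matching conditions on this common marginal: the two mechanisms agree on (a') $\E{X^m \mathbf{1}_{\{X^m \in \P_{\sm}\}}}$, (b') $\Pr{X^m = p_j}$ for each $p_j \in \P_{\lrg}$, and (c') $\E{\u(X^m) \mathbf{1}_{\{X^m \in \P_{\hg}\}}}$.

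Second, I would invoke Lemma~\ref{lem:new} on the i.i.d.\ collection $S^m$. Since each $\M_m$ is feasible with inventory $k$, we have $\sum_i \Pr{X_i^m > 0} \le k$, so the hypotheses of Lemma~\ref{lem:new} hold with the thresholds $p^*$ and $\hat p$ induced by the partition of prices into $\P_{\sm}, \P_{\lrg}, \P_{\hg}$. Applying the lemma approximates $\E{\ut{k}(S^m)}$ to within a $(1 \pm O(\epsilon))$ factor by
\[
\sum_{X_i^m \in S_{\hg}^m} \E{\u(X_i^m)} \;+\; \E{\u\left(\E{S_{\sm}^m} + \sum_{X_i^m \in S_{\lrg}^m} X_i^m\right)},
\]
and every term in this expression depends only on the matched quantities in (a'), (b'), (c'). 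Hence $\E{\ut{k}(S^1)}$ and $\E{\ut{k}(S^2)}$ agree to within $(1 \pm O(\epsilon))$.

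Third, I would relate $\u(\M_m)$ to $\E{\ut{k}(S^m)}$. Concavity of $\u$ and non-negativity of payments give $\u(\M_m) \le \E{\ut{k}(S^m)}$, since the hard mechanism collects at most $k$ positive payments and the top $k$ of $S^m$ maximizes their concave utility. For the lower bound, I would mimic Lemmas~\ref{lem:distribution-hard} and~\ref{lem:utility-hard} in the IID symmetric setting: by exchangeability, each of the first $k$ allocations in $\M_m$ has an identical payment distribution, so the loss from switching between the top-$k$ payments and the first-$k$-allocated payments can be absorbed into the $\gamma(k)$ slack already present in those lemmas. Combining the three steps yields $\u(\M_1) \ge (1 - O(\epsilon))\,\u(\M_2)$, and the reverse holds by symmetry. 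The main obstacle is the third step: in the sequential hard mechanism, the first $k$ buyers to be allocated are not in general those with the largest payments, so showing that their expected concave utility is comparable to $\E{\ut{k}(S^m)}$ requires carefully combining exchangeability with the stochastic-dominance bound of Lemma~\ref{lem:dominance}.
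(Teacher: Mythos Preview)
Your first two steps are essentially the paper's route: by symmetry, condition (b) forces the per-buyer marginal on large payments to match exactly, and Lemma~\ref{lem:new} then handles the small and huge payments via the aggregates in (a) and (c). The paper's sketch says exactly this --- ``if two symmetric mechanisms have the same probability for each payment, then the revenue distribution per buyer remains the same, and we do not lose a factor of $(1-1/e)$ in the split-and-merge operation.''

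The gap is in your third step. Routing through $\E{\ut{k}(S^m)}$ and then invoking the $\gamma(k)$-type comparison of Lemmas~\ref{lem:distribution-hard}--\ref{lem:utility-hard} does \emph{not} give $(1-O(\epsilon))$; it gives at best $\u(\M_1)\ge \gamma(k)(1-O(\epsilon))\,\u(\M_2)$, because the sandwich $\gamma(k)\,\E{\ut{k}(S^m)}\le \u(\M_m)\le \E{\ut{k}(S^m)}$ is tight up to $\gamma(k)$ in general. That extra $\gamma(k)$ is not ``already present'' --- the single $\gamma(k)$ in Theorem~\ref{thm:iid} is spent \emph{before} this lemma is invoked (in passing from $\M_{\soft}$ to the hard mechanism $\M$), so losing another here would yield $\gamma(k)^2(1-1/e)$ overall.

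The fix is to drop the detour through $\ut{k}$ entirely. Both $\M_1$ and $\M_2$ are symmetric IID hard mechanisms with the \emph{same} sequential structure, so $\u(\M_m)$ is a fixed functional of the common per-buyer marginal $D_m$ (and of $n,k$). Since $D_1$ and $D_2$ agree exactly on $\P_{\lrg}$, you should run the three claims inside the proof of Lemma~\ref{lem:new} directly on the hard-mechanism utility (separating the huge-payment events by Markov, and replacing the small-payment contribution by its expectation via concentration), rather than on $\ut{k}$. That comparison is between two objects of the same type and incurs only the $(1\pm O(\epsilon))$ error, with no $\gamma(k)$ or $(1-1/e)$ loss.
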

Its proof is similar to the proof of Lemma~\ref{lem:mass-variant}. We note an important step in the proof: if two symmetric mechanisms have the same probability for each payment, then the revenue distribution per buyer remains same, and we do not lose a factor of $(1-1/e)$ in the approximation ratio during the split-and-merge operation. 

Thus per buyer values of probability for each large payment ($q_j$), the utility from huge payments ($H$) and the revenue from the small payments ($\Rev$); the feasibility of a configuration can be checked using an LP.

\section{Omitted Details from Section~\ref{sec:bayes-nash}}
\paragraph{BIC Mechanisms for a Risk Averse Sellers: LP to check the feasibility of a configuration}

\[ \begin{array}{rcll}
\sum_{j}\U_i(v-p_j)h_{\rounds}(i,j,v) &\ge& \sum_{j} \U_i(v-p_j)h_{\rounds}(i,j,v') &\forall i,v,v'\\
\sum_{i,v} h_{\rounds}(i,j,v)f_i(v) &\ge& q_j &\forall p_j\in \P_{\lrg}\\
\sum_{v,p_j\in \P_{\sm}} p_j f_i(v)h_{\rounds}(i,j,v) &\ge& \Rev  \\
\sum_{v,p_j\in \P_{\hg}} \u(p_j)f_i(v) h_{\rounds}(i,j,v) &\ge& H \\
\sum_{i,j,v}h_{\rounds}(i,j,v)f_i(v) &\le& k \\
\sum_{j} h_{\rounds}(i,j,v) &=& 1 &\forall i,v\\
h_{\rounds}(i,j,v) &\in& [0,1] &\forall i,j,v \\
\end{array}\]

\section{Mechanisms for a Risk Neutral Seller and Risk Averse Buyers}\label{sec:risk-neutral-seller}
In this section, we design mechanisms for a risk neutral seller when buyers are risk averse. The following theorem summarizes our result; it shows that (a) optimal BIC mechanism can be designed for risk averse seller and (b) the gap between an optimal truthful-in-expectation and an optimal BIC mechanisms is bounded by $\gamma(k)$.
\begin{theorem}\label{thm:bic_risk_neutral}
Given risk averse buyers and a risk neutral seller, let $\OPT$ be the expected revenue of the optimal BIC mechanism. Then (a) there exists a polynomial time algorithm to compute a truthful-in-expectation mechanism with expected revenue $\gamma(k)\OPT$, and (b) there exists a polynomial time algorithm to compute a BIC mechanism with expected revenue $\OPT$.
\end{theorem}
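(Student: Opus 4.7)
The plan is to establish the two parts independently. For (b), I formulate a linear program over interim allocation/payment probabilities whose constraints encode BIC and generalized Border feasibility, and solve it directly. For (a), I adapt the multi-round construction of $\M_{\rounds}$ from Section \ref{sec:bayes-nash}, exploiting the linearity of a risk-neutral seller's expected revenue to eliminate the $(1-1/e)$ correlation-gap loss that appeared in the risk-averse-seller case.

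For part (b), introduce variables $h(i,j,v)\ge 0$ and $g(i,j,v)\ge 0$ for every buyer $i$, every valuation $v$ in the discretized support of $F_i$, and every price $p_j$ in a polynomial-size price set $\P$, subject to $\sum_j (h(i,j,v)+g(i,j,v))\le 1$. The BIC inequality stated in the preliminaries is linear in these variables for each triple $(i,v,v')$, and the objective is the linear functional $\sum_{i,j,v} p_j(h(i,j,v)+g(i,j,v)) f_i(v)$. To guarantee that the interim allocation probabilities $x_i(v)=\sum_j h(i,j,v)$ come from a genuine ex-post allocation rule respecting the $k$-unit cap, impose a general form of Border's inequality: for every subset $S$ of $(\text{buyer},\text{value})$ pairs,
\[
\sum_{(i,v)\in S} x_i(v) f_i(v) \;\le\; \E{\min\!\bigl(k,\,\bigl|\{i:(i,v_i)\in S\}\bigr|\bigr)}.
\]
These exponentially many constraints admit a polynomial-time separation oracle (a parametric-flow style argument as used in \cite{A11,AFHHM12}), so the LP is solvable in polynomial time. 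From any optimal solution, the constructive proof of Border's theorem produces an ex-post allocation rule consistent with the $x_i(v)$; payments are then drawn conditionally from $h(i,\cdot,v)$ when an item is allocated and from $g(i,\cdot,v)$ otherwise. The resulting BIC mechanism attains the LP value, which upper bounds $\OPT$.

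For part (a), start from the optimal BIC mechanism $\M_{\OPTs}$ and build the soft randomized sequential mechanism $\M_{\soft}$ with $L\to\infty$ rounds exactly as in Section \ref{sec:bayes-nash}: in each round, every unprocessed buyer is independently considered with probability $1/L$, and if considered, his outcome is drawn from the payment functions of $\M_{\OPTs}$. Because the seller is risk neutral, expected revenue is a linear functional of outcomes, so $\rev(\M_{\soft}) = \rev(\M_{\OPTs}) = \OPT$ by linearity of expectation, with no correlation-gap loss whatsoever. Then impose the hard inventory cap to obtain the TIE mechanism $\M_{\rounds}$, keeping any no-allocation payments (which do not consume inventory) as in $\M_{\OPTs}$. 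Conditional on being considered, each buyer's outcome distribution is exactly that of $\M_{\OPTs}$, so the mechanism remains truthful in expectation.

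The main obstacle is controlling the revenue loss caused by truncating at $k$ allocations. By the identical-distribution property of Lemma \ref{lem:distribution-hard}, every successful allocation in $\M_{\soft}$ contributes the same expected revenue $Z$, so if $N$ denotes the number of allocations in $\M_{\soft}$ (a sum of independent Bernoullis with total mean $k_{\expt}\le k$), then $\rev(\M_{\rounds}) = \rev_{\rm noalloc} + Z\cdot\E{\min(N,k)}$ and $\OPT = \rev_{\rm noalloc} + Z\cdot\E{N}$, where $\rev_{\rm noalloc}$ is the expected revenue from no-allocation payments (untouched by the inventory cap). A Poissonization argument shows $\E{\min(N,k)}/\E{N}\ge 1 - k^k/(k!\,e^k) = \gamma(k)$, with Poisson$(k)$ the extremal distribution, so $\rev(\M_{\rounds}) \ge \gamma(k)\,\OPT$. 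Finally, $L$ is discretized to $O(n^2)$ rounds with negligible loss, yielding a polynomial-time algorithm.
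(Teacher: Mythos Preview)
Your treatment of part (b) is essentially the paper's: an LP over the payment functions $h,g$ with linear BIC constraints and Border-type feasibility, solved via the separation oracle of \cite{AFHHM12}, and implemented using the constructive side of Border's theorem.

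For part (a), however, your route diverges from the paper's and contains a genuine gap. The paper does \emph{not} invoke the multi-round machinery of Section~\ref{sec:bayes-nash}. It instead solves a single LP that maximizes $\sum_{i,j,v} p_j\big(h(i,j,v)+g(i,j,v)\big)f_i(v)$ subject only to the BIC constraints and the aggregate constraint $\sum_{i,j,v} h(i,j,v)f_i(v)\le k$ (no Border inequalities). It then orders buyers in decreasing order of
\[
t_i \;=\; \frac{\sum_{j,v} p_j\big(h(i,j,v)+g(i,j,v)\big)f_i(v)}{\sum_{j,v} h(i,j,v)f_i(v)}
\]
and processes them sequentially, stopping when items run out. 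Letting $r_i$ be the probability of stock-out before buyer $i$, one has $\sum_i r_i\cdot\mathtt{prob}_i\le (1-\gamma(k))k$, and since $t_1\ge t_2\ge\cdots$, a rearrangement gives revenue at least $\gamma(k)$ times the LP value, which in turn upper-bounds $\OPT$. No knowledge of $\M_{\OPTs}$ is needed.

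Your multi-round argument, as stated, does not yield $\gamma(k)$. In the soft mechanism you describe (each round considers every \emph{unprocessed} buyer with probability $1/L$), a buyer is ever processed with probability $1-(1-1/L)^L\to 1-1/e$, so its revenue is $(1-1/e)\,\OPT$, not $\OPT$. Moreover, Lemma~\ref{lem:distribution-hard} is proved for $\M_{\soft}$ on the copies instance $\I_{\copies}$, where rounds are i.i.d.; in your ``unprocessed'' construction the rounds are not i.i.d., and the identical-revenue-per-allocation property need not hold (a two-buyer example with different allocation probabilities and payments already breaks it). If you switch to the $\I_{\copies}$ construction to repair these two points, you must then bound the loss from processing multiple copies of the same buyer when passing to $\M_{\rounds}$; in Section~\ref{sec:bayes-nash} that step costs an extra $(1-1/e)$, and you give no argument for avoiding it here. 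Finally, retaining no-allocation payments $g$ after inventory is exhausted is not obviously truthful in expectation, since a buyer reached after stock-out faces a different outcome lottery than $(h_{\OPTs},g_{\OPTs})$ prescribe. The paper's greedy-ordering approach sidesteps all of these complications.
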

\begin{proof}
We begin by proving the first part of the theorem.
We note that the objective of the following LP is at least $\OPT$. The first constraint in the LP ensures that the mechanism is truthful, the second constraint ensures that exactly one allocation-payment event realizes in expectation for a buyer's bid, and the third constraint is the inventory constraint.
\[ \mbox{Maximize } \sum_{i,j,v} \left(g(i,j,v)p_jf_i(v)+h(i,j,v)p_jf_i(v)\right)\]
\[ \begin{array}{rcll}
\sum_{j}\left(\U_i(v-p_j)h(i,j,v) + \U_i(-p_j)g(i,j,v)\right) &\ge& \sum_{j} \left(\U_i(v-p_j)h(i,j,v') + \U_i(-p_j)g(i,j,v')\right) &\forall i,v,v'\\
\sum_{j}\left(h(i,j,v) +g(i,j,v)\right) &=& 1 &\forall i,v \\
\sum_{i,j,v} h(i,j,v)f_i(v) &\le& k\\
h(i,j,v), g(i,j,v) &\in& [0,1] &\forall i,j,v \\
\end{array}\]
We construct a mechanism from the LP solution using Algorithm~\ref{alg:bic}.
\begin{algorithm}
\caption{Truthful in Expectation Mechanism for a Risk Neutral Seller and Risk Averse Buyers}\label{alg:bic}.
\begin{algorithmic}
\STATE Arrange the buyers in the decreasing order of $\frac{\sum_{j,v} \left(g(i,j,v)p_jf_i(v)+h(i,j,v)p_jf_i(v)\right)}{\sum_{j,v} \left(g(i,j,v)f_i(v)+h(i,j,v)f_i(v)\right)}$.
\FOR{each buyer $i$ in the sequence}
\IF{there is at least one item remaining}
\STATE Ask buyer $i$ for his valuation. Let $v$ be his reported valuation.
\STATE Allocated an item to him w.p. $\sum_{j} h(i,j,v)$, and he makes a payment of $p_j$ w.p.  $\frac{h(i,j,v)}{\sum_{l} h(i,l,v)}$ in this case. When he is not given an item, he makes a payment of $p_j$ w.p. $\frac{g(i,j,v)}{\sum_{l} g(i,l,v)}$.
\ENDIF
\ENDFOR
\end{algorithmic}
\end{algorithm}
We note that the mechanism is truthful-in-expectation: the position of buyer $i$ in the sequence is independent of his valuation, and conditioned on reaching him without running out of items, he allocation and payment probabilities are equal to the LP solution. We now establish the approximation bound.

For the purpose of analysis, we replace the payment from buyer $i$ with a deterministic quantity ${\tt payment_i}=\sum_{j,v} \left(g(i,j,v)p_jf_i(v)+h(i,j,v)p_jf_i(v)\right)$. Let ${\tt prob_i}=\sum_{j,v} \left(g(i,j,v)f_i(v)+h(i,j,v)f_i(v)\right)$, i.e. the probability of allocation to buyer $i$. Further, let $k_{\expt}=\sum_{i,j,v} h(i,j,v)f_i(v)$. W.l.o.g., assume that the mechanism processes buyers in order $1,2,3,..., n$. Let $r_i$ be the probability that $k$ items are allocated before buyer $i$ arrives. We have $r_1\le r_2\le ....\le r_n$, and
$$\sum_i r_i\times {\tt prob_i} \le (1-\gamma(k))\times k$$
Let $t_i=\frac{{\tt payment_i}}{{\tt prob_i}}$. We get $t_1\ge t_2\ge ... \ge t_n$. The revenue of the mechanism is
\begin{align*}
&\sum_i (1-r_i)\times {\tt payment_i}\\
&=\sum_i(1-r_i)\times  t_i\times {\tt prob_i}\\
&\ge \gamma(k)\sum_i {\tt payment_i}
\end{align*}
where the last inequality follows using $t_1\ge t_2\ge ... \ge t_n$, and $\sum_i r_i\times {\tt prob_i} \le (1-\gamma(k))\times k$. This completes the proof of the first claim.

Now we consider the second part of the theorem. Its proof is similar to the first part; we consider an LP as earlier, however we also add a separation oracle for the Border's inequality; \cite{AFHHM12} show that such separation oracle can be designed in polynomial time. The constructive proof of the Border's inequality also gives a way of allocating items to buyers based on their valuations honouring supply constraints. When an item is allocated to buyer $i$, then he is charged $p_j$ w.p. $\frac{h(i,j,v)}{\sum_j h(i,j,v)}$. Similarly, when he is not allocated an item, he pays $p_j$ w.p. $\frac{g(i,j,v)}{\sum_j g(i,j,v)}$. Thus the payment functions of a buyer can be implemented exactly.
\end{proof}

\section{DSIC Mechanisms for Risk Averse Sellers: Unit Demand Buyers}\label{sec:unit-demand}
In this section, we consider the problem of designing DSIC mechanisms for a risk averse sellers in a multi-parameter settings where buyers are unit-demand. We are able to combine our techniques with a recent approximation result for revenue maximization for the unit-demand problem, when there are multiple distinct items. We note that for such multi-parameter settings, no characterization of an optimal mechanism is known even for the expected revenue objective. This gives our second result.
\begin{theorem}\label{thm:unit-demand}
For unit-demand setting with multiple distinct items, there is a poly-time computable randomized order-independent sequential posted price mechanism with expected utility at least $\left(\frac{1-1/e}{6.75} - \eps \right) \OPT$, for any $\eps>0$, where $\OPT$ is the expected utility of an optimal DSIC mechanism.
\end{theorem}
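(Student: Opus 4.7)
The plan is to reduce the multi-parameter unit-demand setting to a single-parameter setting using the copies reduction of Chawla, Hartline, Malec, and Sivan~\cite{CHMS10}, invoke Theorem~\ref{thm:multi-unit} on the reduced instance, and then translate the resulting SPM back to an order-independent unit-demand SPM, paying only a constant factor of $6.75$ in this last step.

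First, I would construct the single-parameter ``copies'' instance $\I_{\mathrm{cp}}$: for each (buyer, item) pair $(i,j)$, introduce a virtual agent whose valuation is drawn from the marginal $F_{ij}$ of $F_i$ on item $j$, and impose the partition-matroid constraint that each item $j$ can be allocated to at most one copy agent. The probabilistic machinery of Section~\ref{sec:multi-unit} extends to this setting essentially verbatim: the split/merge operations (Lemmas~\ref{lem:merge} and~\ref{lem:split}) and the correlation-gap bound (Lemma~\ref{lem:correlation-gap}) only relied on submodularity of the aggregator $\ut{k}$, which remains monotone and submodular when the top-$k$ aggregator is replaced by a partition-matroid aggregator. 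Furthermore, any unit-demand DSIC mechanism on the original instance can be simulated on $\I_{\mathrm{cp}}$ with an identical revenue distribution (since $\I_{\mathrm{cp}}$ merely relaxes the per-buyer unit-demand constraint), so $\OPT_{\mathrm{cp}} \ge \OPT$, and applying the extended Theorem~\ref{thm:multi-unit} in $\I_{\mathrm{cp}}$ produces a poly-time single-parameter SPM $\M_{\mathrm{cp}}$ satisfying $\u(\M_{\mathrm{cp}}) \ge (1-1/e-\eps)\OPT$.

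Second, I would convert $\M_{\mathrm{cp}}$ into an order-independent unit-demand SPM $\M_{\mathrm{ud}}$ via the CHMS10 construction: the (possibly randomized) price that $\M_{\mathrm{cp}}$ would offer to copy $(i,j)$ becomes the take-it-or-leave-it price offered to buyer $i$ for item $j$ in $\M_{\mathrm{ud}}$, and when buyer $i$ is visited he selects a utility-maximizing remaining item (or none). CHMS10 established a factor $6.75$ loss in expected revenue between these two mechanisms. To lift this guarantee from revenue to utility, I would argue that the CHMS10 construction preserves the sale-probability footprint in the ``small'', ``large'', and ``huge'' price classes (in the terminology of Lemma~\ref{lem:mass-variant}) up to the same $6.75$ constant: for each large price $p_j$ the total probability $\sum_i \tilde\pi_{ij}$ of a sale at price $p_j$ in $\M_{\mathrm{ud}}$ is at least $1/6.75$ times the corresponding quantity in $\M_{\mathrm{cp}}$, and similarly for the expected revenue contribution of small prices and the expected utility contribution of huge prices. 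Plugging this scaled footprint into a scaled version of Lemma~\ref{lem:mass-variant} then yields $\u(\M_{\mathrm{ud}}) \ge \u(\M_{\mathrm{cp}})/6.75$, and combining with the first step gives the claimed $\left(\tfrac{1-1/e}{6.75}-\eps\right)$-approximation.

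The main obstacle is precisely the footprint-preservation step. The difficulty is that in $\M_{\mathrm{ud}}$ buyer $i$ may ``steal'' a sale at one price from a sale at a different price, because his utility-maximizing pick in the unit-demand world need not agree with the item $\M_{\mathrm{cp}}$ would have allocated to copy $(i,j)$; such reallocations can shift probability mass across the small/large/huge price classes and break a naive per-price comparison. Handling this requires a per-realization accounting argument conditioned on the order in which buyers are visited and on their realized valuations, together with an exchange argument that charges every ``stolen'' sale in $\M_{\mathrm{ud}}$ to a sale at an equal or higher price in $\M_{\mathrm{cp}}$, thereby lower-bounding the sale probability in each price class of $\M_{\mathrm{ud}}$ by a $1/6.75$ fraction of the corresponding quantity in $\M_{\mathrm{cp}}$ without introducing an additional $(1-1/e)$ factor.
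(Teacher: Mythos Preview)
Your overall architecture (copies reduction, a $(1-1/e)$ loss from split/merge/configuration-LP, and a $6.75$ loss from enforcing feasibility) matches the paper's. However, the way you route the $6.75$ loss is different from the paper and, as stated, has a genuine gap.

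You propose to show that the CHMS conversion preserves the \emph{price footprint} up to a $1/6.75$ factor, via an exchange argument that ``charges every stolen sale in $\M_{\mathrm{ud}}$ to a sale at an equal or higher price in $\M_{\mathrm{cp}}$.'' This does not work: in $\M_{\mathrm{ud}}$ buyer $i$ selects the item maximizing $v_{ij}-p_j$, not the item with the highest $p_j$, so he may well substitute toward a \emph{lower}-priced item than the one his copy would have bought in $\M_{\mathrm{cp}}$. Thus mass can move \emph{down} across price classes, and per-price (or per-class) sale-probability preservation fails. Even if you could show that mass only moves up, that would be a stochastic-dominance statement, not the per-large-price equality that Lemma~\ref{lem:mass-variant} needs; invoking a ``scaled version of Lemma~\ref{lem:mass-variant}'' at this stage would cost you an additional $(1-1/e)$ factor you do not have in your budget.

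The paper avoids footprint preservation at the $6.75$ step entirely. It first computes a \emph{soft}-OPM $\M_1$ (no feasibility constraints at all, aggregator $\ut{\infty}$) that matches $\M_{\OPTs}$'s footprint via the configuration-LP; this is where the $(1-1/e-\eps)$ is paid. It then obtains a partial-OPM $\M_2$ by dividing every offer probability by $3$, and compares $\u(\M_2)$ to $\u(\M_1)$ \emph{directly in utility space} (Lemma~\ref{lem:profit-opm}): fix buyer $ij$ and condition on the event $E'_{{\bf p,v}}$ that no earlier buyer in row $i$ or column $j$ accepted (probability at least $4/9$ by two independent Markov bounds, thanks to the $/3$ scaling). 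On this event the revenue collected in $\M_2$ before $ij$ is a \emph{deterministic} value $\Rev_{{\bf p,v}}$, and it is pointwise at most the random revenue $X_{{\bf p,v}}$ collected in the soft-OPM on the corresponding event. Concavity of $\u$ then gives $\u(R_{ij}+\Rev_{{\bf p,v}})-\u(\Rev_{{\bf p,v}}) \ge \u(R_{ij}+X_{{\bf p,v}})-\u(X_{{\bf p,v}})$ in every realization, so buyer $ij$'s marginal utility contribution in $\M_2$ is at least $4/9$ of that in the $/3$-scaled soft-OPM; summing yields $\u(\M_2)\ge \u(\M_1)/6.75$. This per-buyer concavity argument is the missing idea in your plan, and it completely replaces the footprint-preservation step you were worried about.
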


Our obtain our result by combining our tools for handling risk aversion with a recent result of Chawla et. al.~\cite{CHMS10}, who gave a $\frac{1}{6.75}$-approximation to this problem for the expected revenue objective. The approximation factor has since been improved to $\frac{1}{4}$ for expected revenue, by Alaei~\cite{A11}. To the best of our knowledge, our algorithm gives the first non-trivial approximation guarantee for maximizing expected (concave) utility in any multi-parameter setting.

In the rest of the section, we prove Theorem~\ref{thm:unit-demand}.
We will reduce the problem of designing a mechanism for any instance $\I$ of a unit-demand setting with $n$ buyers and $m$ items to finding a special type of mechanism for an instance $\I'$ of a single parameter setting with {\em matrix feasibility constraint}. Matrix feasibility constraint is defined as follows: there are $mn$ buyers arranged in an $n\times m$ matrix, and the seller can serve at most one buyer from each row and at most one buyer from each column.
There is a natural mapping between the two instances: split buyer $i\in \I$ into $m$ independent buyers $i1, i2, ..., im$ and the distribution on the valuation of buyer $ij$ is same as the distribution of buyer $i$'s valuation for item $j$. Clearly any feasible assignment of items in $\I$ corresponds to a feasible assignment of items in $\I'$ and vice versa.

We define a class of pricing mechanisms, called {\em partially order-oblivious posted pricing (partial-OPM)}, for the single parameter setting with matrix constraint: the mechanism specifies an ordering $\sigma$ on the rows on the matrix, along with independent price distributions for all buyers. All buyers in row $\sigma(i)$ ($i^{th}$ row in the sequence) of the matrix arrive before any buyer in row $\sigma(i+1)$; however, the order of buyers in each row $\sigma(i)$ is  chosen adversarially, where the adversary {\em knows the valuation of buyers and the mechanism's prices and wishes to minimize the utility obtained}. Partial-OPM in $\I'$ can be mapped naturally to an SPM for $\I$: buyers arrive in the order specified by $\sigma$, and if buyer $ij$ was offered a price of $x$ in $\I'$, then buyer $i$ is offered item $j$ at a price $x$ (if item $j$ is available when buyer $i$ arrives). The following lemma describes the relation between these two types of problems.

\begin{lemma}\label{thm:relation}
Given a unit-demand instance $\I$ and its corresponding single-parameter instance $\I'$ as defined above, if $\OPT$ and $\OPT'$ are  the optimal expected utilities for $\I$ and $\I'$ respectively, then $\OPT' \ge \OPT$. Furthermore, given any partial-OPM for $\I'$, there is a poly-time computable SPM for $\I$ with expected utility
at least as much as the expected utility of the partial-OPM for $\I'$.
\end{lemma}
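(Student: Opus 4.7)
My plan splits the statement into the two sub-claims and handles them by different techniques: a mechanism-level construction for $\OPT' \ge \OPT$, and a per-realization coupling for the partial-OPM to SPM reduction.

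For $\OPT' \ge \OPT$, start from any (WLOG deterministic) DSIC mechanism $\M$ for $\I$. By the taxation principle for unit-demand DSIC mechanisms, $\M$ is equivalent to offering each buyer $i$ a menu $\{(j, p_{ij}(v_{-i}))\}_{j \in [m]}$ depending only on $v_{-i}$, with buyer $i$ choosing the item $j$ that maximizes $v_{ij} - p_{ij}$ (or opting out). Build a mechanism $\M'$ for $\I'$ by allocating buyer $ij$ in $\I'$ exactly when $\M$ allocates item $j$ to buyer $i$; matrix feasibility is then inherited directly from unit-demand feasibility. Holding the remaining $mn-1$ coordinates of the bid profile fixed corresponds to fixing $(v_{i,-j}, v_{-i})$ in $\I$, so the menu offered to buyer $i$ and the surpluses of the alternative items $j' \neq j$ stay the same; consequently the indicator of whether $i$ picks $j$ is the monotone threshold function in $v_{ij}$ with threshold $v_{ij}^{*} = p_{ij} + \max\{0, \max_{j' \neq j}(v_{ij'} - p_{ij'})\}$. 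Setting the payment of buyer $ij$ to this critical value yields a DSIC single-parameter matrix mechanism. Since $v_{ij}^{*} \ge p_{ij}$, the payment $i$ makes for item $j$ in $\M$, the per-realization revenue of $\M'$ weakly dominates that of $\M$; monotonicity of $\u$ then gives $\u(\M') \ge \u(\M)$, so $\OPT' \ge \OPT$.

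For the second assertion, couple the randomness: draw prices $p_{ij}$ from the partial-OPM's distributions and valuations $v_{ij}$ from their marginals, shared across $\I'$ and $\I$. The induced SPM for $\I$ processes buyers in the row order $\sigma$; when buyer $i$ arrives with still-available item set $T_i$ it offers the menu $\{(j, p_{ij}) : j \in T_i\}$ and buyer $i$ picks the surplus-maximizing available item $j^{*}$ (opting out if no item has nonnegative surplus), with consistent tie-breaking. To compare with the partial-OPM on $\I'$, exhibit a specific within-row adversarial ordering: in row $i$, place $ij^{*}$ first, where $j^{*}$ is the SPM's choice in that row, with an arbitrary completion. An induction on $i$ preserves the invariant $T_i = S_i$, where $S_i$ is the set of available columns in the partial-OPM under this adversary. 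When SPM takes $j^{*}$, the adversary's first offer $ij^{*}$ has $j^{*} \in S_i = T_i$ available and $v_{ij^{*}} \ge p_{ij^{*}}$ by SPM's choice, so it is accepted and row $i$ contributes the same revenue $p_{ij^{*}}$ in both mechanisms; when SPM makes no allocation, no $j \in T_i = S_i$ satisfies $v_{ij} \ge p_{ij}$, so no within-row order can yield an allocation in the partial-OPM either. Thus under this adversary the partial-OPM realizes exactly the SPM's revenue per realization. Since the partial-OPM's guaranteed utility is the worst case over adversaries, it is at most its utility under this one, which equals the SPM's utility by monotonicity of $\u$. Poly-time computability is immediate from reading off $\sigma$ and the price distributions.

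The main obstacle is the first part: I need to invoke the correct taxation-principle representation for unit-demand DSIC mechanisms, handle ties in the $\arg\max$ by a consistent rule, and verify that charging the critical value gives a bona fide DSIC single-parameter matrix mechanism (so it is eligible to witness $\OPT'$). The second part is comparatively clean; the only point to verify is that the specific adversary I use is permitted, and since the partial-OPM's adversary is assumed to know all valuations and prices it is free to simulate the SPM to locate $j^{*}$, so the construction is legitimate.
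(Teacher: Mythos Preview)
Your proof is correct and follows essentially the same approach as the paper's, which in turn leans on the construction of Chawla et al.\ (CHMS10): for the first part you spell out the taxation-principle reduction and the critical-value payment $v_{ij}^* \ge p_{ij}$ that the paper only references, and for the second part you make explicit the adversary (place the SPM's chosen $j^*$ first in each row) and the invariant $T_i = S_i$ that the paper's terse ``order-oblivious within a row'' argument leaves implicit. The only cosmetic difference is that the paper frames Part~2 as ``SPM revenue in row $i$ is at least the partial-OPM's worst-case row-$i$ revenue,'' whereas you exhibit a specific adversary achieving equality; both yield the same per-realization domination.
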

\begin{proof}
The scheme for designing an approximate utility-optimal mechanism mirrors the scheme for designing approximately revenue-optimal mechanism in \cite{CHMS10}, as follows.
Fix any utility optimal deterministic mechanism $\M_{\OPTs}$ for $\I$. Note that, any deterministic truthful multi-parameter mechanism for unit demand buyers can be interpreted as offering a price menu with one price for each item/service to each buyer as a function of other buyers' bids \cite{W97}. We construct a corresponding mechanism $\M'$ on $\I'$ that has the same allocation function as in \cite{CHMS10}.  In {\em every realization of the bid vector}, the revenue in $\M'$ is {\em at least as large as} the revenue in $\M_{\OPTs}$. Thus the utility of the former is at least the utility of the latter. So the optimal expected utility on $I'$ is at least as
much as the optimal expected utility on $\I$.

Next, similar to~\cite{CHMS10}, we show that we can use a partial-OPM $\M'$ for $\I'$ to construct an item pricing mechanism $\M$ for $\I$. Order the buyers in $\I$ as per the order of rows of matrix in $\M'$, and fix a realization of buyers' valuations  and prices in $\M'$. In the corresponding realization in $\I$, when buyer $i\in \I$ arrives, he is shown the set of available items and the price asked for item $j$ is the price asked for buyer $ij\in\I'$. As the buyer $i$ chooses the item that maximizes his utility, the revenue from buyer $i$ is no less than the revenue from buyers in row $i$ in $\I'$ (since the mechanism in $\I'$ is order-oblivious within a row). Thus the revenue obtained in $\M$ is at least the minimum revenue (among all orderings of buyers within each row) in $\M'$, for {\em every realization of prices and valuations}. Hence the expected utility of $\M$ is at least as much as the expected utility of $\M'$.This completes the proof of the lemma.
\end{proof}

It is worth noting that Lemma \ref{thm:relation} deviates from the reduction used by Chawla et. al. \cite{CHMS10} from unit-demand setting to a single-parameter setting for expected revenue in the following manner: Chawla et. al. \cite{CHMS10} computed a {\em completely order-oblivious} posted pricing (where even buyers in different rows can be intermingled adversarially) for $\I'$, which yields a {\em completely order-oblivious} posted pricing for $\I$. This implication remains true for utility as well. However, we are not able to analyze the performance of a complete OPM in the case of expected utility; instead, we are able to compute a partial-OPM with a good approximation guarantee on expected utility in $\I'$. As will be clear below, the only part where we need to restrict ourselves to partial-OPMs is Lemma \ref{lem:profit-opm}. Its analogue for revenue holds even for complete OPMs by a simple application of Markov's inequality \cite{CHMS10}, but the proof is more complicated for utility maximization.

In the rest of the section, we prove the following theorem, which combined with Lemma~\ref{thm:relation} implies Theorem~\ref{thm:unit-demand}.

\begin{theorem}\label{thm:matrix}
For the single parameter setting with matrix feasibility constraint, there is a polynomial time algorithm to compute a partial-OPM with a given ordering among the rows and expected utility $\left(\frac{1-1/e}{6.75} - \eps \right)\OPT$. Furthermore, the price distributions computed by the algorithm yields this guarantee for any ordering $\sigma$ among the rows.
\end{theorem}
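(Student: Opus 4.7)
The plan is to extend the DSIC framework of Section~\ref{sec:multi-unit} to handle the matrix feasibility constraint, and then to combine it with the ex-ante-to-posted-pricing reduction of Chawla et al.\ \cite{CHMS10} to obtain a partial-OPM. Let $\M_{\OPTs}$ be the utility-optimal DSIC mechanism on the $n \times m$ matrix instance; it posts a (correlated) random price $P_{ij}$ to each buyer $(i,j)$ based on $v_{-(i,j)}$, with realized revenue $R_{ij}\in\{0,P_{ij}\}$. Since any realized set of winners is a bipartite matching in the grid, the number of non-zero payments is at most $\min(n,m)$, so the utility equals $\E{\ut{\min(n,m)}(R_{11},\ldots,R_{nm})}$. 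By Lemma~\ref{lem:submodular} this is a monotone submodular function of the $R_{ij}$'s, so we can apply Lemmas~\ref{lem:mass-general} and \ref{lem:new} exactly as in Lemma~\ref{lem:mass-variant}: it suffices to find \emph{independent} price distributions that match $\M_{\OPTs}$ in (a) the total sale probability at each large price summed over all buyers, (b) the expected revenue from small prices, and (c) the expected utility from huge prices, losing only a $(1-1/e-O(\eps))$ factor.

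Next, I would set up a configuration-based covering LP mirroring that of Section~\ref{sec:algorithm}, with variables $x_{(i,j),\ell}$ for the probability that buyer $(i,j)$ is offered price $p_\ell$. The single inventory bound $\sum_{i,j}(\cdots)\le k$ is replaced by ex-ante matrix constraints, namely $\sum_{j,\ell}(1-F_{ij}(p_\ell))x_{(i,j),\ell}\le 1$ for every row $i$ and $\sum_{i,\ell}(1-F_{ij}(p_\ell))x_{(i,j),\ell}\le 1$ for every column $j$. Enumerating the same polynomially-quantized configurations as before, one configuration matches $\M_{\OPTs}$ and the associated LP is feasible, yielding independent price distributions whose expected allocation obeys the matrix constraint.

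The main obstacle is Step~3: converting these independent price distributions into a partial-OPM while losing only the additional factor of $1/6.75$. For the revenue objective, Chawla et al.\ obtained this factor via Markov's inequality applied row-by-row: even with adversarial within-row ordering and possible column-blocking by earlier rows, the expected revenue from any row is a constant fraction of the ex-ante revenue from that row. Concavity of $\u$ breaks this argument directly, but the three-way price classification from Step~1 saves us: huge prices contribute at most $\eps\OPT$ in expectation and can be accounted for one buyer at a time, so losses there are negligible; small-price revenue concentrates around its expectation by Lemma~\ref{lem:new} and can be treated as a deterministic increment, reducing its analysis to the linear (revenue) case; and large prices have total expected sale probability at most $1/\eps^4$, so their contribution is carried by a constant number of ``slots,'' allowing the row-wise Markov argument of Chawla et al.\ to be applied to the large-price allocations with only a multiplicative $1/6.75$ loss. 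Composing this with the $(1-1/e)$ factor from Step~1 gives the claimed ratio.

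Finally, since the LP constraints are symmetric in the ordering of rows, and since Step~3 analyzes each row's loss \emph{against any adversarial within-row order} and \emph{against any configuration of prior column-blocks}, the same price distributions remain a $\bigl(\tfrac{1-1/e}{6.75}-\eps\bigr)$-approximation under any row ordering $\sigma$, which establishes the final assertion.
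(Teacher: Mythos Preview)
Your Steps 1--2 are essentially what the paper does: it defines a \emph{soft-OPM} (independent prices, feasibility only in expectation), shows via Lemma~\ref{lem:mass-general} and the analogue of Lemma~\ref{lem:new} that any soft-OPM matching $\M_{\OPTs}$ on the coarse footprint has utility at least $(1-1/e-O(\eps))\OPT$, and enumerates configurations via a covering LP with the row/column ex-ante constraints you wrote down. (One minor slip: large prices have total sale probability \emph{at least} $1/\eps^4$ when $p^*>0$, not at most.)

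The gap is Step~3. Your plan reuses the small/large/huge decomposition to justify the $1/6.75$ loss, but that decomposition linearizes the \emph{huge} and \emph{small} contributions only; the large-price contribution is still $\E{\u\bigl(\Rev_{\sm}+\sum_{\lrg} X_i\bigr)}$, a genuinely nonlinear function of the large-price sales. Saying ``the row-wise Markov argument of Chawla et al.\ applies to the large-price allocations'' is exactly the step that fails for concave $\u$: Markov gives you that each buyer is reachable with probability $\ge 4/9$ after scaling, but turning per-buyer reachability into a $4/9$ factor on a \emph{concave} objective requires an additional argument you have not supplied. The price classification is used in the paper only to keep the number of configurations small; it plays no role in the soft-to-hard conversion.

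The paper's actual conversion (its Lemma~\ref{lem:profit-opm}) is a direct marginal-contribution argument that works for any soft-OPM and any concave $\u$, without touching the price classes. First scale all offer probabilities by $1/3$; concavity alone gives $\u(\M_3)\ge \u(\M_1)/3$. Now fix a buyer $ij$ and condition on the realization of prices and valuations of all buyers in \emph{earlier rows} outside column $j$ (this is where the fixed row order in a partial-OPM is used). On the sub-event $E'$ that no earlier buyer in row $i$ or column $j$ accepted, the revenue collected before $ij$ in the partial-OPM is a \emph{deterministic} number $\Rev$, and $\Rev$ is pointwise at most the revenue $X$ collected before $ij$ in the soft-OPM $\M_3$. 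Concavity of $\u$ then gives $\u(R_{ij}+\Rev)-\u(\Rev)\ge \u(R_{ij}+X)-\u(X)$, so $ij$'s marginal contribution in $\M_2$, restricted to $E'$, already dominates its marginal contribution in $\M_3$. Since $\Pr{E'}\ge (4/9)\Pr{E}$ by two independent Markov bounds (row and column each sell with probability $\le 1/3$), summing over buyers yields $\u(\M_2)\ge (4/9)\u(\M_3)\ge \u(\M_1)/6.75$. This is the idea you are missing; once you have it, the order-independence claim also follows immediately, since the argument conditions on an \emph{arbitrary} row order.
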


\noindent
Let $\M_{\OPTs}$ be an optimal deterministic truthful mechanism for a single-parameter instance with matrix constraint, so by incentive compatibility, every buyer is offered a price as a function of other buyers' bids. Let $\P=\{p_1, p_2, \ldots\}$ be the set of distinct prices offered in any mechanism. Let $\pi_{ijk}$ be the probability that buyer $ij$ is offered price $p_k$ in $\M_{OPTs}$.
We note the important properties of prices in $\M_{\OPTs}$: (a) $\sum_{j,k} \pi_{ijk}\left(1-F_{ij}(p_k)\right) \le 1\ \forall i$ (at most one buyer is served from the $i^{th}$ row), (b) $\sum_{i,k} \pi_{ijk}\left(1-F_{ij}(p_k)\right) \le 1\ \forall j$ (at most one buyer is served from the $j^{th}$ column), and (c) $\sum_{k} \pi_{ijk} \le 1\ \forall i,j$ (buyer $ij$ is offered only one price in every realization).

We define {\em soft-OPM} as a type of posted pricing mechanism, where (a) the prices for buyers are drawn independently, (b) it allocates items to all buyers who have valuations above their respective offered prices, and (c) satisfies feasibility constraints {\em in expectation} but not in every realization. In other words, a soft-OPM sets prices so that it sells to at most one buyer in expectation in each column and in each row.
It is obvious that the utility obtained by a soft-OPM is independent of the arrival order of buyers, and the revenue from a buyer is independent of past or future events in a realization of the mechanism.

We note an important property of soft-OPMs in the following lemma. It enables the redistribution of probability mass for any given price across buyers, and makes it sufficient to guess the total sale probability at each price. It is the analogue of Lemma~\ref{lem:mass-general} for multi-unit auctions. Its proof is practically identical to that of Lemma \ref{lem:mass-general} -- it applies a series of split operations followed by a series of merge operations to transform an optimal mechanism to a given soft-OPM. The only difference is that the submodular function in this case is $\ut{\infty}$ instead of $\ut{k}$, that is, there is no explicit bound on the inventory. So we skip the proof of this lemma.

\begin{lemma}\label{lem:mass-unit-demand}
Consider any soft-OPM $\M_1$, that offers price  $p_k\in \P$ with probability $\pi'_{ijk}$, and satisfies the following condition:
$\sum_{i,j}\pi'_{ijk}(1-F(p_k)) =\sum_{i,j}\pi_{ijk}(1-F(p_k))= q_k, \ \forall p_k\in\P$ (for every price, the {\em total sale probability}, {\em i.e.} the sale probabilities summed over all buyers, at a given price in $\M_1$ is the same as that in $\M_{\OPTs}$).
Then $\u(\M_1)\ge (1-\frac{1}{e})\OPT$.
\end{lemma}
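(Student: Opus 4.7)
The plan is to apply the same split-and-merge strategy used in Lemma~\ref{lem:mass-general}, but using the symmetric monotone submodular function $\ut{\infty}$ instead of $\ut{k}$. The reason we can use $\ut{\infty}$ here is precisely that a soft-OPM only has to satisfy the row/column feasibility constraints in expectation: there is no hard cap on the number of buyers that can be served in any realization, so the seller's revenue really is the sum of all payments rather than the sum of the top-$k$.

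First I would represent the revenues as random variables. For $\M_{\OPTs}$, let $R_{ij}$ denote the (random) payment collected from buyer $ij$, so that $R_{ij}=p_k$ with probability $\pi_{ijk}(1-F(p_k))$ and $R_{ij}=0$ otherwise; the $R_{ij}$'s are jointly correlated through the dependence of offered prices on other buyers' bids. Then $\u(\M_{\OPTs})=\E{\u(\sum_{i,j} R_{ij})}=\E{\ut{\infty}(\{R_{ij}\})}$, using the fact that $\ut{\infty}$ is just $\u$ of the sum of its arguments. Similarly, for the soft-OPM $\M_1$, let $R'_{ij}$ be the payment from buyer $ij$; these are mutually independent by definition of a soft-OPM, with $R'_{ij}=p_k$ w.p.\ $\pi'_{ijk}(1-F(p_k))$ and $0$ otherwise, giving $\u(\M_1)=\E{\ut{\infty}(\{R'_{ij}\})}$. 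By Lemma~\ref{lem:submodular} (specialized to $k=\infty$, or equivalently, by concavity and monotonicity of $\u$), the function $\ut{\infty}$ is symmetric, monotone and submodular.

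Next I would invoke exactly the split-and-merge argument of Lemma~\ref{lem:mass-general}. Specifically, partition each $R_{ij}$ into $L\to\infty$ tiny independent variables $Y_{ijk\ell}$, where $Y_{ijk\ell}$ takes value $p_k$ w.p.\ $\pi_{ijk}(1-F(p_k))/L$ and $0$ otherwise; by Lemma~\ref{lem:split} the resulting independent collection $Y$ satisfies $\E{\ut{\infty}(Y)}\ge (1-1/e)\E{\ut{\infty}(\{R_{ij}\})}=(1-1/e)\OPT$. The hypothesis of the lemma, namely that $\sum_{i,j}\pi_{ijk}(1-F(p_k))=\sum_{i,j}\pi'_{ijk}(1-F(p_k))=q_k$ for every $p_k$, guarantees that for each price $p_k$ the total probability mass of $Y$-atoms at value $p_k$ equals the total probability mass of the $R'_{ij}$'s at value $p_k$. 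So we can perform a sequence of merge operations (as in Lemma~\ref{lem:merge}) on the tiny $Y$-variables to produce exactly the independent collection $\{R'_{ij}\}$; since $L\to\infty$, the merge precondition $\Pr{Y_{ijk\ell}\neq 0}+\Pr{Y_{i'j'k'\ell'}\neq 0}\le 1$ is trivially met at every step. By Lemma~\ref{lem:merge}, each merge can only increase $\E{\ut{\infty}(\cdot)}$, so $\u(\M_1)=\E{\ut{\infty}(\{R'_{ij}\})}\ge \E{\ut{\infty}(Y)}\ge (1-1/e)\OPT$.

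The main conceptual thing to get right (not really an obstacle, but the point where the argument could go wrong) is justifying that the submodular function to use here is $\ut{\infty}$ rather than $\ut{k}$ or some function that encodes the matrix constraint. This is where the soft-OPM assumption does the work: because feasibility is only imposed in expectation, the objective is genuinely $\u(\sum_{i,j}R'_{ij})$ with no truncation, so $\ut{\infty}$ correctly represents both $\u(\M_{\OPTs})$ and $\u(\M_1)$ simultaneously and the correlation-gap/split-and-merge machinery goes through unchanged.
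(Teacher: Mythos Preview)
Your proposal is correct and follows essentially the same approach the paper indicates: apply the split-and-merge argument of Lemma~\ref{lem:mass-general}, replacing $\ut{k}$ by $\ut{\infty}$ because a soft-OPM has no hard inventory cap. The paper explicitly states this is the only change needed and omits the details, which you have filled in accurately.
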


\smallskip
\noindent
{\bf Algorithm: } We shall design a polynomial time algorithm to compute a soft-OPM $\M_1$ such that $\u(\M_1)\ge (1-\frac{1}{e}-O(\eps))\OPT$. We shall then construct a partial-OPM $\M_2$ from $\M_1$, as follows:  if buyer $ij$ is offered price $p_k$ w.p. $\pi'_{ijk}$ in $\M_1$, then $\M_2$ offers price $p_k$ to $ij$ w.p. $\pi'_{ijk}/3$ and a price of $\infty$ with the remaining probability (provided that  serving buyer $ij$ will not violate feasibility of matrix constraint). Lemma \ref{lem:profit-opm} below implies that $\M_2$ gives us the required guarantee, so it only remains to design an algorithm to compute a soft-OPM. To compare our result with the $1/6.75$-approximation for expected revenue \cite{CHMS10}, one may note that for linear utility, a reduction to soft-OPM such as Lemma \ref{lem:mass-unit-demand} holds quite easily without losing a factor of $(1-\frac{1}{e})$, while Lemma \ref{lem:profit-opm} naturally remains true (moreover, it even holds for complete OPMs).

\begin{lemma}\label{lem:profit-opm}
For any soft-OPM $\M_1$, if we construct a partial-OPM $\M_2$ from $\M_1$ as above, then $\u(\M_2)\ge \u(\M_1)/6.75$ for any ordering among the rows.
\end{lemma}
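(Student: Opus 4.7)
The plan is to first bound the probability that buyer $(i,j)$ is \emph{accessible} in $\M_2$ (that neither row $i$ nor column $j$ has already closed when $(i,j)$ is reached), and then convert this marginal bound into a utility bound through a concavity-based argument. Since $\M_1$ is a soft-OPM, the total sale probability along any row or column is at most $1$, so the $1/3$ scaling in $\M_2$ brings each to at most $1/3$. Markov's inequality then gives, uniformly over the adversary's within-row ordering, that row $i$ (resp.\ column $j$) is closed before $(i,j)$ is reached with probability at most $1/3$. Both events are monotone in the underlying independent randomness and (leveraging the partial-OPM's fixed row ordering) one can check that they move in the same direction on each shared coordinate, so FKG gives $\Pr{B_{ij} = 1} \ge (1 - 1/3)^2 = 4/9$, where $B_{ij}$ is the accessibility indicator.

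I would then couple $\M_1$ and $\M_2$ by writing $R_1 = \sum_{(i,j)} X_{ij}$ and $R_2 = \sum_{(i,j)} X_{ij}\, C_{ij}\, B_{ij}$, where $X_{ij}$ is the payment from $(i,j)$ under $\M_1$'s (independent-across-buyers) offer distribution, $C_{ij}$ is an independent $\mathrm{Bernoulli}(1/3)$ scaling coin, and $B_{ij}$ is the accessibility indicator above. The key structural fact is that $X_{ij}$ is independent of $(X_{-ij},\, C,\, B_{ij})$, since $B_{ij}$ depends only on other buyers' variables. By concavity and $\u(0) = 0$, $\u(R_2) \ge (R_2/R_1)\,\u(R_1)$ on $\{R_1 > 0\}$, so taking expectations, expanding $R_2$, and conditioning each term on $X_{-ij}$ yields
\[
\u(\M_2) \;\ge\; \sum_{(i,j)} \E{\tfrac{\u(R_1)}{R_1}\, X_{ij}\, C_{ij}\, B_{ij}} \;=\; \tfrac{1}{3} \sum_{(i,j)} \E{\E{B_{ij} \mid X_{-ij}} \cdot g_{ij}(R_1^{-ij})},
\]
where $R_1^{-ij} := R_1 - X_{ij}$ and $g_{ij}(y) := \E{X_{ij}\, \u(X_{ij}+y)/(X_{ij}+y)}$, which is nonincreasing in $y$ since $\u(t)/t$ is nonincreasing for concave $\u$ with $\u(0) = 0$.

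Finally, I would argue that $\E{B_{ij} \mid R_1^{-ij}}$ is also nonincreasing in the scalar $R_1^{-ij}$ -- larger total revenue elsewhere means more sales elsewhere, hence more blocking of $(i,j)$ on average -- and apply Chebyshev's sum inequality to two nonincreasing functions of a single random variable:
\[
\E{\E{B_{ij}\mid R_1^{-ij}}\cdot g_{ij}(R_1^{-ij})} \;\ge\; \E{B_{ij}}\cdot \E{g_{ij}(R_1^{-ij})} \;\ge\; \tfrac{4}{9}\,\E{g_{ij}(R_1^{-ij})}.
\]
Summing over $(i,j)$ and invoking $\sum_{(i,j)} X_{ij}\, \u(R_1)/R_1 = \u(R_1)$ gives $\u(\M_2) \ge (1/3)(4/9)\u(\M_1) = \u(\M_1)/6.75$. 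The main obstacle is justifying the monotonicity of $\E{B_{ij}\mid R_1^{-ij}}$ in $R_1^{-ij}$: this is not coordinate-wise monotone in $X_{-ij}$, because raising an $X_{i'j'}$ that competes with a would-be blocker of $(i,j)$ can actually increase $B_{ij}$. Handling it requires a coupling over the conditional distribution of $X_{-ij}$ given $R_1^{-ij}$, exploiting the partial-OPM's row-by-row ordering to decouple column-$j$ closure (driven entirely by earlier rows) from the intra-row dynamics of row $i$.
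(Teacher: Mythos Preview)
Your proposal has a real gap at the Chebyshev's-sum-inequality step. You need $\E{B_{ij}\mid R_1^{-ij}}$ to be nonincreasing in the scalar $R_1^{-ij}$, and you correctly flag this as the main obstacle, but it is not merely a technicality: a sale by some $(i',j')$ with $i'$ in an earlier row and $j'\ne j$ can close row $i'$ in $\M_2$ and thereby \emph{prevent} $(i',j)$ from closing column $j$, so increasing that buyer's payment raises $B_{ij}$. Collapsing $X_{-ij}$ to the one-dimensional sum $R_1^{-ij}$ does not wash this out; one can set up instances where the high-$R_1^{-ij}$ regime is dominated by such ``protective'' sales, reversing the claimed monotonicity. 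The partial-OPM row ordering does not decouple this effect, because column-$j$ closure by earlier rows is itself entangled with the intra-row feasibility of those earlier rows. (Two secondary issues: the adversary orders within row $i$ after seeing all prices and valuations, so $B_{ij}$ can depend on $X_{ij}$, breaking your independence claim; and your FKG step is suspect for the same non-monotonicity reason, since ``row $i$ not closed'' is not coordinate-wise monotone once feasibility enforcement cascades across earlier rows.)

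The paper sidesteps all of this by inserting an intermediate mechanism. First pass to the soft-OPM $\M_3$ that uses the $1/3$-scaled offer probabilities but enforces no feasibility; since $\u$ is concave with $\u(0)=0$, $\u(\M_3)\ge\u(\M_1)/3$ immediately. Then compare $\M_2$ to $\M_3$ via marginal utility contributions: condition on the full realization of all earlier-row buyers \emph{outside} column $j$. On this conditioning, the events ``no earlier column-$j$ buyer accepts'' and ``no row-$i$ buyer accepts'' concern disjoint, hence independent, sets of buyers, so the accessibility event has conditional probability at least $(2/3)^2=4/9$ by Markov alone, with no correlation inequality needed. Moreover, on the accessible event the prefix revenue in $\M_2$ is a fixed number that is pointwise dominated by the prefix revenue in $\M_3$ (which never blocks), so concavity makes the increment $\u(R_{ij}+\cdot)-\u(\cdot)$ at least as large in $\M_2$ as in $\M_3$. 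Summing over buyers gives $\u(\M_2)\ge(4/9)\u(\M_3)$. The two-step decomposition is exactly what replaces your correlation argument by a pointwise domination of prefix revenue.
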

\begin{proof}
We first note that a soft-OPM $\M_3$ that offers price $p_k$ to buyer $ij$ w.p. $\pi'_{ijk}/3$ has expected utility $\u(\M_3)\geq \u(M_1)/3$ since $\u$ is concave. We shall now show that $\u(\M_2)\ge \frac{4}{9} \u(\M_3)$.

Given an order $\sigma$ on the rows of the matrix, consider an arbitrary buyer $ij$. Fix a realization of prices and valuations $({{\bf p,v}})$ of all buyers that belong to rows $\sigma(k)$ for $k<\sigma^{-1}(i)$ (buyers from previous rows than row $i$ according to $\sigma$), other than buyers belonging to column $j$, let this be event $E_{{\bf p,v}}$. We define event $E'_{{{\bf p,v}}}$ as the event where $E_{{{\bf p,v}}}$ occurs, and in addition, and in the set of buyers that arrive before $ij$, no buyer from the $i^{th}$ row or $j^{th}$ column has enough valuation to accept its offered price. If  $E'_{{{\bf p,v}}}$ occurs, then the seller will definitely be able to sell to buyer $ij$ if the latter can accept the offered price.

We note that in event $E'_{{{\bf p,v}}}$, the revenue generated by $\M_2$ from all buyers that come before $ij$ is a fixed value, let it be $\Rev_{{{\bf p,v}}}$. This is because the valuations and prices for all buyers that come before $ij$, other than those in row $i$ and column $j$, are fixed, and there is no sale in row $i$ and column $j$ until buyer $ij$ arrives. Let $X_{{\bf p,v}}$ be a random variable indicating the revenue generated by $\M_3$ in event $E_{{{\bf p,v}}}$ before buyer $ij$ arrives.  As $\M_3$ is a soft-OPM, every sale that happens in $\M_2$ in event $E'_{{{\bf p,v}}}$ also happens in $\M_3$ in event $E_{{{\bf p,v}}}$. So we get $X_{{\bf p,v}}\ge \Rev_{{{\bf p,v}}}$ in every realization.

The probability that no buyer in the $i^{th}$ row, except perhaps $ij$, has a valuation exceeding its offered price, is at least $2/3$, by Markov's inequality. The same holds for the $j^{th}$ column. Moreover, since these two events are independent, we get that
$\Pr{E'_{{{\bf p,v}}}}\ge \frac{4\Pr{E_{{{\bf p,v}}}}}{9}$.

Let $R_{ij}$ be a random variable that indicates the revenue from buyer $ij$ in $\M_3$. We note that $R_{ij}$ is independent of $E_{{{\bf p,v}}}$. The revenue from $ij$ in $\M_2$ has same distribution as $R_{ij}$ in event $E'_{{{\bf p,v}}}$, and it is $0$ otherwise.
The contribution to utility of buyer $ij$ in $\M_3$ is\\
$\sum_{{\bf p,v}} \left(\u(R_{ij}+X_{{\bf p,v}})-\u(X_{{\bf p,v}})\right)\Pr{E_{{{\bf p,v}}}}$.
The contribution to utility of buyer $ij$ in $\M_2$ is\\
$\sum_{{\bf p,v}} \left(\u(R_{ij}+\Rev_{{{\bf p,v}}})-\u(\Rev_{{{\bf p,v}}})\right)\Pr{E'_{{{\bf p,v}}}}$.
Since $\u$ is concave, so in every realization,
$$\left(\u(R_{ij}+\Rev_{{{\bf p,v}}})-\u(\Rev_{{{\bf p,v}}})\right) \ge \left(\u(R_{ij}+X_{{\bf p,v}})-\u(X_{{\bf p,v}})\right)$$

So buyer $ij$'s contribution to utility in $\M_2$  is at least $4/9$ times his contribution in $\M_3$. Summing over contributions of all buyers in every event, we get that $\u(\M_2)\ge \frac{4}{9}\u(\M_3)$.
\end{proof}

\smallskip

\noindent
{\bf Algorithm to compute soft-OPM: } It is similar to the algorithm designed for computing SPM for $k$-unit auction.

We divide the prices in $\M_{\OPTs}$ into $3$ classes, {\em small}, {\em large} and {\em huge}. Let $\P_{\hg}$ be the set of {\em huge prices} $p_k\ge \u^{-1}(\OPT/\eps)$. The distinction between small and large prices depend more intricately on the optimal mechanism. Let $p^*$ be the largest price such that $\sum_{\u^{-1}(\OPT/\eps) > p_k\ge p^*} q_k \ge 1/\eps^4$, {\em i.e.} the threshold where the total sale probability of all large prices add up to at least $1/\eps^4$. If such a threshold does not exist, then let $p^*=0$. Let $\P_{\sm}$ be all prices less than $p^*$, so that $\P_{\lrg}=\{p_k \mid \u^{-1}(\OPT/\eps) > p_k\ge p^*\}$. Now, using Lemma \ref{lem:new} in the proof of Lemma~\ref{lem:mass-unit-demand} allows us to prove the following variant of Lemma \ref{lem:mass-unit-demand}. The proof of Lemma \ref{lem:mass-unit-demand-variant} is practically identical to that of Lemma \ref{lem:mass-variant}, so we omit it.

\begin{lemma}\label{lem:mass-unit-demand-variant}
Consider any soft-OPM $\M_1$, that offers price $p_k$ to buyer $ij$ w.p. $\pi'_{ijk}$, such that (a) $\sum_{i,j,p_k\in \P_{\sm}} p_k \pi'_{ijk} (1-F_{ij}(p_k))  =\sum_{p_k\in \P_{\sm}} p_k q_k$, (b) for each $p_k\in\P_{\lrg}$, $\sum_{i,j} \pi'_{ijk} (1-F_{ij}(p_k))=q_k$, and (c) $\sum_{i,j,p_k\in \P_{\hg}} \u(p_k) \pi'_{ijk} (1-F_{ij}(p_k)) =\sum_{p_k\in \P_{\hg}} \u(p_k) q_k$. Then we have $\u(\M_1)\ge (1-\frac{1}{e}-O(\epsilon))\OPT$.
\end{lemma}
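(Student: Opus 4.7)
The plan is to mirror the proof of Lemma~\ref{lem:mass-variant} with soft-OPM substitutes throughout. Represent each buyer $ij$'s payment in any mechanism $\M$ as a non-negative random variable $X_{ij}^\M$; in $\M_{\OPTs}$ these are correlated, while in any soft-OPM they are mutually independent. Because a soft-OPM respects feasibility only in expectation, I will measure utility using $\ut{\infty}$ (which is monotone submodular for concave $\u$ by the same argument as Lemma~\ref{lem:submodular}) rather than $\ut{k}$.

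First, I would introduce an intermediate soft-OPM $\M'$ that (i) uses the same per-buyer large-price distribution as $\M_1$, i.e. $\pi''_{ijk}=\pi'_{ijk}$ for each $p_k\in\P_{\lrg}$, and (ii) for small and huge prices is set so that the \emph{total} sale probability at each price $p_k$ matches $q_k$, the corresponding total in $\M_{\OPTs}$. Such an $\M'$ can always be constructed by redistributing small- and huge-price offers across buyers subject to the per-buyer budget $\sum_k \pi''_{ijk}\le 1$, since the large-price offers already respect per-buyer totals in $\M_1$. Applying Lemma~\ref{lem:mass-unit-demand} (whose proof is the split-and-merge argument together with the correlation-gap inequality, exactly as in Lemma~\ref{lem:mass-general}) then gives $\u(\M')\ge (1-1/e)\OPT$.

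Second, I would compare $\M_1$ to $\M'$ using Lemma~\ref{lem:new}. Split each independent $X_{ij}^\M$ into per-price Bernoulli components $Y_{ijk}^\M$ taking value $p_k$ with probability $\pi_{ijk}^\M(1-F_{ij}(p_k))$ and $0$ otherwise. Take the inventory parameter in Lemma~\ref{lem:new} larger than $\sum_{i,j,k}\pi_{ijk}(1-F_{ij}(p_k))$ for both $\M_1$ and $\M'$ (legal, since soft-OPMs have no hard cap); the lemma then approximates $\E{\ut{\infty}(\{Y_{ijk}\})}$ within $(1\pm O(\eps))$ by
\[
\sum_{Y_{ijk}\in S_{\hg}} \E{\u(Y_{ijk})} \;+\; \E{\u\Bigl(\E{S_{\sm}} + \sum_{Y_{ijk}\in S_{\lrg}} Y_{ijk}\Bigr)}.
\]
Condition (c) forces $\sum_{Y\in S_{\hg}}\E{\u(Y)}=\sum_{p_k\in\P_{\hg}}\u(p_k)q_k$ for both $\M_1$ and $\M'$; condition (a) forces $\E{S_{\sm}}=\sum_{p_k\in\P_{\sm}} p_k q_k$ for both; and by construction, $\M_1$ and $\M'$ agree on every $\pi_{ijk}(1-F_{ij}(p_k))$ with $p_k\in\P_{\lrg}$, so the joint distribution of the independent collection $S_{\lrg}$ is identical in the two mechanisms. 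Hence the two surrogates coincide, yielding $\u(\M_1)\ge (1-O(\eps))\u(\M')\ge (1-\tfrac{1}{e}-O(\eps))\OPT$.

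The main obstacle is a careful bookkeeping step: I need to check that after splitting $X_{ij}^{\M_1}$ and $X_{ij}^{\M'}$ into per-price Bernoullis the hypotheses of Lemma~\ref{lem:new} on the small/large/huge partition are satisfied with the thresholds $p^*$ and $\u^{-1}(\OPT/\eps)$ used here, and that the Chebyshev-style concentration of $S_{\sm}$ invoked inside Lemma~\ref{lem:new} still goes through when the small prices come from many different buyers. These are exactly the steps carried out in the proof of Lemma~\ref{lem:mass-variant}; no new analytic idea is required beyond replacing $\ut{k}$ by $\ut{\infty}$ to reflect the soft feasibility constraint.
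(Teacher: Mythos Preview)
Your proposal is correct and follows essentially the same route as the paper: the paper explicitly says the proof is ``practically identical to that of Lemma~\ref{lem:mass-variant}'' and omits it, and your two-step plan (build an intermediate $\M'$ matching $\M_{\OPTs}$ on total sale probability at every price while agreeing with $\M_1$ on each $\langle$buyer, large-price$\rangle$ pair, apply the split-and-merge/correlation-gap bound of Lemma~\ref{lem:mass-unit-demand}, then invoke Lemma~\ref{lem:new} to transfer from $\M'$ to $\M_1$) is exactly the argument of Lemma~\ref{lem:mass-variant} with $\ut{\infty}$ in place of $\ut{k}$. One minor remark: you need not worry about realizing $\M'$ as a feasible soft-OPM respecting $\sum_k\pi''_{ijk}\le1$; the underlying Lemma~\ref{lem:mass-general} compares arbitrary collections of independent non-negative random variables, so $\M'$ can be taken purely as an analytic device.
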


Thus it suffices to guess (a) the threshold $p^*$ separating large and small prices, (b) the total probability of sale $q_k$ for each large price $p_k$, (c) the expected revenue from small prices $\Rev$, and (d) the expected contribution to utility, say $h$, from huge prices.
Just as in the case of multi-unit auctions, the large prices (and hence the threshold $p^*$ as well) can be discretized so that they come only from the set $\u^{-1} (\eps^5 t \OPT)$ for any integer $0\le t\le \frac{1}{\eps^6}$.
Similarly, $h$ and $\u(\Rev)$ can be guessed up to a multiple of $\eps\OPT$, and the total sale probability for each large price to the nearest multiple of $\epsilon^{8}$. All these guesses define a {\em configuration}. The number of configurations is bounded by $2^{poly(1/\eps)}$.

{\small
\begin{eqnarray*}
{\textstyle \sum_{i,j}} \ \ \left(1-F_{ij}(p_k)\right)x_{ijk} &\ge& q_k \ \ \forall p_k\in \P_{\lrg}\\
{\textstyle \sum_{i,j,p_k\in \P_{\sm}}} \ \ \left(1-F_{ij}(p_k)\right) p_k x_{ijk} &\ge& \Rev  \\
{\textstyle \sum_{i,j,p_k\in \P_{\hg}}}\ \ (1-F_{ij}(p_{k}))\u(p_k) x_{ijk} &\ge& h \\
{\textstyle \sum_{i,k}}\ \ \left(1-F_{ij}(p_k)\right)x_{ijk} &\le& 1 \ \ \forall j\\
{\textstyle \sum_{j,k}}\ \ \left(1-F_{ij}(p_k)\right)x_{ijk} &\le& 1 \ \ \forall i\\
{\textstyle \sum_{k}}\ \ x_{ijk}  &\le& 1 \ \ \forall i,j\\
x_{ijk} &\ge& 0\ \  \forall i,j,k
\end{eqnarray*}
}

In the above LP, the variable $x_{ijk}$ denotes the probability that buyer $ij$ is shown price $p_k$. Each feasible configuration gives us a soft-OPM and we select the one with maximum utility.

\end{document}